\documentclass[12pt]{article}
\usepackage{amsmath,amssymb,amsthm}
\newtheorem{thm}{Theorem}[section]
\newtheorem{cor}[thm]{Corollary}
\newtheorem{lem}[thm]{Lemma}
\newtheorem{prop}[thm]{Proposition}
\newtheorem{conj}[thm]{Conjecture}
\newtheorem{prob}[thm]{Problem}
\theoremstyle{definition}
\newtheorem{defn}[thm]{Definition}
\newtheorem{rem}[thm]{Remark}
\numberwithin{equation}{section}

\newcommand{\To}{\Rightarrow}
\newcommand{\A}{\mathcal{A}}
\newcommand{\G}{\mathcal{G}}
\newcommand{\la}{\lambda}
\newcommand{\PP}{\mathcal{P}}
\newcommand{\SSS}{\mathcal{S}}
\newcommand{\W}{\mathcal{W}}
\newcommand{\GF}[1]{\mathrm{GF}(#1)}
\newcommand{\PG}[2]{\mathrm{PG}(#1,\,#2)}
\newcommand{\PGv}[1]{\mathrm{PG}(#1)}
\newcommand{\AG}[2]{\mathrm{AG}(#1,\,#2)}
\newcommand{\wt}{\mathrm{wt}}
\newcommand{\onevec}{\mathbf{1}}
\newcommand{\Afold}{\mathfrak{F}}

\begin{document}

\title{On the Maximality of Additive Codes}

\author{T. L. Alderson\thanks{University of New Brunswick Saint John, Saint John, NB, Canada.}}

\date{}

\maketitle

\begin{abstract}
An additive $(n,k,d)_{q^m/q}$-code is a $\GF{q}$-linear subspace of $\GF{q^m}^n$ of $\GF{q}$-dimension $km$ with minimum Hamming distance $d$. We first extend the Alderson--Bruen--Silverman (ABS) model of linear codes to the additive setting: a code of length $n$ with $q^{km}$ words over an alphabet of size $q^m$ admits an ABS model if and only if it is equivalent to a nondegenerate additive code.  We then ask whether an additive code that admits an extension must admit an \emph{additive} extension.  For linear codes ($m=1$) this is a theorem of Alderson and G\'acs.  We characterize the additive codes admitting no additive extension as those whose associated projective system of flats is complete, and we prove that the answer to the question above is again affirmative for $(n,2,d)_{9/3}$-, $(n,2,d)_{4/2}$-, and $(n,3,d)_{4/2}$-codes.  In contrast with the linear case, we show that the answer is negative in general.  Scattered linear sets yield, for each square $q$, extendable additive $(n,2,d)_{q^2/q}$-codes admitting no additive extension.  Further, a different method yields an extendable additive $(30,2,24)_{8/2}$-code with no additive extension.  Consequently, for properly additive codes, completeness of the associated projective system does not imply maximality of the code.  We conjecture that extendable $(n,2,d)_{p^2/p}$-codes, $p$ prime, always admit additive extensions.
\end{abstract}

\medskip
\noindent\textbf{Keywords:} additive codes, code extensions, maximal codes, projective systems, directions in affine spaces, scattered linear sets

\smallskip
\noindent\textbf{MSC 2020:} Primary 94B05, 51E22; Secondary 94B27, 51E20, 51E21

\section{Introduction}\label{sec:intro}

For $n\geq k$, an \textit{$(n,k,d)_q$-code} $C$ is a collection of $q^k$ $n$-tuples (\textit{codewords}) over an alphabet $\A$ of size $q$, such that the minimum Hamming distance between distinct codewords of $C$ is $d$.  Thus there exist two codewords agreeing in $n-d$ coordinates, and no
two codewords agree in as many as $n-d+1$ coordinates.  
Neither linearity nor any algebraic structure on $\A$ is assumed, and $k$ need not be an integer. 

The code obtained from $C$ by deleting some fixed coordinate from every codeword is a \textit{punctured code} of $C$.  If $C$ is an $(n+1,k,d+1)_q$-code, then every punctured code of $C$ is an $(n,k,d)_q$-code; in this situation $C$ is called an \textit{extension} of the punctured code, and the punctured code is said to be \textit{extendable} to $C$.  A code admitting no extension is
\textit{maximal}.

Now let $q$ be a prime power and $m\geq 1$.  An \textit{additive $(n,k,d)_{q^m/q}$-code} is an $(n,k,d)_{q^m}$-code $C\subseteq\GF{q^m}^n$ that is moreover a $\GF{q}$-linear subspace of $\GF{q^m}^n$ (necessarily of $\GF{q}$-dimension $km$, so $km$ is an integer; we assume throughout that $k$ is also an integer). %
For $m=1$ these are exactly the linear $[n,k,d]_q$-codes.  For $m\geq 2$ the code $C$ need
not be $\GF{q^m}$-linear; additive codes that are not equivalent to linear codes are called \textit{properly additive}.  An extension of an additive code $C$ that is itself additive is an \textit{additive extension}.  We call $C$ \textit{additively maximal} if it admits no additive extension. Trivially, maximal implies additively maximal. The central question of this
paper is when the converse holds:

\begin{quote}
\textit{If an additive code admits an extension, must it admit an additive extension?}
\end{quote}

The study of additive codes was motivated largely by quantum error correction.  Calderbank, Rains, Shor and Sloane~\cite{CRSS98} showed that every binary quantum stabilizer code can be derived from a code additive over $\GF{4}$; the nonbinary case was developed by Ashikhmin and
Knill~\cite{AK01} and treated comprehensively by Ketkar, Klappenecker, Kumar and Sarvepalli~\cite{KKKS06}.  More recently the structure of additive MDS codes has received considerable attention; see Ball, Gamboa and Lavrauw~\cite{BGL22} and Adriaensen and Ball~\cite{AdBall}.  For
introductions to classical coding theory we refer to~\cite{MR0465510,MR1664228,MR2079734,1137784,MR2131191}.

Additive codes are of interest to classical coding theory since they can be strictly better than linear codes.  No linear $[21,3,18]_9$-code exists (such a code would be a maximal $\{21;3\}$-arc in $\PG29$, and maximal arcs in Desarguesian planes of odd order do not exist, by a theorem of Ball, Blokhuis and Mazzocca~\cite{BBM97}), yet an additive $(21,3,18)_{9/3}$-code does exist. An example due to Mathon (see~\cite{DDHM02}) of $21$ lines in $\PG53$, met by every hyperplane in $0$ or $3$ of them, is the projective system of such a code (see~\cite{AldersonBall26} for the general study of such systems).  Additivity is thus a genuine broadening of linearity.  
In the sequel, we provide additive codes that cannot be lengthened to an additive code, yet still admit an extension
(Sections~\ref{sec:scatter} and~\ref{sec:counter}).

For linear codes the question above was answered affirmatively by Alderson and G\'acs~\cite{MR2529622}: \textit{if a linear $(n,k,d)_q$-code admits an extension, then it admits a linear extension}.  In particular a linear code admitting no linear extension is maximal, which yields a characterization of maximal linear codes as complete weighted $(n,n-d)$-arcs in $\PG{k-1}{q}$.  The proof rests on the Bruen--Silverman model of linear codes (introduced in~\cite{ALD} and developed
in~\cite{AB1,AB3}; now commonly called the Alderson--Bruen--Silverman (ABS) model) together with results on directions determined by affine point sets.

The present paper investigates the additive analogue.  Our results reveal that the linear theory does generalize in part. In Section~\ref{sec:ABS} we extend the ABS model to additive codes and
prove:

\begin{thm}[see Theorem~\ref{thm:ABS}]\label{thm:ABS_intro}
A code of length $n$ with $q^{km}$ codewords over an alphabet of size $q^m$ admits an ABS model if and only if it is equivalent to a nondegenerate additive $(n,k,d)_{q^m/q}$-code.
\end{thm}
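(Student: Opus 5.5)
The plan is to make the additive ABS model explicit and then show that both directions of Theorem~\ref{thm:ABS_intro} come down to a coordinatewise relabelling of symbols. I take the model to be the following, as the natural extension of the linear one. Take the point set of $\AG{km}{q}$, with hyperplane at infinity $\Sigma_\infty\cong\PG{km-1}{q}$. Take $n$ subspaces $W_1,\dots,W_n$ of $\Sigma_\infty$, each of projective dimension $km-m-1$, i.e.\ vector codimension $m$. Coordinate $i$ is the partition of the $q^{km}$ affine points into the $q^m$ parallel affine subspaces with direction $W_i$, together with a bijection from these classes to the alphabet $\A$. The codeword attached to an affine point $P$ is the tuple of labels of the classes containing $P$. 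The key observation is that two points $P\neq Q$ agree in coordinate $i$ exactly when the direction of the line $PQ$ lies in $W_i$. Hence the minimum distance is $d$ precisely when every point of $\Sigma_\infty$ lies in at most $n-d$ of the $W_i$, with equality for some point. In particular $d\ge 1$ forces $\bigcap_i W_i=\emptyset$.

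For ($\Leftarrow$), let $C$ be a nondegenerate additive $(n,k,d)_{q^m/q}$-code. I fix a $\GF{q}$-isomorphism $\phi:\GF{q}^{km}\to C$ and compose with the coordinate projections to get $\GF{q}$-linear maps $\pi_i:\GF{q}^{km}\to\GF{q^m}$. I read nondegeneracy as surjectivity of each $\pi_i$, so that no coordinate is confined to a proper $\GF{q}$-subspace of the alphabet; I would check this reading against the paper's definition. Under it, $K_i=\ker\pi_i$ has codimension exactly $m$, and its projectivization is the required $W_i$. The fibres of $\pi_i$ are the cosets of $K_i$, i.e.\ the parallel class with direction $W_i$, labelled by their image in $\GF{q^m}$. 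The point-to-codeword map of the resulting model is then just $\phi$, so the model reproduces $C$ exactly. Conversely, if some $\pi_i$ is not onto, the fibres of coordinate $i$ have more than $q^{(k-1)m}$ points and no ABS partition can produce that coordinate. This is why nondegeneracy is needed.

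For ($\Rightarrow$), suppose a code $C$ arises from such a model. I choose an origin, identifying the affine points with $V=\GF{q}^{km}$. For each $i$ I let $K_i\le V$ be the $m$-codimensional subspace underlying $W_i$. I fix a $\GF{q}$-linear isomorphism $\psi_i:V/K_i\to\GF{q^m}$, and I fix a bijection $\A\to\GF{q^m}$ given by the model's labels. The model's labelling of the classes of coordinate $i$ and the map $\psi_i$ then differ by a permutation $\sigma_i$ of the alphabet. Applying $\sigma_i$ in coordinate $i$ is an equivalence of codes, and it carries $C$ onto $C'=\{(\psi_1(v+K_1),\dots,\psi_n(v+K_n)) : v\in V\}$. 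This $C'$ is the image of a $\GF{q}$-linear map, hence additive. Each coordinate of $C'$ is onto $\GF{q^m}$, so $C'$ is nondegenerate. Equivalence preserves all Hamming distances, so $C'$ again has minimum distance $d$. Finally, the linear map $V\to C'$ has kernel $\bigcap K_i$, which is $0$ by the remark on $d\ge1$ above. Therefore $|C'|=q^{km}$ and $\dim_{\GF{q}}C'=km$, so $C'$ is an additive $(n,k,d)_{q^m/q}$-code.

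I expect the main obstacle to be definitional rather than computational. One issue is fixing exactly what "admits an ABS model" means for an abstract code over an unstructured alphabet. The model must consist of an affine space with parallel classes of codimension-$m$ subspaces, and not arbitrary partitions into $q^m$ equal blocks; otherwise the forward direction fails. The other issue is making sure the notion of equivalence allows independent symbol permutations in each coordinate, which is exactly what the relabelling by the $\sigma_i$ uses. Once these are pinned down, everything else is the linear-algebra bookkeeping above.
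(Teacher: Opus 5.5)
Your construction is the paper's own: build $C'$ from $\GF{q}$-linear coordinate maps with prescribed kernels, get injectivity from $|C|=q^{km}$, and relabel symbols coordinatewise. But as written it proves only the faithful case, which is the ``moreover'' clause of Theorem~\ref{thm:ABS}, not the statement itself.

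In the paper, \emph{nondegenerate} means only that no coordinate map is identically zero ($r_j\ge1$), not that it is surjective. Definition~\ref{defn:admitsABS} also allows the $\Lambda_j$ to be arbitrary proper flats. By Remark~\ref{rem:dimforced} their dimensions range over $km-m-1\le\dim\Lambda_j\le km-2$.

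Your assertion that a non-surjective $\pi_i$ cannot come from an ABS model is therefore false in the paper's setting. An unfaithful nondegenerate additive code admits an ABS model whose $j$-th flat is $\PGv{\ker\pi_j}$, of dimension $km-r_j-1$. Nondegeneracy is needed only to make every $\Lambda_j$ proper, since a zero coordinate would force $\Lambda_j=\Pi$. Correspondingly, in ($\Rightarrow$) you cannot assume each $K_i$ has codimension $m$. When it has codimension $r_i<m$ there is no isomorphism $\psi_i:V/K_i\to\GF{q^m}$, so your relabelling step does not apply.

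The repair is small, and it is exactly what the paper does.
\begin{itemize}
\item[($\Leftarrow$)] Take $K_i=\ker\pi_i$, which is proper because $\pi_i\neq0$. The model's labelling in coordinate $i$ is then only an injection $\la+K_i\mapsto\pi_i(\la)$ of the $q^{r_i}$ cosets into the alphabet. You should also absorb the symbol bijections of the equivalence into these labels, since the statement concerns codes \emph{equivalent} to additive ones; you left that step implicit.
\item[($\Rightarrow$)] For each $i$ choose any $\GF{q}$-linear $\rho_i:V\to\GF{q^m}$ with kernel $K_i$; this is possible because $\mathrm{codim}\,K_i\le m$. Write $\phi(\la)$ for the codeword the model attaches to the point $\la$. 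Then $\rho_i(\la)\mapsto\phi(\la)_i$ is a well-defined injection from $\mathrm{im}\,\rho_i$ into $\A$, because agreement in coordinate $i$ holds exactly when the difference lies in $K_i$. Extend it to a bijection $\GF{q^m}\to\A$, which is possible since the two sets have the same size.
\end{itemize}
The rest of your argument goes through unchanged: additivity, nondegeneracy (now because $\rho_i\neq0$), preservation of all distances, and $\bigcap K_i=0$. Restricting to codimension exactly $m$ recovers your faithful version.
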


In Section~\ref{sec:ext} we develop and utilize the ABS model to characterize additive maximality geometrically. An additive code admits no additive extension precisely when the set $\Afold$ of $(n-d)$-fold points of its dual system meets every $(km-m-1)$-flat of $\PG{km-1}{q}$, equivalently, when its projective system of $(m-1)$-flats is complete (Corollary~\ref{cor:addmax}).  In Sections~\ref{sec:93} and~\ref{sec:43} we prove the analogue of the linear extensions theorem in certain ``small'' properly additive settings (see Theorems~\ref{thm:93} and~\ref{thm:43}).

The linear analogue does \emph{not} however persist in general. We show it fails to hold in two quite different ways.  In Section~\ref{sec:scatter} we show that  over every non-prime field, \emph{scattered linear sets} (in the sense of Blokhuis and Lavrauw~\cite{MR1772204}; see also the survey~\cite{MR2684078}) provide sets of $q^2$ points of $\AG{4}{q}$ whose direction sets contain
no line of $\PG{3}{q}$, and for square $q$ these convert into explicit counterexamples (see Theorem~\ref{thm:scattercode}).


In Section~\ref{sec:prime} we examine the prime case, and conjecture
(Conjecture~\ref{conj:m2}) that every extendable additive $(n,2,d)_{p^2/p}$-code, $p$ prime, admits an additive extension.

Even over prime fields, the parallel with linear theory does not hold in general once $m\geq3$. In Section~\ref{sec:counter} we construct a counterexample (see Theorem~\ref{thm:counter}).


Thus for additive codes, additive maximality does not imply maximality, so complete projective systems of $(m-1)$-flats need not yield maximal codes.  

Section~\ref{sec:remarks} collects the remaining open problems.

\begin{rem}\label{rem:equiv}
Two codes $C$ and $C'$ of length $n$ over alphabets $\A$ and $\A'$ of equal size are \textit{equivalent} if there are a permutation $\pi$ of the coordinate positions and bijections $f_j:\A\to\A'$ $(1\le j\le n)$ such that the map $(c_1,\ldots,c_n)\mapsto (f_1(c_{\pi(1)}),\ldots,f_n(c_{\pi(n)}))$ carries $C$ onto $C'$.  This is the natural notion of isometry for unrestricted codes and is the notion used in Theorem~\ref{thm:ABS_intro}.  Note that this is broader than the monomial (or semilinear) equivalence commonly used for linear and additive
codes.  Extendability is invariant under equivalence.
\end{rem}

\section{The ABS Model of Additive Codes}
\label{sec:ABS}

\subsection*{The Geometry}

Let $\Sigma=\PG{km}{q}$ with homogeneous coordinates $(X_0,X_1,\ldots,X_{km})$ and let $\Pi$ be the hyperplane defined by $X_{km}=0$, so $\Pi\cong\PG{km-1}{q}$.  The affine complement $\AG{km}{q}=\Sigma\setminus\Pi$ carries the structure of a $km$-dimensional affine space over $\GF{q}$, and we identify its points with the vectors $\la\in\GF{q}^{km}$ via $\la\leftrightarrow(\la,1)$.  For a nonzero vector $v\in\GF{q}^{km}$ we write $[v]$ for the corresponding point $(v,0)$ of $\Pi$.  More generally, for a nonzero $\GF{q}$-subspace
$W\subseteq\GF{q}^{km}$ we write $\PGv{W}=\{[w]\,:\,w\in W\setminus\{0\}\}$ for the flat of $\Pi$ induced by $W$, so $\PGv{W}\cong\PG{\dim W-1}{q}$, and $\PGv{\GF{q}^{km}}=\Pi$.

Fix a $\GF{q}$-basis $\mathcal{B}=\{\omega_0,\ldots,\omega_{m-1}\}$ of $\GF{q^m}$ over $\GF{q}$. This determines a $\GF{q}$-linear identification $\phi_{\mathcal{B}}:\GF{q^m}\xrightarrow{\;\sim\;}\GF{q}^m$.
Let $C$ be an additive $(n,k,d)_{q^m/q}$-code with $\GF{q}$-linear encoding isomorphism $\varepsilon:\GF{q}^{km}\to C$.  We assume throughout that $C$ is \textit{nondegenerate}: no coordinate of $C$ is identically zero.  (A degenerate coordinate contributes nothing to the distance and may be deleted; none of the questions considered here is affected.)

For each $j=1,\ldots,n$, composing the encoding map with the $j$-th projection $\pi_j$ and with $\phi_{\mathcal{B}}$ gives the \textit{$j$-th coordinate map}
\[
  \varepsilon_j \;=\; \phi_{\mathcal{B}}\circ\pi_j\circ\varepsilon
    \;:\; \GF{q}^{km}\;\longrightarrow\;\GF{q}^m,
\]
represented by an $m\times km$ matrix $M_j$ over $\GF{q}$.  For an additive code the rank $r_j$ of $M_j$ may be any integer with $0\le r_j\le m$; rank $0$ means the $j$-th coordinate is degenerate, so nondegeneracy says precisely that $r_j\geq1$ for every $j$.  Following~\cite{BLP24}, we call $C$ \textit{faithful} if every coordinate map is surjective ($r_j=m$ for every $j$); equivalently, if every alphabet symbol occurs in every coordinate.  Faithfulness is invariant under equivalence, since an equivalence preserves the number ($q^{\,r_j}$) of distinct symbols occurring in each coordinate.  (An unfaithful additive code can always be converted into a faithful one of the same length and at least the same minimum distance; see~\cite[Remark~6]{BLP24}.)  We define:
\begin{itemize}
  \item the \textit{$j$-th coordinate flat}: $\sigma_j = \PGv{\mathrm{rowspace}(M_j)}\subseteq\Pi$, a flat of dimension $r_j-1\le m-1$;
  \item the \textit{$j$-th null flat}:  $\Lambda_j = \PGv{\ker M_j}\subseteq\Pi$, a flat of dimension $km-r_j-1\ge km-m-1$.
\end{itemize}
$\sigma_j$ and $\Lambda_j$ have complementary dimensions in $\Pi$, and each determines the other in that
$\ker M_j=\mathrm{rowspace}(M_j)^{\perp}$ with respect to the standard bilinear form.  The \textit{projective system} of $C$ is the multiset $\G=\{\sigma_1,\ldots,\sigma_n\}$ and the \textit{dual system} is the multiset $\Gamma=\{\Lambda_1,\ldots,\Lambda_n\}$.  Thus $C$ is faithful if and only if $\G$ consists of $(m-1)$-flats, if and only if $\Gamma$ consists of $(km-m-1)$-flats.

\begin{rem}\label{rem:BLPdict}
In the notation of~\cite{BLP24}, an additive $(n,k,d)_{q^m/q}$-code is a code of type $[n,\,km/m,\,d]_q^{\,m}$, and $\G$ coincides with the projective system $\mathcal{X}(C)$ considered there, whose members are the column spaces of the $n$ blocks of $m$ columns of an expanded generator matrix: the $j$-th block is $M_j^{T}$, with column space $\mathrm{rowspace}(M_j)$.  In particular \emph{faithful} carries the same meaning here as in~\cite{BLP24}.
\end{rem}

\begin{rem}\label{rem:welldef}
The system $\G$ (equivalently $\Gamma$) is well defined up to a projective transformation of $\Pi$: replacing $\mathcal{B}$ by another basis multiplies each $M_j$ on the left by a fixed invertible $m\times m$ matrix, which changes neither $\mathrm{rowspace}(M_j)$ nor $\ker M_j$; replacing $\varepsilon$ by $\varepsilon\circ A$ with $A\in\mathrm{GL}(km,q)$ replaces $M_j$ by $M_jA$, which shifts all  $\sigma_j$ and $\Lambda_j$ by the common collineation induced by $A$.
\end{rem}

The key combinatorial properties of $C$ may be interpreted via the dual system through the following elementary but fundamental observation.

\begin{lem}\label{lem:agree}
Two codewords $\varepsilon(\la)$ and $\varepsilon(\mu)$, with $\la\neq\mu\in\GF{q}^{km}$, agree in the $j$-th coordinate if and only if the point $[\la-\mu]\in\Pi$ lies in the null flat $\Lambda_j$.
\end{lem}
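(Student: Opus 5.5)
The plan is a direct unwinding of the definitions in three steps. First, by definition $\varepsilon(\la)$ and $\varepsilon(\mu)$ agree in the $j$-th coordinate exactly when $\pi_j(\varepsilon(\la))=\pi_j(\varepsilon(\mu))$ in $\GF{q^m}$. Since $\phi_{\mathcal{B}}$ is a bijection, this is equivalent to $\varepsilon_j(\la)=\varepsilon_j(\mu)$ in $\GF{q}^m$.

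Second, I use linearity. The map $\varepsilon$ is $\GF{q}$-linear by hypothesis, and $\pi_j$ and $\phi_{\mathcal{B}}$ are $\GF{q}$-linear. Hence $\varepsilon_j$ is $\GF{q}$-linear, given by $\la\mapsto M_j\la$ (viewing $\la$ as a column vector). The condition $\varepsilon_j(\la)=\varepsilon_j(\mu)$ is therefore equivalent to $M_j(\la-\mu)=0$, that is, $\la-\mu\in\ker M_j$.

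Third, I pass to projective points. Because $\la\neq\mu$, the vector $\la-\mu$ is nonzero, so $[\la-\mu]$ is a well-defined point of $\Pi$. By the definition of $\PGv{W}$, a nonzero vector $w$ lies in $W$ if and only if $[w]\in\PGv{W}$. The reverse direction needs a little care: if $[w]=[w']$ with $w'\in W\setminus\{0\}$, then $w=cw'$ for some $c\in\GF{q}^*$, and so $w\in W$ because $W$ is a subspace. Applying this with $W=\ker M_j$ gives $\la-\mu\in\ker M_j$ if and only if $[\la-\mu]\in\Lambda_j$. Chaining the equivalences from the three steps proves the lemma.

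There is no real obstacle. The only points needing attention are that injectivity of $\phi_{\mathcal{B}}$ lets us move between agreement in $\GF{q^m}$ and equality in $\GF{q}^m$, and that $[\,\cdot\,]$ is only defined on nonzero vectors, which is exactly why the hypothesis $\la\neq\mu$ is needed. Degenerate coordinates ($r_j=0$) cause no trouble either: then $\ker M_j=\GF{q}^{km}$, so $\Lambda_j=\Pi$, consistent with the two codewords always agreeing there.
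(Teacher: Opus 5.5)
Your proof is correct and follows the same route as the paper's: agreement in coordinate $j$ is equivalent to $\varepsilon_j(\la-\mu)=0$, i.e.\ $\la-\mu\in\ker M_j$, i.e.\ $[\la-\mu]\in\Lambda_j$. You simply spell out the injectivity of $\phi_{\mathcal{B}}$ and the passage from nonzero vectors to projective points, which the paper leaves implicit.
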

\begin{proof}
$\varepsilon(\la)_j=\varepsilon(\mu)_j$ iff $\varepsilon_j(\la-\mu)=0$ iff $\la-\mu\in\ker M_j$ iff $[\la-\mu]\in \Lambda_j$.
\end{proof}

A point $P\in\Pi$ is a \textit{$t$-fold point} of $\Gamma$ if it lies in exactly $t$ of the null flats $\Lambda_1,\ldots,\Lambda_n$ (counted with multiplicity).  By Lemma~\ref{lem:agree}, two distinct codewords agree in exactly $t$ coordinates if and only if the corresponding direction is a $t$-fold point.  Thus, the minimum distance of the code determines that \textit{every point of $\Pi$ is at most $(n-d)$-fold, and at least one point is exactly $(n-d)$-fold.}

\begin{rem}\label{rem:duality}
Dually, in terms of the projective system $\G$: for a hyperplane $H$ of $\Pi$ with defining linear form $h$, and pole $H^\perp=[h]$, one has $\sigma_j\subseteq H$ iff every row of $M_j$ is orthogonal to $h$ iff $M_jh^T=0$ iff $H^\perp\in \Lambda_j$.  Hence the weight of the codeword $\varepsilon(\la)$ equals $n$ minus the number of members of $\G$ contained in the hyperplane polar to $[\la]$, and the minimum distance condition says that every hyperplane of $\Pi$ contains at most $n-d$ members of $\G$.  For $m=1$ the $\sigma_j$ are the points of the classical projective system (the columns of a generator matrix) and $\Gamma$ is the associated system of hyperplanes, as in~\cite{MR2529622}.  For $m\ge2$, linear codes over $\GF{q^m}$ correspond to systems whose members lie in a fixed Desarguesian $(m-1)$-spread, while properly additive codes require general $(m-1)$-flats; see~\cite{BGL22,AdBall} for this point of view in the MDS setting.
\end{rem}

\subsection*{The ABS model}

\begin{defn}\label{defn:ABSmodel}
Let $C$ be a nondegenerate additive $(n,k,d)_{q^m/q}$-code.  The \textit{ABS model} of $C$ consists of the identification of the codewords $\varepsilon(\la)$ with the affine points $(\la,1)$ of $\Sigma\setminus\Pi$, together with the dual system $\Gamma=\{\Lambda_1,\ldots,\Lambda_n\}$ in $\Pi$.
\end{defn}

For each $j$ the $q^{r_j}$ cosets of $\ker M_j$ in $\GF{q}^{km}$ form a pencil of parallel affine $(km-r_j)$-flats whose common set of points at infinity is $\Lambda_j$. The cosets are the fibres of the $j$-th coordinate, so the model realizes the symbols occurring at position $j$ as a pencil of flats through $\Lambda_j$. For a faithful coordinate, the full alphabet is realized as a pencil of $q^m$ parallel $(km-m)$-flats.  When $m=1$ nondegeneracy forces faithfulness, the null flats are hyperplanes of $\Pi$, and Definition~\ref{defn:ABSmodel} is the Bruen--Silverman model of linear codes introduced in~\cite{ALD} and developed in~\cite{AB1,AB3}, now referred to as the Alderson--Bruen--Silverman (ABS) model.

We now show that the ABS model characterizes additive codes up to equivalence (in the sense of Section~\ref{sec:intro}).

\begin{defn}\label{defn:admitsABS}
A code $C$ of length $n$ over an alphabet $\A$ with $|\A|=q^m$ and $|C|=q^{km}$ \textit{admits an ABS model} if there exist a bijection $\phi:\GF{q}^{km}\to C$ and proper flats $\Lambda_1,\ldots,\Lambda_n$ of $\Pi=\PG{km-1}{q}$ such that for all $\la\neq\mu\in\GF{q}^{km}$ and all $j\in\{1,\ldots,n\}$,
\[
  \phi(\la)_j \;=\; \phi(\mu)_j  \quad\Longleftrightarrow\quad  [\la-\mu]\in \Lambda_j.
\]
\end{defn}

\begin{rem}\label{rem:dimforced}
Writing $V_j\le\GF{q}^{km}$ for the subspace with $\PGv{V_j}=\Lambda_j$, the condition of Definition~\ref{defn:admitsABS} says that $\la+V_j\mapsto\phi(\la)_j$ is a well-defined injection from the cosets of $V_j$ into $\A$, so that $q^{km-\dim V_j}\le q^m$, i.e., $\dim \Lambda_j\ge km-m-1$.  Together with properness this gives $km-m-1\le\dim \Lambda_j\le km-2$.  Moreover $q^{km-\dim V_j}$ is exactly the number of symbols occurring in the $j$-th coordinate of $C$, so the dimensions of the $\Lambda_j$ are determined by $C$.  For $m=1$ the flats $\Lambda_j$ are therefore forced to be hyperplanes of $\Pi$, and Definition~\ref{defn:admitsABS} specializes exactly to the linear ABS model.
\end{rem}

\begin{thm}\label{thm:ABS}
Let $C$ be a code of length $n$ over an alphabet $\A$ of size $q^m$, with $|C|=q^{km}$ and minimum distance $d$.  Then $C$ admits an ABS model if and only if $C$ is equivalent to a nondegenerate additive $(n,k,d)_{q^m/q}$-code.  Moreover, the flats of the model all have dimension $km-m-1$ if and only if $C$ is equivalent to a faithful additive $(n,k,d)_{q^m/q}$-code.
\end{thm}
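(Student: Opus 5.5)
The plan is to prove the two directions separately and then read off the faithfulness statement from Remark~\ref{rem:dimforced}.

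\emph{($\Leftarrow$)} Suppose $C$ is equivalent to a nondegenerate additive code $C'$ with encoding $\varepsilon$ and dual system $\Gamma=\{\Lambda_1,\ldots,\Lambda_n\}$, via a permutation $\pi$ and bijections $f_j$. Each $\Lambda_j$ is a proper flat of $\Pi$, because $r_j\ge1$. By Lemma~\ref{lem:agree}, $\varepsilon$ together with $\Gamma$ satisfies the condition of Definition~\ref{defn:admitsABS}. I then transport this structure to $C$: compose $\varepsilon$ with the inverse of the equivalence to get $\phi$, and reindex the flats by $\pi$. Since the coordinatewise bijections $f_j$ preserve equality of symbols in each coordinate, the agreement condition carries over unchanged, so $C$ admits an ABS model.

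\emph{($\Rightarrow$)} Suppose $\phi$ and $\Lambda_j=\PGv{V_j}$ are given. Put $r_j=km-\dim V_j$; by Remark~\ref{rem:dimforced} we have $1\le r_j\le m$. For each $j$, I choose a surjective $\GF{q}$-linear map $L_j:\GF{q}^{km}\to\GF{q}^{r_j}$ with kernel $V_j$, and embed $\GF{q}^{r_j}\subseteq\GF{q}^m\cong\GF{q^m}$ via $\phi_{\mathcal B}^{-1}$. Define
\[
\varepsilon(\la)=\bigl(\phi_{\mathcal B}^{-1}(L_1\la),\ldots,\phi_{\mathcal B}^{-1}(L_n\la)\bigr).
\]
This map is $\GF{q}$-linear. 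It is also injective: if $\varepsilon(\la)=\varepsilon(\mu)$ with $\la\ne\mu$, then $\phi(\la)$ and $\phi(\mu)$ agree in every coordinate by the defining condition, contradicting that $\phi$ is a bijection onto $C$ (whose words are distinct). So $C'=\varepsilon(\GF{q}^{km})$ is an additive code of $\GF{q}$-dimension $km$. Each coordinate map has rank $r_j\ge1$, so $C'$ is nondegenerate.

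To build the equivalence, I define $f_j$ on the symbols occurring in coordinate $j$ of $C$ by $\phi(\la)_j\mapsto\varepsilon(\la)_j$. This is well defined and injective because both sides depend exactly on the coset $\la+V_j$: for $C$ this is the hypothesis, and for $C'$ it holds because $\ker L_j=V_j$. Both image sets have size $q^{r_j}$, so $f_j$ is a bijection between them. I then extend it arbitrarily to a bijection $\A\to\GF{q^m}$, using the fact that the complements also have equal size. The equivalence sends $\phi(\la)$ to $\varepsilon(\la)$ and therefore carries $C$ onto $C'$. Equivalence preserves Hamming distances, so $C'$ has minimum distance $d$.

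\emph{Faithfulness.} Every $\dim\Lambda_j=km-m-1$ iff every $r_j=m$. When this holds, the construction above gives a code $C'$ that is faithful by definition. Conversely, the number of symbols occurring in coordinate $j$ is $q^{r_j}$, and this count is invariant under equivalence. Hence if $C$ is equivalent to a faithful code, then $r_j=m$ for all $j$, which forces every $\dim\Lambda_j=km-m-1$ by Remark~\ref{rem:dimforced}.

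The main obstacle is the extension of the $f_j$ to full bijections of the alphabet when coordinates are not faithful. This requires checking that the unused symbols on the two sides are equinumerous, which holds because both alphabets have size $q^m$ and both image sets have size $q^{r_j}$. Everything else is bookkeeping.
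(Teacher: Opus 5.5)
Your proposal is correct and follows essentially the same route as the paper. For ($\Leftarrow$) you transport Lemma~\ref{lem:agree} through the equivalence; for ($\Rightarrow$) you choose $\GF{q}$-linear coordinate maps with kernels $V_j$, get injectivity of $\varepsilon$ from that of $\phi$, build the coordinate bijections $f_j$ on occurring symbols and extend them, and read off faithfulness from the invariant count $q^{km-\dim V_j}$ of occurring symbols. The only cosmetic differences are that you orient the $f_j$ from $\A$ to $\GF{q^m}$, and that you deduce the minimum distance of $C'$ from the equivalence rather than from the paper's direct coordinatewise-agreement comparison.
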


\begin{proof}
($\Leftarrow$)  Suppose first that $C$ is equivalent to a nondegenerate additive code $C'$ with encoding map $\varepsilon$ and null flats $\Lambda_1,\ldots,\Lambda_n$; each $\Lambda_j$ is a proper flat of $\Pi$, since each coordinate map is nonzero.  Code equivalence preserves, coordinate by coordinate, the relation of coordinate agreement. Composing $\varepsilon$ with the equivalence therefore gives a bijection $\phi:\GF{q}^{km}\to C$ satisfying, after the induced permutation of the $\Lambda_j$, the condition of Definition~\ref{defn:admitsABS} by Lemma~\ref{lem:agree}.

($\Rightarrow$)  Suppose $C$ admits an ABS model with bijection $\phi:\GF{q}^{km}\to C$ and flats $\Lambda_1,\ldots,\Lambda_n$.  For each $j$ let $V_j\le\GF{q}^{km}$ be the subspace with $\PGv{V_j}=\Lambda_j$, so that $km-m\le\dim V_j\le km-1$ by Remark~\ref{rem:dimforced}.  Identifying $\GF{q}^m\cong\GF{q^m}$ by a fixed basis, choose a $\GF{q}$-linear map $\rho_j:\GF{q}^{km}\to\GF{q^m}$ with $\ker\rho_j=V_j$, and define
\[
  \varepsilon:\GF{q}^{km}\to\GF{q^m}^n,\qquad \varepsilon(\la)=\bigl(\rho_1(\la),\ldots,\rho_n(\la)\bigr),
\]
and $C'=\varepsilon(\GF{q}^{km})$.  Since each $\rho_j$ is $\GF{q}$-linear, $C'$ is a $\GF{q}$-linear subspace of $\GF{q^m}^n$, and it is nondegenerate since each $\rho_j$ is nonzero.

\smallskip
\noindent\textit{Claim 1: $\varepsilon$ is injective.}
If $\varepsilon(\la)=\varepsilon(\mu)$ with $\la\neq\mu$ then $\la-\mu\in V_j$, i.e.\ $[\la-\mu]\in \Lambda_j$, for every $j$. By the model, $\phi(\la)$ and $\phi(\mu)$ agree in every coordinate, contradicting the injectivity of $\phi$.  Hence $|C'|=q^{km}$.

\smallskip
\noindent\textit{Claim 2: $C'$ has minimum distance $d$.}
By construction, $\varepsilon(\la)_j=\varepsilon(\mu)_j$ iff $[\la-\mu]\in \Lambda_j$ iff $\phi(\la)_j=\phi(\mu)_j$.  Hence $d_H(\varepsilon(\la),\varepsilon(\mu))=d_H(\phi(\la),\phi(\mu))$ for all $\la,\mu$, and the minimum distances of $C'$ and $C$ coincide.  Thus $C'$ is a nondegenerate additive $(n,k,d)_{q^m/q}$-code.

\smallskip
\noindent\textit{Claim 3: $C$ is equivalent to $C'$.}
Fix $j$ and define $f_j$ on the image of $\rho_j$ by $f_j(\rho_j(\la))=\phi(\la)_j$.  This is well defined: if $\rho_j(\la)=\rho_j(\mu)$ and $\la\ne\mu$ then $[\la-\mu]\in \Lambda_j$, so $\phi(\la)_j=\phi(\mu)_j$.  It is injective: if $\phi(\la)_j=\phi(\mu)_j$ with $\la\neq\mu$ then $[\la-\mu]\in \Lambda_j$, so $\rho_j(\la)=\rho_j(\mu)$.  As $|\GF{q^m}|=|\A|$, the injection $f_j$ extends to a bijection $f_j:\GF{q^m}\to\A$, and the coordinatewise map $f=(f_1,\ldots,f_n)$ satisfies $f\circ\varepsilon=\phi$.  Hence $f$ carries $C'$ onto $C$, and $C$ is equivalent to $C'$.

\smallskip
The number of symbols occurring in a given coordinate is invariant under equivalence, it equals $q^{km-\dim V_j}$ in any ABS model of $C$ (Remark~\ref{rem:dimforced}) and $q^{\,r_j}$ for an additive code with coordinate ranks $r_j$.  Hence every $\Lambda_j$ has dimension $km-m-1$ if and only if all $q^m$ symbols occur in every coordinate of $C$, if and only if every (equivalently, some) additive code equivalent to $C$ is faithful.
\end{proof}

\begin{rem}
The additive code produced in the proof depends on the choice of the maps $\rho_j$ only up to equivalence. Two choices with the same kernels differ by $\GF{q}$-linear permutations of the alphabet at each coordinate.
\end{rem}

\section{Extensions, Transversals, and Additive Extensions}
\label{sec:ext}

Throughout this section $C$ denotes a nondegenerate additive $(n,k,d)_{q^m/q}$-code with ABS model in $\Sigma=\PG{km}{q}$, hyperplane at infinity $\Pi$, and dual system $\Gamma=\{\Lambda_1,\ldots,\Lambda_n\}$.  The collection of $(n-d)$-fold points of $\Gamma$ (often called ``fat points'') will be denoted by
\[
  \Afold \;=\;
  \bigl\{P\in\Pi : P \text{ is an $(n-d)$-fold point of }\Gamma\bigr\}.
\]
Since $C$ has minimum distance $d$, the set $\Afold$ is nonempty.

For a set $S$ of points of $\AG{km}{q}$, the \textit{set of directions determined by $S$} is
\[
  D(S)\;=\;\bigl\{[X-Y] : X,Y\in S,\ X\neq Y\bigr\}\;\subseteq\;\Pi.
\]

\begin{defn}\label{defn:transversal}
A set $S\subseteq\AG{km}{q}$ is a \textit{transversal} of a point set $B\subseteq\Pi$ if $D(S)\cap B=\emptyset$.
\end{defn}

Note that subsets of transversals are transversals, and we impose no cardinality condition. However, for $m=1$, transversals of a nonempty set have at most $q^{k-1}$ points, since a set of more than $q^{k-1}$ affine points determines every direction, each parallel class of lines having $q^{k-1}$ members.  

The following provides the combinatorial interpretation of extendability; cf.\ \cite[Lemma~3.1]{MR2529622}.

\begin{lem}\label{lem:partition}
An $(n,k,d)_{q^m}$-code $C$ (not necessarily additive) is extendable if and only if $C$ can be partitioned into $q^m$ (possibly empty) classes so that any two codewords in a common class differ in at least $d+1$ coordinates.
\end{lem}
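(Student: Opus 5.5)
The plan is to prove the two implications directly, taking an extra coordinate to be the same thing as a labelling of the codewords by the $q^m$ symbols of the alphabet $\A$. This is the argument of \cite[Lemma~3.1]{MR2529622}, with $q$ replaced by $q^m$; nothing about additivity is used.

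\emph{($\Rightarrow$)} Suppose $C$ is extendable to an $(n+1,k,d+1)_{q^m}$-code $C^*$, and say the coordinate deleted to recover $C$ is the last one. Because the minimum distance of $C^*$ is $d+1\ge 1$ more than that of $C$, and $|C^*|=|C|$, the puncturing map $C^*\to C$ is a bijection. Indeed, if two words of $C^*$ had the same image, they would differ only in the deleted coordinate. They would then be at distance at most $1$, which is less than $d+1$ whenever $d\ge1$.

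So every $c\in C$ has a unique preimage $c^*\in C^*$, and we let $a(c)\in\A$ be the last coordinate of $c^*$. Partition $C$ into the $q^m$ classes $C_a=\{c: a(c)=a\}$, some of which may be empty. Take $c\neq c'$ in the same class. Then $c^*$ and $c'^*$ agree in the last coordinate, so
\[
d_H(c,c') \;=\; d_H(c^*,c'^*) \;\ge\; d+1 .
\]

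\emph{($\Leftarrow$)} Suppose we are given a partition $C=\bigcup_{a\in\A}C_a$ with the stated property; we index the classes by $\A$, which is possible since there are $q^m$ of them. Append to each $c\in C_a$ the symbol $a$, producing a code $C^*$ of length $n+1$ with $|C^*|=|C|=q^{km}$. We compare distances case by case.
\begin{itemize}
\item If $c\neq c'$ lie in the same class, then $d_H(c^*,c'^*)=d_H(c,c')\ge d+1$ by hypothesis.
\item If they lie in different classes, then $d_H(c^*,c'^*)=d_H(c,c')+1\ge d+1$.
\end{itemize}
Hence the minimum distance of $C^*$ is at least $d+1$. It is also at most $d+1$: choose a pair at distance exactly $d$ in $C$. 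Such a pair cannot lie in a common class, so its extended distance is exactly $d+1$.

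Thus $C^*$ is an $(n+1,k,d+1)_{q^m}$-code, and deleting its last coordinate returns $C$. Its other punctured codes are automatically $(n,k,d)_{q^m}$-codes, with $d$ equal to the minimum distance by the same counting. So $C$ is extendable.

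The only delicate point is the bookkeeping that the minimum distance becomes \emph{exactly} $d+1$, together with the injectivity of puncturing. Both are settled above: by the existence of a pair at distance exactly $d$, and by $d\ge1$.
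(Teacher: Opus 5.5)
Your proof is correct and is essentially the paper's own argument. Puncturing $C^*$ is a bijection because $d+1\ge 2$, the classes are the fibres of the deleted coordinate, and conversely you append class labels and use a pair at distance exactly $d$ to pin the extended minimum distance at $d+1$.

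The only blemish is your aside that the other punctured codes of $C^*$ are $(n,k,d)_{q^m}$-codes. It is unnecessary, since extendability only asks that $C$ be the code obtained by deleting the new coordinate. It can also fail when $C$ has a constant coordinate: $C=\{(0,0),(0,1)\}$ extends to $\{(0,0,0),(0,1,1)\}$, whose puncture at the first coordinate has minimum distance $2$. Simply drop that sentence.
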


\begin{proof}
Suppose $C^+$ is an $(n+1,k,d+1)_{q^m}$-extension of $C$; say the deleted coordinate is the last.  Since $d(C^+)=d+1\ge2$, distinct codewords of $C^+$ have distinct prefixes, so puncturing is a bijection $C^+\to C$. Partitioning $C$ according to the value of the last coordinate of the corresponding word of $C^+$ provides at most $q^m$ nonempty classes. Two codewords in a common class agree in the new coordinate, so they differ in
at least $d+1$ of the first $n$ coordinates.

Conversely, given such a partition $\PP=\{S_1,\ldots,S_{q^m}\}$, label the classes by the symbols of the alphabet and append to each codeword the label of its class.  Words in a common class differ in $\geq d+1$ of the first $n$ coordinates; words in distinct classes differ in the new coordinate and in at least $d$ of the first $n$.  Hence the extended code has minimum distance at least $d+1$; and a pair of codewords of $C$ at
distance exactly $d$ (which exists) lies in two distinct classes, so the extended minimum distance is exactly $d+1$.  Thus the extended code is an $(n+1,k,d+1)_{q^m}$-code.
\end{proof}

\begin{lem}\label{lem:transpartition}
$C$ is extendable if and only if $\AG{km}{q}$ can be partitioned into $q^m$ transversals of $\Afold$.
\end{lem}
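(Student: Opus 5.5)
The plan is to combine Lemma~\ref{lem:partition} with Lemma~\ref{lem:agree}, transported through the encoding bijection $\varepsilon:\GF{q}^{km}\to C$, which identifies codewords with affine points $\la\leftrightarrow(\la,1)$. Under this identification a partition of $C$ into $q^m$ classes is the same as a partition of $\AG{km}{q}$ into $q^m$ (possibly empty) point sets $S_1,\ldots,S_{q^m}$. So it suffices to show that a class has the property required in Lemma~\ref{lem:partition} exactly when the corresponding affine set is a transversal of $\Afold$.

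The key step is this equivalence for a single set $S\subseteq\AG{km}{q}$. Take distinct points $\la,\mu\in S$. By Lemma~\ref{lem:agree}, the codewords $\varepsilon(\la)$ and $\varepsilon(\mu)$ agree in exactly $t$ coordinates, where $t$ is the number of null flats $\Lambda_j$ containing $[\la-\mu]$, counted with multiplicity. Hence $d_H(\varepsilon(\la),\varepsilon(\mu))=n-t$.

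Since $C$ has minimum distance $d$, every point of $\Pi$ is at most $(n-d)$-fold. Therefore $n-t\ge d+1$ holds if and only if $t\ne n-d$, that is, if and only if $[\la-\mu]\notin\Afold$. It follows that all pairs of codewords coming from $S$ differ in at least $d+1$ coordinates if and only if $D(S)\cap\Afold=\emptyset$, which is exactly the condition that $S$ is a transversal of $\Afold$ (Definition~\ref{defn:transversal}).

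Both directions of the lemma now follow at once from Lemma~\ref{lem:partition}. If $C$ is extendable, the classes given by Lemma~\ref{lem:partition} pull back under $\varepsilon$ to $q^m$ transversals partitioning $\AG{km}{q}$. Conversely, a partition of $\AG{km}{q}$ into $q^m$ transversals maps under $\varepsilon$ to a partition of $C$ that satisfies the hypothesis of Lemma~\ref{lem:partition}. Empty classes are allowed on both sides, so the count $q^m$ matches without adjustment.

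The only point needing care is the use of the maximality bound "every point is at most $(n-d)$-fold." This bound is what makes "distance $\ge d+1$" equivalent to "direction not a fat point," as opposed to merely "direction not at least $(n-d)$-fold." I do not expect any step to be a genuine obstacle.
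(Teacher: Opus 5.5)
Your proof is correct and follows essentially the same route as the paper's: identify classes of $C$ with subsets of $\AG{km}{q}$ via $\varepsilon$, use Lemma~\ref{lem:agree} to turn ``distance at least $d+1$'' into ``direction not a fat point,'' and then apply Lemma~\ref{lem:partition}. Your explicit use of the bound that every point is at most $(n-d)$-fold simply spells out a step the paper leaves implicit.
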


\begin{proof}
In the ABS model the codewords of $C$ are the points of $\AG{km}{q}$, so partitions of $C$ correspond to partitions of $\AG{km}{q}$, and by Lemma~\ref{lem:agree} the condition ``any two codewords of a class $S$ differ in at least $d+1$ coordinates'' says precisely that no direction determined by $S$ is a fat point (an $(n-d)$-fold point), that is, that $S$ is a transversal of $\Afold$.  Now apply Lemma~\ref{lem:partition}.
\end{proof}

Since $|\AG{km}{q}|=q^{km}$, in any such partition some class contains at least $q^{km-m}$ points.  (For $m=1$ every class has exactly $q^{k-1}$ points by the pigeonhole bound noted above.  For $m\geq2$ the class sizes need not \textit{a priori} be equal.)

Additive extensions admit a clean geometric description.

\begin{prop}\label{prop:addext}
$C$ admits an additive extension if and only if some $(km-m-1)$-flat of $\Pi$ is disjoint from $\Afold$.
\end{prop}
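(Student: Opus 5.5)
We prove both directions by passing through the ABS model, using Theorem~\ref{thm:ABS} together with Lemma~\ref{lem:agree} and the partition argument of Lemma~\ref{lem:partition}.

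For the direction ($\Leftarrow$), suppose $W\le\GF{q}^{km}$ has dimension $km-m$ and $\PGv{W}\cap\Afold=\emptyset$. We choose a surjective $\GF{q}$-linear map $\rho:\GF{q}^{km}\to\GF{q^m}$ with $\ker\rho=W$. We then define $C^+=\{(\varepsilon(\la),\rho(\la))\}$; it is plainly additive, with $q^{km}$ words. Two words $\varepsilon(\la)$ and $\varepsilon(\mu)$ at distance exactly $n-t$ agree in $t$ coordinates, so by Lemma~\ref{lem:agree} the point $[\la-\mu]$ is $t$-fold. If $t=n-d$, then $[\la-\mu]\in\Afold$, hence $[\la-\mu]\notin\PGv{W}$, and therefore $\rho(\la)\ne\rho(\mu)$. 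The extended distance is then $d+1$. If $t<n-d$, the distance is already at least $d+1$. Finally, a pair at distance exactly $d$ exists, and the new coordinate raises it only to $d+1$. Hence $C^+$ is an $(n+1,k,d+1)_{q^m/q}$ additive extension, and puncturing it gives back $C$. Equivalently, the cosets of $W$ are $q^m$ transversals partitioning $\AG{km}{q}$, so this direction can also be phrased through Lemma~\ref{lem:transpartition}.

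For the direction ($\To$), let $C^+$ be an additive extension of $C$. We may assume the new coordinate is the last one and that puncturing it yields $C$ exactly, not merely an equivalent code. (If the extension is only up to an additive-preserving equivalence, we transport it first; this bookkeeping step deserves care.) Since $d+1\ge2$, puncturing is a $\GF{q}$-linear bijection $C^+\to C$. Composing its inverse with $\varepsilon$ gives an encoding $\varepsilon^+$ of $C^+$ whose first $n$ coordinate maps are exactly those of $C$, and whose last coordinate map $\rho$ is $\GF{q}$-linear. Let $W=\ker\rho$, which has dimension at least $km-m$. Take a point $[v]\in\Afold$ and write $v=\la-\mu$. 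Then $\varepsilon(\la)$ and $\varepsilon(\mu)$ agree in $n-d$ coordinates, so their distance is $d$. The extension must give distance at least $d+1$, so $\rho(\la)\neq\rho(\mu)$, that is, $v\notin W$. Hence $\PGv{W}\cap\Afold=\emptyset$. Any $(km-m)$-dimensional subspace of $W$ then gives the required $(km-m-1)$-flat. If $\rho=0$ the new coordinate would be degenerate and the distance could not increase; since $\Afold\ne\emptyset$, this case is excluded automatically.

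The main subtlety is the definition of ``additive extension''. We read it as an additive code that punctures to $C$ itself. If the definition allows equivalence (Remark~\ref{rem:equiv}), the argument must be run on an additive code $C'$ equivalent to $C$, which is the punctured code. It then remains to check that $\Afold$, up to collineation, is determined by the agreement structure. This holds because Lemma~\ref{lem:agree} identifies the $t$-fold points intrinsically via pairs of codewords, and because the ABS flats of additively related encodings differ by a collineation (Remark~\ref{rem:welldef}).
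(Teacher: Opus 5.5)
Your argument is correct and is essentially the paper's proof: for ($\Leftarrow$) you append a surjective $\GF{q}$-linear coordinate whose kernel is the given flat, and for ($\Rightarrow$) you use that puncturing is a $\GF{q}$-linear bijection, so the new coordinate map is linear with kernel of dimension at least $km-m$ that must miss $\Afold$. Your final paragraph is unnecessary, since the paper defines an extension as a code whose punctured code is $C$ itself. Its claim that $\Afold$ is determined up to collineation under broad equivalence is also not justified by Lemma~\ref{lem:agree} and Remark~\ref{rem:welldef}, because a broad equivalence induces a possibly non-affine bijection of message spaces; the paper in fact leaves exactly this as an open rigidity problem.
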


\begin{proof}
($\Leftarrow$)  Let $\Lambda$ be a $(km-m-1)$-flat with $\Lambda\cap\Afold=\emptyset$ and choose an $m\times km$ matrix $M_{n+1}$ of rank $m$ over $\GF{q}$ with $\PGv{\ker M_{n+1}}=\Lambda$.  Let $\varepsilon':\GF{q}^{km}\to\GF{q^m}^{n+1}$, $\varepsilon'(\la)=(\varepsilon(\la),\varepsilon_{n+1}(\la))$, where $\varepsilon_{n+1}$ is the coordinate map determined by $M_{n+1}$, and let $C^+=\varepsilon'(\GF{q}^{km})$.  For $\la\neq\mu$: if $[\la-\mu]\in \Lambda$ then $[\la-\mu]\notin\Afold$, so the fold number of $[\la-\mu]$ is at most $n-d-1$ and $d_H(\varepsilon'(\la),\varepsilon'(\mu))\geq d+1$. If $[\la-\mu]\notin \Lambda$, then the two words differ in the new coordinate and in at least $d$ of the old ones.  Finally, a pair of codewords of $C$ at distance exactly $d$ has its direction in $\Afold$, hence outside of $\Lambda$, so is at distance exactly $d+1$ in $C^+$.  Thus $C^+$ is an additive $(n+1,k,d+1)_{q^m/q}$-extension of $C$.

($\Rightarrow$)  Let $C^+$ be an additive extension of $C$. As in Lemma~\ref{lem:partition}, puncturing is a bijection $C^+\to C$, and it is $\GF{q}$-linear, so $C^+$ has encoding map $\varepsilon'=(\varepsilon,\varepsilon_{n+1})$ for some $\GF{q}$-linear map $\varepsilon_{n+1}:\GF{q}^{km}\to\GF{q^m}$, say with matrix $M_{n+1}$ of rank $r$.  Set $\Lambda'=\PGv{\ker M_{n+1}}$, a flat of dimension $km-r-1$. If some $P\in \Lambda'\cap\Afold$, take $\la\ne 0$ with $[\la]=P$, then the codewords $\varepsilon'(\la)$ and $\varepsilon'(0)$ agree in $n-d$ old coordinates and in the new one, so they are at distance $(n+1)-(n-d+1)=d<d+1$, a contradiction.  Hence $\Lambda'\cap\Afold=\emptyset$, and since $\Afold\neq\emptyset$ this forces $\Lambda'\neq\Pi$, so $r\geq1$ and $\dim \Lambda'=km-r-1\geq km-m-1$, and any $(km-m-1)$-flat contained in $\Lambda'$ is disjoint from $\Afold$.
\end{proof}

\begin{rem}\label{rem:faithfulext}
The proof shows that the new coordinate of an additive extension may always be taken to be faithful (rank $m$).  The null flat of an extending coordinate of rank $r<m$ has dimension $km-r-1>km-m-1$, and any $(km-m-1)$-subflat of it is the null flat of a faithful extending coordinate.  Nothing is therefore lost in restricting attention to faithful extensions.
\end{rem}

\begin{cor}\label{cor:addmax}
For a nondegenerate additive $(n,k,d)_{q^m/q}$-code $C$ the following are equivalent:
\begin{itemize}
\item[(i)] $C$ is additively maximal;
\item[(ii)] every $(km-m-1)$-flat of $\Pi$ meets $\Afold$;
\item[(iii)] the projective system $\G$ of $C$ is complete: no $(m-1)$-flat $\sigma_{n+1}$ can be adjoined to $\G$ so that the resulting system is the projective system of an $(n+1,k,d+1)_{q^m/q}$-code.
\end{itemize}
\end{cor}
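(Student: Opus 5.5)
The plan is to get (i)$\Leftrightarrow$(ii) directly from Proposition~\ref{prop:addext}, and then (ii)$\Leftrightarrow$(iii) by translating between null flats and coordinate flats through the orthogonality $\ker M=\mathrm{rowspace}(M)^\perp$.

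For (i)$\Leftrightarrow$(ii): by definition $C$ is additively maximal exactly when it has no additive extension. By Proposition~\ref{prop:addext}, that happens exactly when no $(km-m-1)$-flat of $\Pi$ is disjoint from $\Afold$. This is the same as saying every $(km-m-1)$-flat meets $\Afold$, which is (ii). Nothing further is needed here.

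For (ii)$\Leftrightarrow$(iii), I would show that adjoining an $(m-1)$-flat $\sigma_{n+1}$ to $\G$ so as to get the projective system of an $(n+1,k,d+1)_{q^m/q}$-code is the same as adjoining the null flat $\Lambda_{n+1}=\sigma_{n+1}^\perp$ to $\Gamma$. The steps are as follows.

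\begin{itemize}
\item Given an $(m-1)$-flat $\sigma_{n+1}=\PGv{U}$, choose a rank-$m$ matrix $M_{n+1}$ with row space $U$. Its kernel gives the $(km-m-1)$-flat $\Lambda_{n+1}=\PGv{U^\perp}$.
\item Conversely, every $(km-m-1)$-flat arises this way, since $\perp$ is a bijection between $(m-1)$-flats and $(km-m-1)$-flats of $\Pi$.
\item Appending the coordinate map $M_{n+1}$ to $\varepsilon$ yields an additive code of length $n+1$ with projective system $\G\cup\{\sigma_{n+1}\}$. By Remark~\ref{rem:welldef} this system depends only on $U$, not on the choice of $M_{n+1}$.
\item The $(\Leftarrow)$ half of the proof of Proposition~\ref{prop:addext} shows this code has minimum distance $d+1$ exactly when $\Lambda_{n+1}\cap\Afold=\emptyset$.
\item For the converse, if the appended code has minimum distance $d+1$, then the fold-point argument in the $(\Rightarrow)$ half of that proof gives $\Lambda_{n+1}\cap\Afold=\emptyset$.
\end{itemize}

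Hence some $\sigma_{n+1}$ can be adjoined if and only if some $(km-m-1)$-flat misses $\Afold$. Negating both sides gives (ii)$\Leftrightarrow$(iii).

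The one point needing care is the phrase ``the projective system of an $(n+1,k,d+1)$-code'' in (iii). It should mean a code whose first $n$ coordinates are $C$ up to the collineation ambiguity of Remark~\ref{rem:welldef}, rather than an arbitrary code realizing the multiset $\G\cup\{\sigma_{n+1}\}$. I would make this explicit using Remark~\ref{rem:welldef}. Any additive code with projective system $\G\cup\{\sigma_{n+1}\}$ is obtained, after a change of encoding by some $A\in\mathrm{GL}(km,q)$ and of alphabet bases, from $\varepsilon$ extended by a coordinate with row space $U$. Its minimum distance is then computed by Remark~\ref{rem:duality}, as $n+1$ minus the maximum number of members contained in a hyperplane, and this depends only on the multiset of flats. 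So minimum distance $d+1$ is exactly the hyperplane condition, which via the pole correspondence of Remark~\ref{rem:duality} is the condition that $\Lambda_{n+1}$ contains no $(n-d)$-fold point of $\Gamma$.

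Remark~\ref{rem:faithfulext} confirms that restricting $\sigma_{n+1}$ to be an $(m-1)$-flat, that is, a faithful new coordinate, loses no generality.
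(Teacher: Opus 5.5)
Your proposal is correct and follows the paper's own argument. You get (i)$\iff$(ii) directly from Proposition~\ref{prop:addext}, and (ii)$\iff$(iii) from the polarity $\sigma_{n+1}=\PGv{W}\leftrightarrow\Lambda_{n+1}=\PGv{W^{\perp}}$, the enlarged system giving minimum distance $d+1$ exactly when no point becomes $(n-d+1)$-fold, i.e.\ when $\Lambda_{n+1}\cap\Afold=\emptyset$; your added observation (via Remark~\ref{rem:duality}) that the minimum distance depends only on the multiset of flats is a useful clarification of (iii) that the paper leaves implicit.
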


\begin{proof}
(i)$\iff$(ii) is Proposition~\ref{prop:addext}.  For (ii)$\iff$(iii), adjoining an $(m-1)$-flat $\sigma_{n+1}=\PGv{W}$ with associated null flat $\Lambda_{n+1}=\PGv{W^{\perp}}$ produces the system of an $(n+1,k,d+1)$-code precisely when no point of $\Pi$ becomes $(n-d+1)$-fold, i.e.\ precisely when $\Lambda_{n+1}\cap\Afold=\emptyset$.
\end{proof}

For $m=1$, Corollary~\ref{cor:addmax} combined with the main theorem of~\cite{MR2529622} says that a linear code is \emph{maximal} iff $\Afold$ is an intersection set (blocking set with respect to hyperplanes) of $\PG{k-1}{q}$.  Whether ``additively maximal'' can be upgraded to ``maximal'' for $m\geq2$ is precisely the question of the introduction.

We close this section with a general necessary condition for extendability, cf.\ \cite[Proposition~2]{MR1869411}.

\begin{prop}\label{prop:mflat}
If $C$ is extendable, then $\Afold$ contains no $m$-flat of $\Pi$.
\end{prop}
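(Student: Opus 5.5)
We argue by contraposition via Lemma~\ref{lem:transpartition}. Suppose $C$ is extendable, so $\AG{km}{q}$ is partitioned into $q^m$ transversals $S_1,\ldots,S_{q^m}$ of $\Afold$. Suppose, for a contradiction, that $\Afold$ contains an $m$-flat $\Theta=\PGv{U}$ with $U\le\GF{q}^{km}$ of dimension $m+1$.

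The key step is to restrict attention to a single coset. Fix any coset $\la+U$; it is an affine $(m+1)$-flat containing $q^{m+1}$ points of $\AG{km}{q}$. For any two distinct points $X,Y\in\la+U$ the direction $[X-Y]$ lies in $\PGv{U}=\Theta\subseteq\Afold$. Since each $S_i$ is a transversal of $\Afold$, it determines no direction in $\Afold$. Hence $S_i$ contains at most one point of $\la+U$.

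Now we count. The $q^m$ classes cover the $q^{m+1}$ points of $\la+U$, and each class contributes at most one point. This gives $q^{m+1}\le q^m$, which is a contradiction. Therefore $\Afold$ contains no $m$-flat, as claimed.

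There is no real obstacle here. The only point needing care is that every pair of distinct points of the affine flat $\la+U$ determines a direction inside $\Theta$. This holds because $X-Y\in U\setminus\{0\}$. Note also that the argument requires no additivity of the extension, since Lemma~\ref{lem:transpartition} characterizes arbitrary extensions. The bound is sharp in the sense that an $(m-1)$-flat in $\Afold$ gives affine $m$-flats of exactly $q^m$ points, so no contradiction arises in that case.
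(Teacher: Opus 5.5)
Your proof is correct, and it is the paper's pigeonhole argument run in the mirror direction. The paper fixes one class $T$ of size at least $q^{km-m}$ and distributes it over the $q^{km-m-1}$ cosets of $U$, whereas you fix one coset of size $q^{m+1}$ and distribute it over the $q^m$ classes; both are local forms of the observation that $q^m$ classes, each meeting each coset in at most one point, could cover only $q^{km-1}<q^{km}$ points. Your version skips the averaging step that produces a large class, while the paper's gives the slightly stronger fact that a single transversal of size $q^{km-m}$ already determines a direction in every $m$-flat, matching the single-large-transversal viewpoint used later in Theorems~\ref{thm:93} and~\ref{thm:43}.
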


\begin{proof}
Let $F$ be an $m$-flat of $\Pi$ and let $T$ be a transversal of $\Afold$ as in Lemma~\ref{lem:transpartition}, with $|T|\ge q^{km-m}$.  The affine $(m+1)$-flats of $\Sigma$ whose set of infinite points is $F$ partition $\AG{km}{q}$ into $q^{km-m-1}<|T|$ classes. Two points of $T$ thus lie in a common such flat, and their direction lies in $F$, so $F\cap D(T)\neq\emptyset$. By definition of a transversal, $D(T)\cap\Afold=\emptyset$, whence $F\not\subseteq\Afold$.
\end{proof}

\section{Extendable $(n,2,d)_{9/3}$-Codes and $(n,2,d)_{4/2}$-Codes are Additively Extendable}
\label{sec:93}

In this section we prove the additive analogue of the linear extensions theorem for the smallest properly additive parameters: $(n,2,d)_{9/3}$- and $(n,2,d)_{4/2}$-codes.  The main work concerns the ternary case $q=3$, $m=2$, $k=2$.  Here $\Sigma=\PG{4}{3}$, $\Pi=\PG{3}{3}$, the null flats are lines of $\Pi$, and the fibres of a new faithful coordinate have $q^{km-m}=9$ points (cf.\ Remark~\ref{rem:faithfulext}); the binary case then follows by the same argument in simpler form.

\begin{lem}\label{lem:dir9}
Let $T$ be a set of $9$ points of $\AG{4}{3}$.  Then $D(T)$ contains a line of $\Pi=\PG{3}{3}$.
\end{lem}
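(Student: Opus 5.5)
The plan is to argue by contradiction. I suppose $|T|=9$ and that $D(T)$ contains no line of $\Pi=\PG{3}{3}$. Then the complement $B=\Pi\setminus D(T)$ meets every line of $\Pi$, i.e.\ $B$ is a blocking set with respect to lines of $\PG{3}{3}$. By the Bose--Burton theorem $|B|\ge q^2+q+1=13$, with equality only if $B$ is a plane. Since $|\Pi|=40$, this gives $|D(T)|\le 27$. The $\binom92=36$ pairs of $T$ therefore determine at most $27$ directions. The rest of the proof derives a contradiction from this ``few directions'' condition, splitting cases according to the dimension of the affine span $\langle T\rangle$.

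\emph{Low-dimensional cases.} If $\langle T\rangle$ is an affine plane, then $T$ is the whole plane $\AG{2}{3}$. It then determines all $4$ points of its line at infinity, so $D(T)$ contains a line and we are done. If $\langle T\rangle$ is an affine $3$-flat $\AG{3}{3}$, then $D(T)$ lies in its plane at infinity $\pi\cong\PG{2}{3}$, and we need a full line of $\pi$ inside $D(T)$. Here I would use the standard counting for $q^2$ points in $\AG{3}{q}$. For a point $P\in\pi$ not in $D(T)$, the $9$ parallel lines with direction $P$ each meet $T$ in at most one point, so they meet $T$ in exactly one point each. Hence, for every line $\ell\ni P$ of $\pi$, each of the $3$ parallel affine planes through $\ell$ contains exactly $3$ points of $T$. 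I would show that the set $N=\pi\setminus D(T)$ of non-determined directions is contained in a line, or is small. If some line $\ell$ of $\pi$ avoided $N$ we are done. Otherwise $N$ blocks the lines of $\PG{2}{3}$, so $|N|\ge 4$ and $N$ contains a line by Bose--Burton in the plane. I would then derive a contradiction from $4$ collinear non-determined directions: each plane through that line at infinity meets $T$ in a set of $3$ points determining none of these $4$ directions, which is impossible in $\AG{2}{3}$ for $3$ points (they would then be a line with a direction on $\ell$).

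\emph{Full-dimensional case.} This case, where $\langle T\rangle=\AG{4}{3}$, is the main obstacle. My first attempt is a counting argument. For a point $P\in B$, the $27$ parallel lines with direction $P$ meet $T$ at most once each. For a point $P\in D(T)$, let $t_P$ be the number of pairs of $T$ with direction $P$; then $\sum_P t_P=36$, and $t_P\in\{1,2,3,4\}$ with $t_P=3$ occurring exactly for collinear triples. Projecting $T$ from a point $P\in B$ onto a quotient $\AG{3}{3}$ gives $9$ distinct points. Using the Bose--Burton structure of $B$ (a set of at least $13$ points that meets every line and contains either a plane or substantially more points), I would pick a plane $\pi\subseteq\Pi$ and count, over the $3$ affine $3$-flats through $\pi$, how the $9$ points distribute. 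I would combine this with the lower-dimensional case applied to the projection to force a line into $D(T)$.

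If this counting does not close cleanly, the fallback is a finite verification. Up to the action of $\mathrm{AGL}(4,3)$ one may assume $T$ contains $0$ and the $4$ standard basis vectors, which leaves only $\binom{76}{4}$ choices of the remaining points, further reduced by the stabilizer. Checking line containment of $D(T)$ in $\PG{3}{3}$ for these is a routine exhaustive computation.
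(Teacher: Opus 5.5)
Your proposal has genuine gaps: the main case is never proved, and one of the smaller cases contains a false inference.

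\textbf{The gaps.} The full-dimensional case, which you yourself call the main obstacle, is only gestured at. The counting sketch never reaches a contradiction, and its one concrete claim is false: $t_P=3$ also arises from three pairwise disjoint parallel $2$-secants, and $t_P$ can exceed $4$ (two parallel lines contained in $T$ already give $t_P\ge 6$). The exhaustive search over $\binom{76}{4}$ completions is feasible, but you do not carry it out, so it is a plan rather than an argument. The $3$-flat case has a wrong step. A set $N$ meeting every line of $\mathrm{PG}(2,3)$ need not contain a line: Bose--Burton gives only $|N|\ge 4$, with equality exactly when $N$ is a line. The projective triangle is a $6$-point blocking set of $\mathrm{PG}(2,3)$ containing no line. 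So the subcase where $N$ is a nontrivial blocking set is left open.

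\textbf{The missing observation.} One fact finishes the proof and makes the case split on $\langle T\rangle$ unnecessary. Suppose an affine plane $\alpha$ contains at least four points of $T$. For each point of its line at infinity $\ell_\alpha$, the lines of $\alpha$ in that direction form a parallel class of only three lines. So two of the four points share one of them, and hence $\ell_\alpha\subseteq D(T)$.

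\textbf{Finishing from your own bound.} You already have $|D(T)|\le 27<36=\binom92$, so two distinct pairs of $T$ determine the same direction. If the pairs share a point, $T$ contains a full affine line. That line, together with any further point of $T$, spans a plane containing four points of $T$. If the pairs are disjoint, they lie on two distinct parallel lines, which again span a plane containing four points of $T$. Either way $D(T)$ contains a line, contradicting your assumption.

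This is essentially the paper's proof, run contrapositively. If no plane contains four points of $T$, there are no $3$-secants and no parallel secants. Then the $36$ pairs give $36$ distinct directions, leaving only $4$ undetermined points to block every line of $\mathrm{PG}(3,3)$, against the Bose--Burton bound of $13$.
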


\begin{proof}
Call a line of $\AG{4}{3}$ a \textit{secant} of $T$ if it contains at least two points of $T$; since an affine line over $\GF{3}$ has exactly $3$ points, a secant contains two or three points of $T$ (a \textit{$2$-secant} or a \textit{$3$-secant}).

\smallskip
\noindent\textit{Claim:}  If some plane of $\AG{4}{3}$ contains at least four points of $T$, then $D(T)$ contains a line of $\Pi$.

Let $\alpha$ be such a plane and let $\ell_\alpha$ be the line of $\Pi$ in which the projective closure of $\alpha$ meets $\Pi$.  For each point $[u]\in\ell_\alpha$, the lines of $\alpha$ with direction $[u]$ form a parallel class with exactly $3$ members.  Two of the four points thus lie on a common member, so $[u]\in D(T)$.  Hence $\ell_\alpha\subseteq D(T)$.

\smallskip
\noindent Assume now, for a contradiction, that $D(T)$ contains no line of $\Pi$.  By the Claim, no plane contains four points of $T$.  It follows that $T$ has no $3$-secant, so each of the $\binom{9}{2}=36$ pairs of points of $T$ lies on its own $2$-secant, and these $36$ secants are pairwise distinct.  Moreover no two of them have a common direction since two such secants would put four points of $T$ in a plane.  The $36$ secants therefore determine $36$ distinct directions, and $B:=\Pi\setminus D(T)$ consists of exactly $40-36=4$ points.

\smallskip
Since $D(T)$ contains no line, every line of $\Pi$ contains a point of $B$.  Choose a point $P\in\Pi\setminus B$. The $13$ lines of $\Pi$ through $P$ meet pairwise only in $P$, and each contains a point of $B$; hence $|B|\geq13$, contradicting $|B|=4$. (This is the $\PG{3}{3}$ case of the theorem of Bose and Burton~\cite{BoseBurton}: a point set meeting every line of $\PG{3}{q}$ has at least $q^2+q+1$ points.)

The contradiction shows that $D(T)$ contains a line of $\Pi$.
\end{proof}

\begin{thm}\label{thm:93}
Let $C$ be a nondegenerate additive $(n,2,d)_{9/3}$-code.  Then $C$ is extendable if and only if $C$ admits an additive extension.
\end{thm}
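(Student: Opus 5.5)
One direction is immediate: an additive extension is in particular an extension. For the converse, assume $C$ is extendable and argue toward Proposition~\ref{prop:addext}. Here $q=3$, $m=2$, $k=2$, so $km=4$, $\Pi=\PG{3}{3}$, and the relevant $(km-m-1)$-flats are lines of $\Pi$. It therefore suffices to produce a line of $\Pi$ disjoint from $\Afold$.

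First, apply Lemma~\ref{lem:transpartition}. It says $\AG{4}{3}$ partitions into $q^m=9$ transversals of $\Afold$. Since $|\AG{4}{3}|=81$, some class $T$ has at least $81/9=9$ points (the remark after Lemma~\ref{lem:transpartition}). Because subsets of transversals are transversals, we may take $T$ to have exactly $9$ points with $D(T)\cap\Afold=\emptyset$.

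Second, apply Lemma~\ref{lem:dir9} to this $T$. It gives a line $\ell\subseteq D(T)$ of $\Pi$. Since $D(T)\cap\Afold=\emptyset$, we get $\ell\cap\Afold=\emptyset$. Then Proposition~\ref{prop:addext} yields an additive extension, and by Remark~\ref{rem:faithfulext} we may take it faithful.

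The only substantive step is Lemma~\ref{lem:dir9}, which is already established, so the ternary case is essentially a two-line deduction. The $(n,2,d)_{4/2}$ case announced in the section title would follow the same way. It needs the binary analogue: any set $T$ of $q^{km-m}=4$ points of $\AG{4}{2}$ determines a line of $\PG{3}{2}$ as a set of directions. There every secant is a $2$-secant, and I would argue this directly. If the four points lie in a plane, the three directions form the line at infinity of that plane. Otherwise the six pairs give six distinct directions, leaving $15-6=9$ points outside $D(T)$. For these six directions to contain no line, the $9$ remaining points must block every line, which is consistent with Bose--Burton's lower bound of $7$. So in the binary case I expect a short explicit check instead: taking the points $0,e_1,e_2,e_3$, the directions $e_1,e_2,e_1+e_2$ already form a line. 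That check is the main (mild) obstacle for the companion binary statement; the theorem as stated needs only the chain Lemma~\ref{lem:transpartition}, Lemma~\ref{lem:dir9}, Proposition~\ref{prop:addext}.
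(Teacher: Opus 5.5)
Your argument is correct and matches the paper's proof exactly: Lemma~\ref{lem:transpartition} gives a transversal $T$ of $\Afold$ with $|T|=9$, Lemma~\ref{lem:dir9} gives a line $\ell\subseteq D(T)$, so $\ell\cap\Afold=\emptyset$, and Proposition~\ref{prop:addext} then supplies the additive extension. (Your binary aside lies outside this statement, and the single example $\{0,e_1,e_2,e_3\}$ does not by itself cover every $4$-set; the general fact, which is what the paper uses, is that any three distinct points $a,b,c$ of $T\subseteq\AG{4}{2}$ determine $[a+b],[a+c],[b+c]$, and these three points form a line of $\PG{3}{2}$.)
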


\begin{proof}
One implication is trivial.  For the other, suppose $C$ is extendable. By Lemma~\ref{lem:transpartition} there is a partition of $\AG{4}{3}$ into $9$ transversals of $\Afold$; one class contains at least $81/9=9$ points, and any $9$ of them form a transversal $T$ of $\Afold$ with $|T|=9$.  By Lemma~\ref{lem:dir9}, $D(T)$ contains a line $\ell$ of $\Pi$; since $D(T)\cap\Afold=\emptyset$ we get $\ell\cap\Afold=\emptyset$.  Now Proposition~\ref{prop:addext} (with $km-m-1=1$) provides an additive extension. 
\end{proof}

\begin{cor}\label{cor:93max}
A nondegenerate additive $(n,2,d)_{9/3}$-code is maximal if and only if it is additively maximal, if and only if every line of $\PG{3}{3}$ meets $\Afold$, if and only if its projective system of lines in $\PG{3}{3}$ is complete.
\end{cor}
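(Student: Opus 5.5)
The plan is to chain together equivalences that are already available. For $q=3$, $m=2$, $k=2$ we have $\Pi=\PG{3}{3}$ and $km-m-1=1$, so the $(km-m-1)$-flats of $\Pi$ are exactly its lines.

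First, by Corollary~\ref{cor:addmax}, conditions (i), (ii) and (iii) are equivalent for $C$. Specialised to this case, they say:
\begin{itemize}
\item[(i)] $C$ is additively maximal;
\item[(ii)] every line of $\PG{3}{3}$ meets $\Afold$;
\item[(iii)] the projective system $\G$ of $C$ is complete, where $\G$ is a system of lines of $\PG{3}{3}$ once $C$ is faithful, or at least of flats of dimension at most $1$.
\end{itemize}
This already gives the last two equivalences in the statement.

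Second, I will show that ``maximal'' is equivalent to ``additively maximal''. Maximal trivially implies additively maximal, since an additive extension is in particular an extension. For the converse I argue by contraposition. Suppose $C$ is not maximal, so $C$ is extendable. Then Theorem~\ref{thm:93} gives an additive extension, so $C$ is not additively maximal.

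All four conditions are therefore equivalent. The only point needing care is the phrase ``projective system of lines''. If $C$ is unfaithful, some members of $\G$ are points rather than lines. Completeness in Corollary~\ref{cor:addmax}(iii) refers to adjoining an $(m-1)$-flat, that is, a line, and this is independent of whether the existing members are lines. So I will simply read ``projective system of lines'' with this convention, or restrict to the faithful case as in Remark~\ref{rem:faithfulext}.

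There is no real obstacle: the substantive work was done in Lemma~\ref{lem:dir9} and Theorem~\ref{thm:93}.
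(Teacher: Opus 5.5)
Your proposal is correct and is essentially the paper's own (implicit) argument: Corollary~\ref{cor:addmax} with $km-m-1=1$ gives the equivalence of additive maximality, the line-blocking property of $\Afold$, and completeness of $\G$, and Theorem~\ref{thm:93} upgrades ``additively maximal'' to ``maximal''. Your remark on reading ``projective system of lines'' for unfaithful codes is a sensible clarification, since completeness concerns adjoining a line regardless of the existing members.
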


Lemma~\ref{lem:dir9} also holds, trivially, over $\GF{2}$, by the same pigeonhole observation: if $T\subseteq\AG{4}{2}$ with $|T|=4$, then any three points of $T$ are non-collinear (an affine line over $\GF{2}$ has only two points), hence are $q+1=3$ points of the affine plane they span, whose line at infinity (a line of $\PG{3}{2}$) is therefore contained in $D(T)$.  The same argument as in Theorem~\ref{thm:93} then provides the following:

\begin{prop}\label{prop:42max}
	A nondegenerate additive $(n,2,d)_{4/2}$-code is maximal if and only if it is additively maximal, if and only if every line of $\PG{3}{2}$ meets $\Afold$, if and only if its projective system of lines in $\PG{3}{2}$ is complete.
\end{prop}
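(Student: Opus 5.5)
The plan mirrors the proof of Theorem~\ref{thm:93} and Corollary~\ref{cor:93max}, with $q=2$, $m=2$, $k=2$. Here $\Sigma=\PG{4}{2}$, $\Pi=\PG{3}{2}$, the null flats of faithful coordinates are lines of $\Pi$, and $km-m-1=1$.

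\emph{Step 1: the direction lemma.} I first verify the binary analogue of Lemma~\ref{lem:dir9}: if $T\subseteq\AG{4}{2}$ with $|T|=4$, then $D(T)$ contains a line of $\Pi$. Take three points $X,Y,Z\in T$. They are not collinear, since affine lines over $\GF{2}$ have two points, so they span an affine plane $\alpha$ whose closure meets $\Pi$ in a line $\ell_\alpha$ with three points. The three directions $[X-Y],[Y-Z],[X-Z]$ are distinct, because over $\GF{2}$ equal directions mean equal difference vectors, and that would force two of $X,Y,Z$ to coincide. All three lie on $\ell_\alpha$, so $\ell_\alpha\subseteq D(T)$.

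\emph{Step 2: extendable implies additively extendable.} Suppose $C$ is extendable. By Lemma~\ref{lem:transpartition}, $\AG{4}{2}$ (with $16$ points) is partitioned into $4$ transversals of $\Afold$, so some class has at least $4$ points. Any $4$ of them form a transversal $T$. By Step~1 there is a line $\ell\subseteq D(T)$, and $D(T)\cap\Afold=\emptyset$ gives $\ell\cap\Afold=\emptyset$. Proposition~\ref{prop:addext} then yields an additive extension. Since additive extensions are extensions, maximal $\iff$ additively maximal.

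\emph{Step 3: the remaining equivalences.} Additive maximality is equivalent to every line of $\PG{3}{2}$ meeting $\Afold$, and to completeness of $\G$; both follow directly from Corollary~\ref{cor:addmax} with $km-m-1=1$.

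No step is a real obstacle. The only care needed is in Step~1: over $\GF{2}$ the pigeonhole argument of the Claim in Lemma~\ref{lem:dir9} is replaced by the observation that three points of an affine plane over $\GF{2}$ already determine all $q+1=3$ directions of that plane.
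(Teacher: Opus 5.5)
Your proposal is correct and matches the paper's argument. The paper also notes that any three points of a $4$-set $T\subseteq\AG{4}{2}$ are non-collinear, so they span an affine plane whose line at infinity lies in $D(T)$; it then reruns the proof of Theorem~\ref{thm:93} (a class of size at least $4$, then Proposition~\ref{prop:addext}) and uses Corollary~\ref{cor:addmax} for the remaining equivalences. Your explicit check that the three difference directions are distinct is just an unpacked form of the paper's pigeonhole remark.
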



In this section we considered cases with $m=k=2$.  The corresponding direction problem for $m=2$ and $k\geq3$ concerns sets of $q^{2k-2}$ points of $\AG{2k}{q}$ and $(2k-3)$-flats of $\PG{2k-1}{q}$.  In the next section we settle the first instance, $(q,m,k)=(2,2,3)$, in the affirmative.

\section{Extendable $(n,3,d)_{4/2}$-Codes are Additively Extendable}
\label{sec:43}

In this section we take $q=2$, $m=2$, $k=3$. In this setting, additive codes are $\GF2$-linear subspaces of $\GF4^n$ with $2^6$ codewords, $\Sigma=\PG62$, $\Pi=\PG52$, null flats are solids ($3$-flats) of $\Pi$, and the fibres of a new faithful coordinate have $q^{km-m}=16$ points. The direction result required is the following analogue of Lemma~\ref{lem:dir9}, the proof of which is the main work of the section.

\begin{thm}\label{thm:dir16}
Let $T$ be a set of $16$ points of $\AG62$.  Then $D(T)$ contains a solid of $\Pi=\PG52$.
\end{thm}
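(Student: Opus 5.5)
The plan is to work in $\GF{2}^6$ and write $D(T)=\{[x+y]:x\neq y\in T\}$. Let $B=\Pi\setminus D(T)$. The statement says exactly that $B$ is \emph{not} a blocking set with respect to solids of $\PG52$. As in Lemma~\ref{lem:dir9}, the argument combines a pigeonhole criterion with a blocking-set bound of Bose--Burton type. I would proceed by contradiction and assume that $B$ meets every solid.

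\textbf{Step 1: the pigeonhole criterion.} If some affine $4$-flat $\alpha$ of $\AG62$ contains at least $9$ points of $T$, then the solid $\alpha_\infty$ lies in $D(T)$. The reason is that each parallel class of lines of $\alpha$ has only $8$ members, so one of them carries two points of $T$. Hence we may assume every coset of every $4$-dimensional subspace meets $T$ in at most $8$ points.

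\textbf{Step 2: $B$ contains a line.} By Bose--Burton, a set meeting every solid of $\PG52$ has at least $q+1=3$ points, with equality only for a line. I would first try to push this further and show that $B$ must contain a full line $\PGv{W}$, with $\dim W=2$. The bounded coset intersections from Step 1 should make $|D(T)|$ large, and hence $B$ small. Concretely, I would use the Fourier identity $\sum_{\chi}|\hat{1}_T(\chi)|^2=2^6\cdot16$ together with the bound $|T\cap(a+U)|\le 8$ to count the repeated differences. A small blocking set for solids in $\PG52$ that contains no line should then be excluded by a counting argument through a point off $B$, in the style of the last paragraph of Lemma~\ref{lem:dir9}.

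\textbf{Step 3: structure when a line $\PGv{W}\subseteq B$.} In this case no two points of $T$ lie in a common coset of $W$. Since there are exactly $2^6/4=16$ such cosets, $T$ meets each coset exactly once. Choose a complement $V\cong\GF{2}^4$ to $W$; then $T=\{(x,f(x)):x\in V\}$ for some function $f:V\to W$. It follows that $D(T)=\{[(u,f(x)+f(x+u))]:u\neq0,\ x\in V\}$. A solid disjoint from $\PGv{W}$ is the graph of a linear map $L:V\to W$. So it suffices to find a linear $L$ such that $L(u)\in\Delta_u f:=\{f(x)+f(x+u):x\in V\}$ for every $u\neq0$.

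\textbf{Step 4: choosing $L$ (the main obstacle).} Each set $\Delta_u f$ is nonempty. It is also a union of values taken over pairs $\{x,x+u\}$, and by Step 1 it cannot be too small. I would count the linear maps $L$, of which there are $4^4=256$, that fail at some $u$, and show that fewer than $256$ fail. If $|\Delta_u f|\ge2$ for every $u$, then for each fixed $u$ at most half of the $L$ fail there, and a union bound alone is not enough. So I would refine the count using the linearity constraints across the $15$ values of $u$: $L$ is determined by its values on a basis, and it must be consistent on each $3$-subset $\{u,v,u+v\}$. Values $u$ with $|\Delta_u f|=1$ force $f$ to be affine along $u$; these directions span a subspace on which $f$ is affine, and there $L$ can be taken to be the linear part of $f$. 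The remaining directions then have to be handled by a short case analysis on the dimension of that subspace.
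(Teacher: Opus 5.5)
Your Steps~1 and~3 are correct, but the two steps that carry the weight are not established, and Step~2 rests on a misquoted theorem. Bose--Burton says that a set meeting every solid of $\PG{5}{2}$ has at least $7$ points, with equality only for a \emph{plane}. A line need not meet a solid at all, since a $2$-dimensional and a $4$-dimensional subspace of $\GF{2}^6$ can be complementary; your own Step~3 uses exactly the solids skew to $\PGv{W}$.

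More seriously, nothing in the proposal shows that $B$ contains a line. The count in Lemma~\ref{lem:dir9} worked because the contradiction hypothesis forced all $36$ directions to be distinct, which pinned $|B|$ down to $4$. Here $120>63$, so directions necessarily repeat. Counting pairs gives only $|D(T)|\ge 120/8=15$, because pairs with a common difference are disjoint. To reach the equality case of Bose--Burton you would need $|D(T)|\ge 56$, and neither Step~1 nor the Parseval identity, as sketched, supplies any such bound. The case in which $B$ contains no line (a cap) yet meets every solid is likewise dismissed without argument.

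Step~4 is then the entire theorem in the graph case, and it is left as an intention. For a fixed $u$ with $|\Delta_u f|\ge2$, at most $128$ of the $256$ maps fail, so the union bound allows up to $15\cdot128$ failures. The ``short case analysis'' that is supposed to replace it is not given.

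The idea you are missing is to \emph{use} the forced repetition of directions rather than fight it. Two disjoint pairs of $T$ with a common direction form an affine plane $\alpha\subseteq T$, whose line at infinity $\ell$ lies in $D(T)$ (the opposite of your $\PGv{W}\subseteq B$). Pass to the quotient at $\ell$, an $\AG{4}{2}$ with $\PG{3}{2}$ at infinity. Either some solid through $\ell$ lies in $D(T)$, or the directions from $\alpha$ to the other occupied cosets of its translation space form a cap of $\PG{3}{2}$. The cap case forces three things: each such coset contains $1$ or $2$ points of $T$; between $6$ and $8$ of them are occupied; and every $2$-point coset meets $T$ in a line through a common point $P\in\ell$ at infinity. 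Quotienting at $P$ reduces everything to the fact that $7$ points of $\AG{5}{2}$ determine a plane of $\PG{4}{2}$ (Lemmas~\ref{lem:seven} and~\ref{lem:BC}). That plane lifts to a solid through $P$ contained in $D(T)$.
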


Throughout the section the ground field is $\GF2$.  We identify affine spaces $\AG{r}{2}$ with $\GF2^r$, use the notation $D(\cdot)$ of Section~\ref{sec:ext} in any such space (the directions lying in the hyperplane at infinity $\PG{r-1}{2}$). Recall we identify a nonzero vector $v$ with the projective point $[v]$, so that direction sets may be read as sets of nonzero vectors.  Over $\GF2$ an affine line has exactly two points, so $D(S)=\{[\la+\mu]:\la\neq\mu\in S\}$, and $D(S)$ is invariant under translation of $S$; a $2$-flat has four points, and two distinct affine lines with a common point at infinity are disjoint, their union being a $2$-flat.

We begin with an observation regarding seven-point sets in dimension five.

\begin{lem}\label{lem:seven}
If $S\subseteq\AG52$ with $|S|=7$, then $D(S)$ contains a plane of $\PG42$.
\end{lem}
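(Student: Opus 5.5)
The plan is to work in $\GF2^5$ with the sum description of directions, $D(S)=\{[\la+\mu]:\la\neq\mu\in S\}$, and to find an affine dependency among the seven points. The dependency either produces an affine $2$-flat inside $S$ or an independent triple whose sum is again a direction. Either way we exhibit all seven points of a plane of $\PG42$ inside $D(S)$.

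\emph{Step 1 (an affine dependency).} Write $S=\{x_1,\ldots,x_7\}$ and consider the seven vectors $(x_i,1)\in\GF2^6$. They are linearly dependent, so some nonempty subset $I$ satisfies $\sum_{i\in I}x_i=0$ with $|I|$ even. Since the $x_i$ are distinct, $|I|\neq2$, so $|I|\in\{4,6\}$. Choose such an $I$ of minimal size.

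\emph{Step 2 (the case $|I|=4$).} Here $Q=\{x_i:i\in I\}$ is an affine $2$-flat. Since $D(S)$ is invariant under translation, we may translate so that $Q=\{0,a,b,a+b\}$ with $a,b$ independent. Because $|S|=7>4$, there is a point $y\in S\setminus Q$. As $\langle a,b\rangle=Q$, the vector $y$ lies outside $\langle a,b\rangle$, so $a,b,y$ are independent. The plane $\PGv{\langle a,b,y\rangle}$ has seven points, and each lies in $D(S)$:
\begin{itemize}
\item[] the points $a$, $b$, $a+b$ are differences within $Q$;
\item[] the points $y$, $y+a$, $y+b$, $y+a+b$ are the differences $y-q$ for $q\in Q$.
\end{itemize}

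\emph{Step 3 (the case $|I|=6$).} Minimality means that no four points of $S$ sum to zero. Say $I=\{1,\ldots,6\}$, and translate so that $x_1=0$. Put $a=x_2$, $b=x_3$, $c=x_4$; these are distinct and nonzero. If they were dependent we would have $a+b+c=0$, i.e.\ $\{x_1,x_2,x_3,x_4\}$ would be a $4$-set summing to zero, contradicting minimality. Hence $a,b,c$ are independent. The plane $\PGv{\langle a,b,c\rangle}$ again lies in $D(S)$:
\begin{itemize}
\item[] the points $a$, $b$, $c$ are differences with $x_1=0$;
\item[] the points $a+b$, $a+c$, $b+c$ are differences among $x_2,x_3,x_4$;
\item[] the point $a+b+c$ equals $x_5+x_6$, because $x_1+\cdots+x_6=0$ and $x_1=0$.
\end{itemize}

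The only step needing real care is the dichotomy in Step 1: using minimality of $I$ both to exclude $|I|=2$ and to force independence of $a,b,c$ in Step 3. Once that is in place, the remaining verifications are the direct listings above.
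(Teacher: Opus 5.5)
Your proof is correct, and in the case where $S$ contains no affine plane it takes a genuinely different route from the paper.

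Your Step~2 matches the paper's first case: an affine plane in $S$ plus one further point.

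For the Sidon case (no four points of $S$ summing to zero), the paper argues by counting. It shows that the $\binom74=35$ four-subset sums take at least $18$ distinct nonzero values, since no value is attained three times, while only $10$ nonzero vectors of $\GF2^5$ lie outside $D(S)$. Hence some four-subset $A=\{a,b,c,e\}$ has its sum in $D(S)$. An auxiliary observation $(\ast)$ then shows that $D(S)$ contains the plane $\PGv{\langle a+b,a+c,a+e\rangle}$.

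You bypass the count entirely. Lifting the points to $(x_i,1)\in\GF2^6$ forces an even zero-sum subset. When no four points sum to zero, the minimal such subset has size six, which says directly that $x_1+x_2+x_3+x_4=x_5+x_6\in D(S)$. That is exactly the hypothesis of the paper's $(\ast)$, and your translation $x_1=0$ makes the plane $\langle x_2,x_3,x_4\rangle$ explicit.

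What your route buys:
\begin{itemize}
\item It is shorter and purely structural.
\item It shows exactly where the ambient dimension $5$ enters: seven points of $\AG52$ are affinely dependent.
\item It fits Remark~\ref{rem:sevensharp}: the set $\{0,e_1,\ldots,e_6\}\subseteq\AG62$ is affinely independent, so no such dependency, and no plane, need exist in dimension $6$.
\end{itemize}
The paper's multiplicity analysis is in fact detecting the same dependency indirectly. Two distinct four-subsets with equal sum force $\sum_{u\in S}u=w$, which is a six-term zero sum $\sum_{u\in S\setminus\{w\}}u=0$. The counting then establishes more than is needed.
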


\begin{proof}
A $4$-subset $\{a,b,c,e\}$ of $\GF2^5$ is an \textit{affine plane} if $a+b+c+e=0$. Call $S$ \textit{Sidon} if it contains no affine plane, that is, if the $\binom72=21$ pairwise sums of distinct elements of $S$ are pairwise distinct.  The proof rests on the following observation.

\smallskip
\noindent$(\ast)$ \textit{If $a,b,c,e\in S$ are distinct, $a+b+c+e\neq0$, and $a+b+c+e\in D(S)$, then $D(S)$ contains the plane $\PGv{\langle a+b,\,a+c,\,a+e\rangle}$.}

\smallskip
Indeed, put $d_1=a+b$, $d_2=a+c$, $d_3=a+e$.  A single $d_i$ is nonzero since $a,b,c,e$ are distinct; a sum of two of them is one of $b+c$, $b+e$, $c+e$, again nonzero; and $d_1+d_2+d_3=a+b+c+e\neq0$ by hypothesis.  Hence $\langle d_1,d_2,d_3\rangle$ is $3$-dimensional, with nonzero vectors 
\[
d_1,\; d_2,\; d_3,\quad d_1+d_2=b+c,\; d_1+d_3=b+e,\; d_2+d_3=c+e,\quad d_1+d_2+d_3=a+b+c+e,
\]
the first six being sums of two distinct elements of $S$, hence in $D(S)$, and the seventh in $D(S)$ by hypothesis.  This proves $(\ast)$.

\smallskip
Suppose first that $S$ contains an affine plane $\{a,b,c,e\}$, so $e=a+b+c$.  Choose $t\in S\setminus\{a,b,c,e\}$ (possible as $|S|=7$). Then $a+b+c+t=e+t$ is nonzero and is a sum of two distinct elements of $S$, so $(\ast)$ applies to $a,b,c,t$.

Suppose now that $S$ is Sidon.  No $4$-subset of $S$ sums to zero, so by $(\ast)$ it suffices to find a $4$-subset of $S$ whose sum lies in $D(S)$.  Consider the sum map $\sigma(A)=\sum_{u\in A}u$ on the $\binom74=35$ four-subsets $A\subseteq S$.  Every value of $\sigma$ is nonzero, and no value is attained by three distinct $4$-subsets. Indeed, if $\sigma(A)=\sigma(B)$ with $A\neq B$, then $\sum_{u\in A\triangle B}u=0$ with $|A\triangle B|\in\{2,4,6\}$; size $2$ would force two equal elements and size $4$ an affine plane, so $|A\triangle B|=6$, whence $|A\cap B|=1$ and $A\cup B=S$. Writing $A\cap B=\{w\}$, the vanishing sum reads $\sum_{u\in S}u=w$, so $w$ is determined by $S$ alone and $B=\{w\}\cup(S\setminus A)$ is determined by $A$.  Consequently $\sigma$ takes at least $\lceil35/2\rceil=18$ distinct nonzero values.  On the other hand, since $S$ is Sidon, $D(S)$ consists of exactly $21$ of the $31$ nonzero vectors of $\GF2^5$, so only $10$ nonzero vectors lie outside $D(S)$.  Since $18>10$, some value $\sigma(A)$ lies in $D(S)$, and $(\ast)$ applies to $A$.
\end{proof}

\begin{rem}\label{rem:sevensharp}
The conclusion of Lemma~\ref{lem:seven} does not hold if the ambient dimension is raised from $5$ to $6$, so the hypothesis $S\subseteq\AG52$ cannot be relaxed.   In $\GF2^6$ a Sidon $7$-set determines only $21$ of the $63$ nonzero directions, so it avoids $63-21=42\geq18$ of them, and the final count  in the proof yields no contradiction.  This is not merely a shortcoming of the proof: the Sidon set $S=\{0,e_1,\ldots,e_6\}$ consisting of the zero vector and the standard basis of $\GF2^6$ determines precisely the points $[v]$ with $v$ of Hamming weight $1$ or $2$, and no plane of $\PG52$ consists of such points, since every $3$-dimensional subspace of $\GF2^6$ contains a vector of weight $3$ or $4$ (Lemma~\ref{lem:block34}, proved in Section~\ref{sec:counter}).  This set reappears, inside the transversal $T=\{0,e_1,\ldots,e_6,\onevec\}$ of Proposition~\ref{prop:ext}, in the counterexample of Section~\ref{sec:counter}.
\end{rem}

\begin{lem}\label{lem:BC}
Let $B\subseteq E\subseteq\AG52$ with $|B|+|E|=16$ and $6\le|B|\le8$, and let
\[
D(B,E)\;=\;\bigl\{[\beta+\gamma] : \beta\in B,\ \gamma\in E,\ \beta\neq\gamma\bigr\}.
\]
Then $D(B,E)$ contains a plane of $\PG42$.
\end{lem}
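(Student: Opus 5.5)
By Lemma~\ref{lem:seven} the cases $|B|\in\{7,8\}$ are immediate. There $D(B,E)\supseteq D(B)$, since $B\subseteq E$ and $\beta\neq\gamma$ is the only restriction, and $D(B)$ already contains a plane of $\PG42$. So everything reduces to the case $|B|=6$, $|E|=10$. Put $R=E\setminus B$, so $|R|=4$. I will run a mixed version of the observation $(\ast)$ from the proof of Lemma~\ref{lem:seven}:

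\emph{$(\ast\ast)$ If $a,b,c\in B$ are distinct, $t\in E\setminus\{a,b,c\}$, $a+b+c+t\neq0$ and $a+b+c+t\in D(B,E)$, then $D(B,E)$ contains the plane $\PGv{\langle a+b,\,a+c,\,a+t\rangle}$.}

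The proof is verbatim that of $(\ast)$. The six vectors $a+b$, $a+c$, $a+t$, $b+c$, $b+t$, $c+t$ are all of the form $\beta+\gamma$ with $\beta\in B$, $\gamma\in E$, because each pair contains an element of $\{a,b,c\}\subseteq B$. The seventh vector is $a+b+c+t$, which lies in $D(B,E)$ by hypothesis.

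First suppose $B$ contains an affine plane $\{a,b,c,e\}$, so that $e=a+b+c$. Choose $t\in R$. Then $a+b+c+t=e+t$ is nonzero and has the form $\beta+\gamma$ with $\beta=e\in B$ and $\gamma=t\in E$. Hence $(\ast\ast)$ applies, exactly as in the first case of Lemma~\ref{lem:seven}.

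So assume $B$ is Sidon. Then the $15$ pairwise sums within $B$ are distinct, and a symmetric-difference argument as in Lemma~\ref{lem:seven} shows that the $15$ four-subsets of $B$ have pairwise distinct nonzero sums. Indeed, a collision would force $|A\triangle A'|=6$, which is impossible since $|A\cup A'|\le6$ gives $|A\cap A'|\ge2$.

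It then suffices to produce a triple $\{a,b,c\}\subseteq B$ and a point $t\in E\setminus\{a,b,c\}$ with $0\neq a+b+c+t\in D(B,E)$. I will count. Take the values $a+b+c+t$ over the $20$ triples and over $t\in B\setminus\{a,b,c\}$ (which recovers the $15$ distinct four-subset sums) or $t\in R$ (another $80$ values). Then bound the number of nonzero vectors outside $D(B,E)$. We know $D(B,E)\supseteq D(B)\cup\{[\beta+r]:\beta\in B,\ r\in R\}$, where $D(B)$ has $15$ points and the second set comes from $24$ sums. The aim is to show these cover enough of the $31$ points that some value $a+b+c+t$ must land inside.

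The main obstacle is this final count. Sums $\beta+r$ may coincide with each other or with elements of $D(B)$, so a naive bound on the complement of $D(B,E)$ fails. Likewise, the values $a+b+c+t$ with $t\in R$ have many coincidences.

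I would handle this by examining what a coincidence forces. The relation $\beta+r=\beta'+r'$ makes $\{\beta,\beta',r,r'\}$ an affine plane. Similarly, $\beta+r=b+b'$ makes $\{\beta,b,b',r\}$ an affine plane, and then $r=\beta+b+b'$. In each such case I would try to exhibit a configuration to which $(\ast\ast)$ applies directly, for instance taking $t=r$ together with a triple containing $\beta,b,b'$. Only if no coincidences occur would I fall back on the count. In that case $|D(B,E)|\ge 15+24=39>31$, which is impossible, so coincidences must occur, and the first branch always applies.
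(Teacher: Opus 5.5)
You handle $|B|\in\{7,8\}$ exactly as the paper does, but the case $|B|=6$ is not actually proved. After the Sidon split, the decisive step is only announced (``I would try to exhibit a configuration to which $(\ast\ast)$ applies'').

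The one concrete instance you give also fails. From $\beta+r=b+b'$ you get $r=\beta+b+b'$. Using the triple $\{\beta,b,b'\}$ with $t=r$ therefore gives the four-term sum $\beta+b+b'+r=0$, which is precisely the case $(\ast\ast)$ excludes.

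The route can be repaired as follows.
\begin{itemize}
\item For $\beta+r=b+b'$, take $t$ to be a fourth point of $B$. Then $\beta+b+b'+t=r+t$, which is nonzero and lies in $D(B,E)$.
\item For $\beta+r=\beta'+r'$, take the triple $\{\beta,\beta',b''\}$ with $b''\in B\setminus\{\beta,\beta'\}$, and $t=r$. The sum is then $r'+b''$, which is nonzero and lies in $D(B,E)$.
\end{itemize}
With these, your pigeonhole bound $15+24>31$ closes the Sidon case, and the count of four-subset sums becomes superfluous. As written, though, this is a genuine gap.

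More to the point, you already had the idea that makes all of this unnecessary. In verifying $(\ast\ast)$ you observed that a pair of distinct points of $E$ containing at least one point of $B$ contributes its direction to $D(B,E)$. Apply that globally:
\begin{itemize}
\item Pick any single $r\in R$ and set $S=B\cup\{r\}$.
\item Only one point of $S$ lies outside $B$, so every pair of distinct points of $S$ has one member in $B$ and the other in $S\subseteq E$. Hence $D(S)\subseteq D(B,E)$.
\item Since $|S|=7$, Lemma~\ref{lem:seven} gives the plane directly.
\end{itemize}
This is exactly the paper's proof: it reduces the case $|B|=6$ to Lemma~\ref{lem:seven} in one line, with no case analysis or counting.
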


\begin{proof}
If $|B|\ge7$, then choose a $7$-subset $S\subseteq B$, so $D(S)\subseteq D(B,E)$.  %
If $|B|=6$ then $|E|=10$, and we choose $\gamma\in E\setminus B$ and put $S=B\cup\{\gamma\}$.  Each sum of two distinct elements of $S$ has at least one summand in $B$ and the other in $E$, so again $D(S)\subseteq D(B,E)$.  In either case Lemma~\ref{lem:seven} applies.
\end{proof}

\begin{proof}[Proof of Theorem~\ref{thm:dir16}]
The $\binom{16}2=120$ pairs of points of $T$ determine directions among the $63$ points of $\Pi$, so two distinct pairs share a direction and their four points form a $2$-flat $\alpha\subseteq T$.  Let $W\le\GF2^6$ be the $2$-dimensional subspace with $\alpha=v+W$ (equivalently, $W=\{a+b:a,b\in\alpha\}$), i.e.\ the translation subspace of $\alpha$, and let $\ell=\PGv{W}$ be its line at infinity.  Every nonzero $w\in W$ is a difference of two points of $\alpha\subseteq T$, so
\begin{equation}\label{eq:ellD}
\ell\subseteq D(T).
\end{equation}

\smallskip
Write $\overline V=\GF2^6/W\cong\GF2^4$ and $\la\mapsto\bar\la$ for the quotient map.  The cosets of $W$ are the $2$-flats parallel to $\alpha$. For $X\in\overline V$ let $\alpha_X$ denote the corresponding $2$-flat and $n_X=|T\cap\alpha_X|$, and let $O=\bar\alpha$, so that
\begin{equation}\label{eq:twelve}
n_O=4,\qquad \sum_{X\neq O}n_X=12 .
\end{equation}
We regard $\overline V$ as the affine space $\AG42$ with hyperplane at infinity $\overline\Pi=\PGv{\overline V}\cong\PG32$.  For a point $x=[\bar v]\in\overline\Pi$, the plane $\gamma_x=\PGv{W+\langle v\rangle}$ of $\Pi$ contains $\ell$, and the four points of $\gamma_x\setminus\ell$ are the $[v+w]$, $w\in W$.  Geometrically, $\overline V$ together with $\overline\Pi$ is the quotient of $\Sigma$ at $\ell$: the points of $\overline\Pi$ are the planes of $\Pi$ through $\ell$ (the plane corresponding to $x$ being $\gamma_x$), and the affine points are the planes of $\Sigma$ meeting $\Pi$ precisely in $\ell$, that is, the $2$-flats parallel to $\alpha$.  We shall call $x$ \textit{covered} if $\gamma_x\setminus\ell\subseteq D(T)$.

\smallskip
\noindent\textit{Claim 1:}  \textit{If all three points of some line of $\overline\Pi$ are covered, then $D(T)$ contains a solid.}\\ 
Indeed, such a line is $\PGv{\overline U}$ for a $2$-dimensional $\overline U\le\overline V$.  Its preimage $U\le\GF2^6$ is $4$-dimensional and contains $W$, and $\PGv U$ is the union of $\ell$ and the three sets $\gamma_x\setminus\ell$, $x\in\PGv{\overline U}$ ($3+3\cdot4=15$ points).  By~\eqref{eq:ellD} and the covered property, the solid $\PGv U$ lies in $D(T)$.

\smallskip
By Claim~1 we may assume for the remainder of the proof that \emph{no line of $\overline\Pi$ has all three of its points covered}.

\smallskip
\noindent\textit{Claim 2.  If $X\neq O$ and $n_X>0$, then $[O+X]$ is covered.}\\
To see this, fix $\la\in T\cap\alpha_X$.  As $\mu$ varies over the four points of $\alpha\subseteq T$, the differences $\la+\mu$ range over a coset of $W$ disjoint from $W$, namely over the four vectors representing the points of $\gamma_x\setminus\ell$ with $x=[O+X]$.  Hence $\gamma_x\setminus\ell\subseteq D(T)$.

\smallskip
Let $\SSS=\{[O+X] : X\neq O,\ n_X>0\}\subseteq\overline\Pi$ and $s=|\SSS|$.  Since $X\mapsto[O+X]$ is a bijection from $\overline V\setminus\{O\}$ to $\overline\Pi$, we have $s=\#\{X\neq O : n_X>0\}$, and by Claim~2 every point of $\SSS$ is
covered.

\smallskip
\noindent\textit{Claim 3.  $n_X+n_Y\le4$ for all distinct $X,Y\neq O$ with $n_X,n_Y\ge1$.}

Indeed, suppose $n_X+n_Y\ge5$.  The points $x=[O+X]$, $y=[O+Y]$ and $z=[X+Y]$ are distinct and collinear in $\overline\Pi$ (their representing vectors sum to $0$).  Let $[u]$ be any of the four points of $\gamma_z\setminus\ell$, so $\bar u=X+Y$.  Translation by $u$ maps $\alpha_X$ bijectively onto $\alpha_Y$. Since $n_X+n_Y>4=|\alpha_Y|$, the sets $(T\cap\alpha_X)+u$ and $T\cap\alpha_Y$ meet, giving $\la\in T\cap\alpha_X$ with $\la+u\in T\cap\alpha_Y$, whence $[u]\in D(T)$.  Thus $z$ is covered, while $x$ and $y$ are covered by Claim~2, so the line $\{x,y,z\}$ is fully covered, contrary to the standing assumption.

\smallskip
We record three consequences.
\begin{itemize}
\item[(a)] \textit{$\SSS$ is a cap of $\overline\Pi$} (no three points collinear): a line of $\overline\Pi$ with all three points in $\SSS$ would consist of covered points.  Caps of $\PG32$ have at most $8$ points (through any point of a cap pass $7$ lines, each containing at most one further cap point), so $s\le8$.
\item[(b)] \textit{$n_X\in\{1,2\}$ whenever $X\neq O$ and $n_X>0$.} Note first that $s\ge3$, by~\eqref{eq:twelve} and $n_X\le4$.  If some $n_X=4$, then any second class $Y$ with $n_Y\ge1$ violates Claim~3.  If some $n_X=3$, then Claim~3 forces $n_Y=1$ for every other class counted by $s$, so $12=3+(s-1)\le10$ by (a), a contradiction. \item[(c)] Let $\kappa=\#\{X : n_X=2\}$.  By~\eqref{eq:twelve} and (b), $2\kappa+(s-\kappa)=12$, so $\kappa=12-s\ge0$ and $s-\kappa=2s-12\ge0$; with (a), 
\begin{equation}\label{eq:srange}
6\le s\le 8,\qquad \kappa=12-s\in\{4,5,6\}.
\end{equation}
\end{itemize}

\smallskip
For each of the $\kappa$ classes $X$ with $n_X=2$, the set $L_X=T\cap\alpha_X$ is an affine line whose point at infinity lies on $\ell$.  Suppose two of them, $L_X=\{\la_1,\la_2\}$ and $L_Y=\{\mu_1,\mu_2\}$, had distinct points at infinity, i.e.\ $\la_1+\la_2\neq\mu_1+\mu_2$.  The four cross differences $\la_i+\mu_j$ are then pairwise distinct ($\la_1+\mu_1=\la_1+\mu_2$ would give $\mu_1=\mu_2$, and $\la_1+\mu_1=\la_2+\mu_2$ would give
$\la_1+\la_2=\mu_1+\mu_2$), and they represent points of $\gamma_z\setminus\ell$ for $z=[X+Y]$. Since $|\gamma_z\setminus\ell|=4$, the point $z$ is covered, and with Claim~2 the line $\{[O+X],[O+Y],z\}$ is fully covered, contrary to the standing assumption.  Hence all $\kappa$ lines $L_X$ have a common point at infinity $P=[p]\in\ell$.

Write $\widehat V=\GF2^6/\langle p\rangle\cong\GF2^5$ and $\la\mapsto\widehat\la$ for the quotient map, and regard $\widehat V$ as $\AG52$ with hyperplane at infinity $\widehat\Pi=\PGv{\widehat V}\cong\PG42$.  Call an affine line with point at infinity $P$ a \textit{$P$-line}; the fibres of the quotient over the points of $\AG52$ are exactly the $P$-lines, each $P$-line lies in a single class $\alpha_X$ (as $p\in W$), and each class
contains exactly two $P$-lines.  Geometrically, $\widehat V$ together with $\widehat\Pi$ is the quotient of $\Sigma$ at $P$, the affine points being the $P$-lines.  Put
\[
B=\{\widehat\la : \{\la,\la+p\}\subseteq T\},\qquad E=\{\widehat\la : \{\la,\la+p\}\cap T\neq\emptyset\},
\]
so $B\subseteq E\subseteq\AG52$, distinct $P$-lines giving distinct points of $\AG52$.

The two $P$-lines of $\alpha$ lie in $T$ and contribute two points to $B$, and each class $X$ with $n_X=2$ contributes the point of its line $L_X$. No other $P$-line lies in $T$, since a class with $n_X\le1$ contains no two points of $T$, while in a class with $n_X=2$ the second $P$-line misses $T$.  Hence $|B|=2+\kappa=14-s$.  Likewise, a class $X\neq O$ with $n_X>0$ meets $T$ in a single point ($n_X=1$) or
in the single $P$-line $L_X$ ($n_X=2$), so it contributes exactly one point to $E$, and $\alpha$ contributes two: $|E|=2+s$.  By~\eqref{eq:srange},
\[
|B|+|E|=16,\qquad 6\le|B|\le8,\qquad B\subseteq E .
\]

\smallskip
By Lemma~\ref{lem:BC} there is a plane $\tau\subseteq D(B,E)\subseteq\widehat\Pi$. Write $\tau=\PGv{\widehat U}$ with $\widehat U\le\widehat V$ of dimension $3$, and let $U\le\GF2^6$ be the preimage of $\widehat U$, of dimension $4$ and containing $p$.  We claim that the solid $\PGv U$ lies in $D(T)$.

First, $P=[p]\in\ell\subseteq D(T)$ by~\eqref{eq:ellD}.  Every other point of $\PGv U$ is $[u]$ with $\widehat u\neq0$ and $R=[\widehat u]\in\tau$, and the points of $\PGv U$ lying over $R$ are exactly the pair $[u]$, $[u+p]$.  Since $\tau\subseteq D(B,E)$, there are $\beta\in B$ and $\gamma\in E$ with $\beta\neq\gamma$ and $R=[\beta+\gamma]$.  Choose $\la$ with $\widehat\la=\beta$, so that $\{\la,\la+p\}\subseteq T$, and a point $\mu\in T$ with $\widehat\mu=\gamma$.  The vectors $\la+\mu$ and $\la+\mu+p$ are nonzero (as $\beta\neq\gamma$), differ by $p$, and reduce to $\beta+\gamma$ modulo $\langle p\rangle$. Therefore $\{[\la+\mu],[\la+\mu+p]\}$ is exactly the pair of points of $\PGv U$ over $R$, and both are differences of points of $T$.  As $R$ ranges over the seven points of $\tau$, this accounts for all $14$ points of $\PGv U\setminus\{P\}$, proving the claim and the theorem. 
\end{proof}

The extension theorem now follows exactly as in Section~\ref{sec:93}.

\begin{thm}\label{thm:43}
Let $C$ be a nondegenerate additive $(n,3,d)_{4/2}$-code.  Then $C$ is extendable if and only if $C$ admits an additive extension.
\end{thm}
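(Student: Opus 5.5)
The argument follows the proof of Theorem~\ref{thm:93} exactly, with Theorem~\ref{thm:dir16} in place of Lemma~\ref{lem:dir9}. One direction is trivial: an additive extension is in particular an extension. For the converse, suppose $C$ is extendable. Here $q=2$, $m=2$, $k=3$, so the ABS model lives in $\Sigma=\PG62$ with $\Pi=\PG52$. By Lemma~\ref{lem:transpartition}, $\AG62$ is partitioned into $q^m=4$ transversals of $\Afold$.

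Since $|\AG62|=64$, pigeonhole gives a class with at least $16$ points. Any $16$-subset $T$ of it is again a transversal, because subsets of transversals are transversals. Theorem~\ref{thm:dir16} then supplies a solid $\Lambda\subseteq D(T)$ of $\Pi$. As $D(T)\cap\Afold=\emptyset$, the solid $\Lambda$ is disjoint from $\Afold$.

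A solid of $\PG52$ is a $(km-m-1)$-flat, since $km-m-1=6-2-1=3$. Proposition~\ref{prop:addext} therefore yields an additive extension of $C$, which may be taken faithful by Remark~\ref{rem:faithfulext}.

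There is no real obstacle at this stage. All the difficulty sits in Theorem~\ref{thm:dir16}, which is already established. The only points to check are the dimension bookkeeping ($km-m-1=3$) and that pigeonhole gives at least $q^{km-m}=16$ points in some class.
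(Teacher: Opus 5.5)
Your proposal is correct and matches the paper's proof step for step: Lemma~\ref{lem:transpartition}, pigeonhole to obtain a $16$-point transversal, Theorem~\ref{thm:dir16} to obtain a solid in $D(T)$ that is therefore disjoint from $\Afold$, and Proposition~\ref{prop:addext} with $km-m-1=3$. The appeal to Remark~\ref{rem:faithfulext} is harmless but not needed, since the construction in Proposition~\ref{prop:addext} already uses a rank-$m$ coordinate.
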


\begin{proof}
One implication is trivial.  For the other, suppose $C$ is extendable. By Lemma~\ref{lem:transpartition} there is a partition of $\AG62$ into $4$ transversals of $\Afold$. One class contains at least $64/4=16$ points, so we may form a transversal $T$ of $\Afold$ with $|T|=16$.  By Theorem~\ref{thm:dir16}, $D(T)$ contains a solid $\Lambda$ of $\Pi$, and since $D(T)\cap\Afold=\emptyset$ we get $\Lambda\cap\Afold=\emptyset$.  Now Proposition~\ref{prop:addext} (with $km-m-1=3$) provides an additive extension.
\end{proof}

\begin{cor}\label{cor:43max}
A nondegenerate additive $(n,3,d)_{4/2}$-code is maximal if and only if it is additively maximal, if and only if every solid of $\PG52$ meets $\Afold$, if and only if its projective system of lines in $\PG52$ is complete.
\end{cor}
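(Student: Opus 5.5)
The corollary is a chain of equivalences, and I would prove it by assembling results already in place rather than by any new argument. The cycle I intend to close is: maximal $\Rightarrow$ additively maximal $\Rightarrow$ maximal. The first implication is trivial, since an additive extension is in particular an extension. For the converse I argue by contraposition. If $C$ is not maximal, it is extendable, so by Theorem~\ref{thm:43} it admits an additive extension, and hence it is not additively maximal.

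The remaining equivalences come from Corollary~\ref{cor:addmax} with $(q,m,k)=(2,2,3)$. Then $km-m-1=3$, so condition (ii) there reads: every solid of $\Pi=\PG52$ meets $\Afold$. Condition (iii) reads: the projective system $\G$ of $(m-1)$-flats, here lines of $\PG52$, is complete.

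One point needs care. As stated, Corollary~\ref{cor:addmax} is phrased for nondegenerate codes, and its item (iii) speaks of adjoining an $(m-1)$-flat. If $C$ is not faithful, some members of $\G$ are points rather than lines. Even so, the completeness notion concerns only the adjoined flat $\sigma_{n+1}$, and Remark~\ref{rem:faithfulext} lets us take that flat to be a line. So the phrase ``projective system of lines'' should be read as ``the system is complete with respect to adjoining lines,'' and the equivalence goes through unchanged.

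There is no real obstacle here. All the substantive work sits in Theorem~\ref{thm:dir16}, which is already established and which yields Theorem~\ref{thm:43}. The proof is therefore a short paragraph citing Theorem~\ref{thm:43}, Proposition~\ref{prop:addext} and Corollary~\ref{cor:addmax}, exactly parallel to Corollary~\ref{cor:93max}.
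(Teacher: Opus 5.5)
Your proposal is correct and follows the paper's own route. The paper gives the corollary as an immediate consequence of Theorem~\ref{thm:43} (for maximal $\iff$ additively maximal) together with Corollary~\ref{cor:addmax} specialized to $(q,m,k)=(2,2,3)$, where $km-m-1=3$, exactly parallel to Corollary~\ref{cor:93max}. Your remark about non-faithful codes is consistent with how completeness is defined in Corollary~\ref{cor:addmax}(iii), together with Remark~\ref{rem:faithfulext}.
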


\begin{rem}\label{rem:contrast}
Theorems~\ref{thm:dir16} and~\ref{thm:43} contrast sharply with Section~\ref{sec:counter}, which concerns the \emph{same} ambient geometry. There, $(q,m,k)=(2,3,2)$, so transversals again live in $\AG62$, but the fibres of a new faithful coordinate have $8$ points and the null flats are planes of $\PG52$, and we exhibit an $8$-point set of
$\AG62$ whose direction set contains no plane (the set $T$ of Proposition~\ref{prop:ext} determines exactly the directions of weights $1$, $2$, $5$, and $6$, and by Lemma~\ref{lem:block34} its complement meets every plane).  Thus in $\AG62$ every $16$-point set determines all the points of a solid, while an $8$-point set need not even determine a plane.
\end{rem}

\section{Scattered Linear Sets:\\ Counterexamples for Non-Prime $q$}
\label{sec:scatter}

In this section $q=q_0^{\,e}$ is a proper prime power ($e\geq2$), and we work with the parameters $m=2$, $k=2$, so that $\Pi=\PG{3}{q}$, null flats are lines of $\Pi$, and the fibres of a new faithful coordinate have $q^{km-m}=q^2$ points.  We show that for every square $q$, there exists an extendable additive code with no additive extension (Theorem~\ref{thm:scattercode}).

Recall (see~\cite{MR1772204,MR2684078}) that for a $\GF{q_0}$-subspace $W$ of $\GF{q}^{4}$ the associated \textit{linear set} is
\[
  L_W\;=\;\bigl\{[w] : w\in W\setminus\{0\}\bigr\}\;\subseteq\;\Pi ,
\]
of \textit{rank} $\dim_{\GF{q_0}}W$, and that $W$ (or $L_W$) is \textit{scattered} if $w\mapsto[w]$ is injective up to $\GF{q_0}$-scalars, that is, if $|L_W|=(q_0^{\,\mathrm{rank}\,W}-1)/(q_0-1)$; equivalently, if $\langle v\rangle_{\GF{q}}\cap W$ has $\GF{q_0}$-dimension at most $1$ for every $v\neq0$.  Throughout this section $\sigma:x\mapsto x^{q_0}$ denotes the $q_0$-Frobenius map of $\GF{q}$ and
\begin{equation}\label{eq:defU}
  U \;=\; \bigl\{\,(x,\;y,\;x^{\sigma},\;y^{\sigma}) :  x,y\in\GF{q}\,\bigr\}\;\subseteq\;\GF{q}^{4}.
\end{equation}

\begin{lem}\label{lem:scattered}
$U$ is a $\GF{q_0}$-subspace of $\GF{q}^4$ with $|U|=q^2$; it spans $\GF{q}^4$ over $\GF{q}$, and it is scattered. \\ In particular, since $U-U=U$, the set of directions determined by the $q^2$ affine points of $U\subseteq\AG{4}{q}$ is the linear set $D(U)=L_U$, of size $(q^{2}-1)/(q_0-1)$.
\end{lem}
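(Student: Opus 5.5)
The plan is to verify the four assertions in the order stated: subspace and size, spanning, scatteredness, and then the direction-set consequence.

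For the subspace property, consider the map $f:\GF{q}^2\to\GF{q}^4$, $(x,y)\mapsto(x,y,x^\sigma,y^\sigma)$. Since $\sigma$ is additive and fixes $\GF{q_0}$ pointwise, $f$ is $\GF{q_0}$-linear. It is injective, as its first two coordinates recover $(x,y)$. So $U=f(\GF{q}^2)$ is a $\GF{q_0}$-subspace with $|U|=q^2$, i.e.\ of $\GF{q_0}$-rank $2e$.

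For spanning over $\GF{q}$, it suffices to exhibit four $\GF{q}$-independent vectors in $U$. Fix $a\in\GF{q}$ with $a^\sigma\ne a$; this is possible since $e\ge2$. Then $(1,0,1,0)$ and $(a,0,a^\sigma,0)$ span $\langle e_1,e_3\rangle$ over $\GF{q}$, because the determinant $a^\sigma-a$ is nonzero. By symmetry, $(0,1,0,1)$ and $(0,a,0,a^\sigma)$ span $\langle e_2,e_4\rangle$.

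Scatteredness is the main step. Let $v=f(x,y)\neq0$, and suppose $f(x',y')=\la f(x,y)$ with $\la\in\GF{q}$. Comparing coordinates gives $x'=\la x$, $y'=\la y$, and also $(\la x)^\sigma=\la x^\sigma$ and $(\la y)^\sigma=\la y^\sigma$. Hence $\la^\sigma x^\sigma=\la x^\sigma$ and $\la^\sigma y^\sigma=\la y^\sigma$. At least one of $x,y$ is nonzero, so $\la^\sigma=\la$, i.e.\ $\la\in\GF{q_0}$. Thus $\langle v\rangle_{\GF{q}}\cap U=\GF{q_0}v$, which has $\GF{q_0}$-dimension $1$. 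It follows that $|L_U|=(q_0^{2e}-1)/(q_0-1)=(q^2-1)/(q_0-1)$. The only subtle point is that $x$ and $y$ are not both zero; that follows from $v\neq0$.

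For the final claim, note that since $U$ is an additive group, $\{\la-\mu:\la\ne\mu\in U\}=U\setminus\{0\}$. Therefore $D(U)=\{[w]:w\in U\setminus\{0\}\}=L_U$, and its size is the count just obtained.
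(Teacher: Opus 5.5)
Your proof is correct and follows essentially the same route as the paper: $\GF{q_0}$-linearity from the additivity of $\sigma$, spanning via the pair $(1,0,1,0)$, $(a,0,a^{\sigma},0)$ with $a^{\sigma}\neq a$, and scatteredness by comparing the first and third (or second and fourth) coordinates of $\la v$ to force $\la^{\sigma}=\la$. You spell out two points the paper leaves implicit, namely the injectivity of the parametrization and the derivation of $|L_U|$ from $\langle v\rangle_{\GF{q}}\cap U=\GF{q_0}v$, but the argument is the same.
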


\begin{proof}
Since $\sigma$ is additive and fixes $\GF{q_0}$ elementwise, $U$ is $\GF{q_0}$-linear, and clearly $|U|=q^2$.  For the spanning claim pick $\omega\in\GF{q}$ with $\omega^{\sigma}\neq\omega$ (possible as $e\geq2$). Then $(1,0,1,0)$ and $(\omega,0,\omega^{\sigma},0)$ are $\GF{q}$-independent, so the $\GF{q}$-span of $U$ contains the coordinate plane $\{(a,0,b,0)\}$, and similarly it contains $\{(0,a,0,b)\}$.  Finally, suppose $cu\in U$ where $u=(x,y,x^{\sigma},y^{\sigma})\neq0$ and $c\in\GF{q}^{*}$, say $x\neq0$.  Comparing the first and third coordinates of $cu$ gives $(cx)^{\sigma}=c\,x^{\sigma}$, whence $c^{\sigma}=c$ and $c\in\GF{q_0}$.  Thus $\langle u\rangle_{\GF q}\cap U=\GF{q_0}u$ for every $u\in U\setminus\{0\}$, and $U$ is scattered.
\end{proof}

\begin{prop}\label{prop:linefree}
If $q=q_0^{\,e}$ with $e\geq2$, then $D(U)$ contains no line of $\Pi=\PG{3}{q}$.  
\end{prop}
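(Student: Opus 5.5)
Since $D(U)=L_U$ by Lemma~\ref{lem:scattered}, it suffices to show that no line of $\Pi$ is contained in $L_U$. The plan is a counting argument using scatteredness. Suppose a line $\ell=\PGv{V}$, with $V$ a $2$-dimensional $\GF{q}$-subspace of $\GF{q}^4$, satisfies $\ell\subseteq L_U$. Every point $[v]$ of $\ell$ then has the form $[u]$ for some nonzero $u\in U\cap V$. By scatteredness, each such point accounts for exactly $q_0-1$ nonzero vectors of $U\cap V$, namely the set $\GF{q_0}^{*}u$. Hence
\[
|U\cap V|-1 = (q+1)(q_0-1).
\]
Now $U\cap V$ is a $\GF{q_0}$-subspace, so $|U\cap V|=q_0^{t}$ for some integer $t$. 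This gives $q_0^{t}=(q+1)(q_0-1)+1=q_0^{e+1}-q_0^{e}+q_0$.

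The second step is to show this equation has no solution when $e\ge2$. Dividing by $q_0$ gives $q_0^{t-1}=q_0^{e}-q_0^{e-1}+1$. The right-hand side is at least $2$, so $t\ge2$, and the left-hand side is divisible by $q_0$. But for $e\ge2$ the right-hand side is congruent to $1$ modulo $q_0$, a contradiction. (When $e=1$ the equation does have a solution, consistent with the hypothesis being needed; in any case $U$ must be of the right shape.)

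As an alternative check, one could try a dimension bound instead: $\dim_{\GF{q_0}}(U\cap V)\le e+1$, because a scattered subspace meets a $\GF{q}$-line in rank at most $e+1$ (the Blokhuis--Lavrauw bound for scattered subspaces with respect to a Desarguesian spread). Then $q_0^{e+1}<(q+1)(q_0-1)+1$ already fails only marginally, so the exact count above is the cleaner route.

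The main thing to verify carefully is the first step: that covering all $q+1$ points of $\ell$ forces exactly $(q+1)(q_0-1)$ nonzero vectors in $U\cap V$. This uses precisely that $\langle v\rangle_{\GF{q}}\cap U$ has $\GF{q_0}$-dimension exactly $1$ for each point of $\ell$: it is at least $1$ because the point lies in $L_U$, and at most $1$ by scatteredness, and these sets partition $(U\cap V)\setminus\{0\}$. After that, the arithmetic is immediate. I expect no real obstacle beyond stating this partition cleanly.
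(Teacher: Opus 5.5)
Your proposal is correct and is essentially the paper's argument. The paper counts points, setting $(q_0^{j}-1)/(q_0-1)=q+1$; you count vectors, setting $q_0^{t}-1=(q+1)(q_0-1)$. Both give $q_0^{t}=q_0(q_0^{e}-q_0^{e-1}+1)$, which is impossible because the bracket is greater than $1$ and congruent to $1$ modulo $q_0$ when $e\ge2$.

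One correction to your side remark: the Blokhuis--Lavrauw bound for a scattered subspace of $\GF{q}^2$ is rank at most $e$, not $e+1$. With that sharp bound your alternative route also works immediately, since $q_0^{e}<q_0^{e+1}-q_0^{e}+q_0$.
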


\begin{proof}
Let $\ell=\PGv{L}$ be a line of $\Pi$, where $L$ is a $2$-dimensional $\GF{q}$-subspace of $\GF{q}^4$.  If $[u]\in\ell\cap L_U$ with $u\in U\setminus\{0\}$, then $u\in L$; hence $\ell\cap L_U=L_{U\cap L}$, and if $j=\dim_{\GF{q_0}}(U\cap L)$ then $|\ell\cap L_U|=(q_0^{\,j}-1)/(q_0-1)$ by Lemma~\ref{lem:scattered}. If $\ell\subseteq L_U$ then
\[
  \frac{q_0^{\,j}-1}{q_0-1}\;=\;q+1\;=\;q_0^{\,e}+1, \qquad\text{i.e.}\qquad q_0^{\,j}\;=\;q_0\bigl(q_0^{\,e}-q_0^{\,e-1}+1\bigr).
\]
For $e\geq2$, $1<q_0^{\,e}-q_0^{\,e-1}+1 \equiv 1 \pmod q_0$ is not a power of $q_0$.  This contradiction proves the claim.
\end{proof}


\begin{rem}
For $e=1$ the displayed equation has the solution $j=2$. A $2$-dimensional $\GF{q}$-subspace determines exactly the points of one line of $\Pi$, in accordance with Lemma~\ref{lem:dir9}.  The set~\eqref{eq:defU} is the classical example of a \emph{maximum scattered} linear set (of pseudoregulus type), see~\cite{MR1772204,LMPT14,MR2684078}.
\end{rem}

\subsection*{The code, for square $q$}

For the remainder of the section let $e=2$ and write $q=t^{2}$ (so $t=q_0$), so that $U$ is scattered of rank $4$ and $|L_U|=t^{3}+t^{2}+t+1$.  Call a line of $\Pi$ \textit{external}, \textit{tangent} or \textit{secant} to $L_U$ according as it meets $L_U$ in $0$, $1$ or at least $2$ points.

\begin{lem}\label{lem:secants}
If $q=t^2$, then:
\begin{itemize}
\item[(i)] every secant meets $L_U$ in exactly $t+1$ points, and $W\mapsto\langle L_W\rangle$ is a bijection from the rank-$2$ $\GF{t}$-subspaces of $U$ onto the secants. In particular the points of $L_U$ and the secants form a projective space $\PG{3}{t}$, and there are exactly $(t^{2}+1)(t^{2}+t+1)$ secants; \item[(ii)] every point of $\Pi\setminus L_U$ lies on exactly one secant;
\item[(iii)] every point of $\Pi\setminus L_U$ lies on exactly $t^{3}(t-1)$ external lines, and the total number of external lines is
\[
  n_t \;=\; t^{4}(t-1)^{2}(t^{2}+t+1).
\]
\end{itemize}
\end{lem}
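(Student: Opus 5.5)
Throughout we use the dimension formula from the proof of Proposition~\ref{prop:linefree}: for a line $\ell=\PGv{L}$ of $\Pi$ we have $\ell\cap L_U=L_{U\cap L}$, and $|\ell\cap L_U|=(t^{j}-1)/(t-1)$ with $j=\dim_{\GF{t}}(U\cap L)$.

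\emph{Part (i).} First we bound $j$. Since $U\cap L$ is scattered inside the $2$-dimensional $\GF{q}$-space $L$, and $|L_{U\cap L}|\le q+1=t^2+1$, we need $(t^j-1)/(t-1)\le t^2+1$. This forces $j\le2$, because $j=3$ would give $t^2+t+1>t^2+1$. Hence a secant has $j=2$ and meets $L_U$ in exactly $t+1$ points.

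Conversely, let $W\le U$ have $\GF{t}$-rank $2$, spanned by $u_1,u_2$. These are $\GF{q}$-independent, since otherwise scatteredness is violated. So $L=\langle u_1,u_2\rangle_{\GF q}$ is a line, and $U\cap L\supseteq W$ has rank exactly $2$ by the bound just proved, so $U\cap L=W$. Thus $W\mapsto\langle L_W\rangle$ is a bijection between the rank-$2$ $\GF{t}$-subspaces of $U$ and the secants, with inverse $\ell\mapsto U\cap L$. Points of $L_U$ correspond to rank-$1$ subspaces of $U\cong\GF{t}^4$, and incidence is inclusion, so we get $\PG{3}{t}$. The number of secants is the number of lines of $\PG{3}{t}$, namely $(t^2+1)(t^2+t+1)$.

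\emph{Part (ii).} Count incidences of points off $L_U$ with secants. Each secant has $t^2+1-(t+1)=t^2-t$ points off $L_U$, giving $(t^2+1)(t^2+t+1)(t^2-t)$ incidences in total. The number of points off $L_U$ is
\[
|\Pi|-|L_U|=(t^2+1)(t^4+1)-(t^2+1)(t+1)=(t^2+1)(t^4-t),
\]
which equals $(t^2+1)(t^2+t+1)(t^2-t)$. So if we show every such point lies on \emph{at most} one secant, then it lies on exactly one.

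Suppose $P\notin L_U$ lies on two secants $\ell_1,\ell_2$. Then $\ell_1,\ell_2$ span a plane $\pi=\PGv{M}$, and the subspaces $W_i=U\cap L_i$ are distinct rank-$2$ subspaces of $U\cap M$. If $W_1\cap W_2\neq0$, the common point of $\ell_1$ and $\ell_2$ lies in $L_U$; but this common point is $P$, a contradiction. Hence $W_1\cap W_2=0$ and $U\cap M$ has rank $4$, so $U\subseteq M$. This contradicts the fact that $U$ spans $\GF{q}^4$ (Lemma~\ref{lem:scattered}).

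This step is the one needing care: we must make sure we have a genuine dimension argument rather than the heuristic count.

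\emph{Part (iii).} Fix $P\notin L_U$. The $|L_U|=t^3+t^2+t+1$ points of $L_U$ are distributed among the $q^2+q+1$ lines through $P$. By (ii), exactly one of these lines is a secant, and it carries $t+1$ points of $L_U$. The remaining $t^3+t^2$ points lie on tangents, one point each. So there are $t^3+t^2$ tangents through $P$, and the number of external lines through $P$ is
\[
(t^4+t^2+1)-1-(t^3+t^2)=t^4-t^3=t^3(t-1).
\]

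Points of $L_U$ lie on no external line, so every external line has all $t^2+1$ of its points off $L_U$. Double counting incidences between points off $L_U$ and external lines gives
\[
n_t=\frac{(t^2+1)(t^4-t)\,t^3(t-1)}{t^2+1}=t^4(t-1)^2(t^2+t+1).
\]
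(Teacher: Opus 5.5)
Your proof is correct and follows the paper's argument essentially step for step. The steps match: the bound $j\le 2$ from $t^2+t+1>t^2+1$; the bijection $W\mapsto\langle L_W\rangle$ with inverse $\PGv{L}\mapsto U\cap L$; the ``at most one secant'' argument, where $W_1\cap W_2=0$ forces $U=W_1+W_2$ into the $3$-dimensional span of two concurrent secants and contradicts the spanning property of $U$, combined with the incidence count; and the tangent/external count through a fixed point followed by double counting. The only differences are cosmetic: you obtain the $\GF{q}$-independence of $u_1,u_2$ directly from scatteredness, and you do the count in (ii) before the uniqueness argument.
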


\begin{proof}
(i)  For a line $\ell=\PGv{L}$ we saw that $\ell\cap L_U=L_{U\cap L}$ has $(t^{\,j}-1)/(t-1)$ points, where $j=\dim_{\GF{t}}(U\cap L)$.  Since $(t^{3}-1)/(t-1)=t^2+t+1$ exceeds the number $t^{2}+1$ of points of $\ell$, necessarily $j\leq2$, so $|\ell\cap L_U|\in\{0,1,t+1\}$.  If $[u]\neq[v]$ are points of $\ell\cap L_U$, then $W=\langle u,v\rangle_{\GF t}\leq U\cap L$ has rank $2$, the $t+1$ points of $L_W$ lie on $\ell$, and $U\cap L=W$ by $j\le2$, so $\ell\cap L_U=L_W$ and $\ell=\langle L_W\rangle$. Conversely, for any rank-$2$ subspace $W\leq U$ the $t+1$ points of $L_W$ are collinear (as $[au+bv]$ lies on the line through $[u]$ and $[v]$), so $\langle L_W\rangle$ is a secant, and distinct $W$ give distinct secants, since $\langle L_{W_1}\rangle=\langle L_{W_2}\rangle=\PGv{L}$ forces $W_1=U\cap L=W_2$.  By scatteredness the natural map $\PGv{U}\to L_U$ is a bijection, so points of $L_U$ with the secants provide $\PGv{U}\cong\PG3t$, which contains $(t^2+1)(t^2+t+1)$ lines.

(ii)  Suppose first that distinct secants $g_1,g_2$ pass through a common point $P\in\Pi\setminus L_U$, and write $g_i\cap L_U=L_{W_i}$ as in (i).  If $W_1\cap W_2\neq\{0\}$, pick  $0\ne w\in W_1\cap W_2$. Then $[w]\in g_1\cap g_2=\{P\}$, so $P\in L_U$, a contradiction.  If $W_1\cap W_2=\{0\}$, then $W_1+W_2$ has $\GF{t}$-dimension $4$, so $U=W_1+W_2$. On the other hand the concurrent lines $g_1\neq g_2$ span a plane $\PGv{H}$ with $\dim_{\GF q}H=3$, and $W_1+W_2\subseteq H$, so $U\subseteq H$, contradicting the spanning claim of Lemma~\ref{lem:scattered}.  Hence each point of $\Pi\setminus L_U$ lies on at most one secant.  Now count incidences: each secant contains $t^2+1-(t+1)=t(t-1)$ points of $\Pi\setminus L_U$, and
\[
  |\Pi\setminus L_U| =(t^4+1)(t^2+1)-(t^2+1)(t+1) =t(t^2+1)(t^3-1),
\]
so the average number of secants through a point of $\Pi\setminus L_U$ is
\[
  \frac{(t^2+1)(t^2+t+1)\cdot t(t-1)}{t(t^2+1)(t^3-1)}\;=\;1 .
\]
Combined with the argument above, every such point lies on exactly one secant.

(iii)  Fix $P\in\Pi\setminus L_U$.  Of the $q^2+q+1=t^4+t^2+1$ lines of $\Pi$ through $P$, exactly one is a secant, and it absorbs $t+1$ points of $L_U$.  Each of the remaining $|L_U|-(t+1)=t^3+t^2$ points of $L_U$ lies on precisely one line through $P$, and such a line contains no second point of $L_U$ (it would otherwise be a second secant through $P$). Hence there are exactly $t^3+t^2$ tangents through $P$, and
\[
  (t^4+t^2+1)-1-(t^3+t^2)\;=\;t^4-t^3\;=\;t^3(t-1)
\]
external lines through $P$.  Finally, every point of an external line lies in $\Pi\setminus L_U$, so counting incident pairs (point of $\Pi\setminus L_U$, external line) gives \[n_t\,(t^2+1)=t(t^2+1)(t^3-1)\cdot t^3(t-1),\] i.e.\ $n_t=t^4(t-1)(t^3-1)=t^4(t-1)^2(t^2+t+1)$.
\end{proof}

\begin{defn}\label{defn:codeCt}
Let $\ell_1,\ldots,\ell_{n_t}$ be the external lines of $L_U$, each taken once.  For each $j$ choose a surjective $\GF{q}$-linear map $\rho_j:\GF{q}^4\to\GF{q}^2\cong\GF{q^2}$ whose kernel $V_j$ satisfies $\PGv{V_j}=\ell_j$, and let
\[
  C_t \;=\;\bigl\{\,\bigl(\rho_1(\la),\ldots,\rho_{n_t}(\la)\bigr) :
  \la\in\GF{q}^{4}\,\bigr\}\;\subseteq\;\GF{q^2}^{\,n_t}.
\]
\end{defn}

\begin{thm}\label{thm:scattercode}
Let $q=t^2$ with $t\geq2$ a prime power, and put $n=n_t=t^4(t-1)^2(t^2+t+1)$ and $d=n-t^3(t-1)$.  Then $C_t$ is a nondegenerate additive $(n,2,d)_{q^2/q}$-code whose dual system consists of the external lines of $L_U$ and for which
\[
  \Afold\;=\;\Pi\setminus L_U .
\]
Any two distinct codewords of $C_t$ are at distance $d$ or $n$.\\ 
The code $C_t$ is extendable (the $q^2$ cosets of $U$ partition $\AG4q$ into transversals of $\Afold$, yielding an $(n+1,2,d+1)_{q^2}$-extension), but $C_t$ admits no additive extension.\\
In particular, for $t=2$ there is an extendable additive $(112,2,104)_{16/4}$-code with no additive extension, and for $t=3$ an extendable additive $(4212,2,4158)_{81/9}$-code with no additive extension.
\end{thm}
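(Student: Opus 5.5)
The proof is a direct application of the ABS machinery of Sections~\ref{sec:ABS} and~\ref{sec:ext}, combined with the incidence counts of Lemma~\ref{lem:secants}.

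\emph{Code structure.} By construction the coordinate maps $\rho_j$ are surjective $\GF{q}$-linear maps $\GF{q}^4\to\GF{q}^2\cong\GF{q^2}$. Hence $C_t$ is a $\GF{q}$-linear subspace of $\GF{q^2}^{n}$, faithful and nondegenerate, and its dual system is $\{\ell_1,\ldots,\ell_{n_t}\}$. The encoding map $\la\mapsto(\rho_j(\la))_j$ is injective. Indeed, a nonzero $\la$ in every kernel would make $[\la]$ lie on every external line. This is impossible, since $[\la]\in L_U$ lies on no external line, and $[\la]\notin L_U$ lies on only $t^3(t-1)<n_t$ of them. So $|C_t|=q^4=(q^2)^2$ and $k=2$.

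\emph{Minimum distance and $\Afold$.} By Lemma~\ref{lem:agree}, the fold number of a point $P\in\Pi$ is the number of external lines through $P$. Lemma~\ref{lem:secants}(iii) gives this as $t^3(t-1)$ for $P\notin L_U$, and it is $0$ for $P\in L_U$. Consequently two distinct codewords agree in exactly $t^3(t-1)$ or $0$ coordinates, i.e.\ are at distance $d=n-t^3(t-1)$ or $n$. Moreover $\Afold=\Pi\setminus L_U$.

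\emph{Extendability.} The $q^2$ cosets $\la+U$ partition $\AG{4}{q}$, and by Lemma~\ref{lem:scattered} each has $D(\la+U)=D(U)=L_U$, which is disjoint from $\Afold$. Thus each coset is a transversal of $\Afold$, and Lemma~\ref{lem:transpartition} yields an $(n+1,2,d+1)_{q^2}$-extension.

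\emph{No additive extension.} By Proposition~\ref{prop:addext} with $km-m-1=1$, an additive extension exists iff some line of $\Pi$ is disjoint from $\Afold$, i.e.\ some line is contained in $L_U=D(U)$. Proposition~\ref{prop:linefree} (with $e=2$) rules this out.

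\emph{Specializations.} For $t=2$, $n=16\cdot1\cdot7=112$ and $d=112-8=104$ over $\GF{16}$ with $q=4$. For $t=3$, $n=81\cdot4\cdot13=4212$ and $d=4212-54=4158$ over $\GF{81}$ with $q=9$.

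No step is really hard. The only point needing care is the injectivity of the encoding, which ensures $k=2$ rather than something smaller. As shown above, it follows from the same fold-number count used for the minimum distance.
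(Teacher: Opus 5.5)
Your proposal is correct and follows the paper's proof essentially step for step. Fold numbers from Lemma~\ref{lem:agree} and Lemma~\ref{lem:secants}(iii) give injectivity, the two distances and $\Afold=\Pi\setminus L_U$; the cosets of $U$ give the extension (you go through Lemma~\ref{lem:transpartition}, the paper applies Lemma~\ref{lem:partition} directly, which amounts to the same thing); and Propositions~\ref{prop:addext} and~\ref{prop:linefree} rule out an additive extension. The one point you leave implicit is that the distance $d$ is actually attained, and this follows at once from $|L_U|=t^3+t^2+t+1<|\Pi|$.
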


\begin{proof}
By Lemma~\ref{lem:agree}, the codewords indexed by $\la\neq\mu$ agree in exactly the number of null lines $\ell_j$ through the point $[\la-\mu]$. Since the $\ell_j$ are precisely the external lines, this number is $0$ if $[\la-\mu]\in L_U$ and $t^3(t-1)$ otherwise (Lemma~\ref{lem:secants}(iii)).  As $t^3(t-1)<n$, distinct messages give distinct codewords, so $|C_t|=q^4$, and the distances between distinct codewords are $n$ and $n-t^3(t-1)=d$, both attained.  Hence $C_t$ is a nondegenerate additive $(n,2,d)_{q^2/q}$-code (each $\rho_j$ being surjective), $n-d=t^3(t-1)$, and the $(n-d)$-fold points of the dual system are exactly the points of $\Pi\setminus L_U$.

By Proposition~\ref{prop:addext}, an additive extension of $C_t$ requires a line of $\Pi$ disjoint from $\Afold=\Pi\setminus L_U$, that is, a line contained in $L_U$; no such line exists, by Proposition~\ref{prop:linefree} (or directly by Lemma~\ref{lem:secants}(i)).  Hence $C_t$ admits no additive extension.

Finally, the $q^2$ cosets of $U$ partition $\GF{q}^4=\AG4q$ into $q^m$ classes of size $q^2$.  Two codewords in a common class have difference in $U\setminus\{0\}$, hence direction in $L_U$, and so agree in no coordinate, and are at distance $n\geq d+1$.  By Lemma~\ref{lem:partition}, $C_t$ extends to an $(n+1,2,d+1)_{q^2}$-code. 
\end{proof}

\begin{rem}\label{rem:subfield}
The extension just constructed appends to the codeword of $\la$ the coset $\la+U$, and the quotient map $\GF{q}^4\to\GF{q}^4/U$ is $\GF{t}$-linear onto a group of order $q^2$.  Identifying $\GF{q}^4/U$ with $\GF{q^2}$ as $\GF{t}$-vector spaces, the extended code is therefore additive over $\GF{t}$.  Since $C_t$ is $\GF{q}$-linear, it is in particular an additive $(n,2,d)_{t^4/t}$-code, and \emph{as such} it admits an additive extension, while as an $(n,2,d)_{t^4/t^2}$-code it does not.  Additive extendability is thus sensitive to the declared field of linearity, and can be lost in passing from a subfield to a larger one.
\end{rem}

\begin{rem}\label{rem:properlyadd}
The codes $C_t$ are properly additive. Indeed, if $C_t$ were monomially (or semilinearly) equivalent to a $\GF{q^2}$-linear code $L$, then $L$ would be extendable, hence linearly extendable by the theorem of Alderson and G\'acs~\cite{MR2529622}. A linear extension is in particular an additive extension, and pulling it back through the equivalence (which preserves additivity) would give an additive extension of $C_t$.  The same argument shows that the code of Section~\ref{sec:counter} is properly additive.
\end{rem}

\section{An Extendable Additive Code with No Additive Extension}
\label{sec:counter}

In this section we take $q=2$, $m=3$, $k=2$, so that additive codes are $\GF2$-linear subspaces of $\GF8^n$ with $2^6$ codewords, $\Sigma=\PG62$, $\Pi=\PG52$, null flats are planes ($2$-flats) of $\Pi$, and the fibres of a new faithful coordinate have $q^{km-m}=8$ points.  We construct a nondegenerate additive $(30,2,24)_{8/2}$-code that is extendable but admits no additive extension.

Throughout, $e_1,\ldots,e_6$ is the standard basis of $\GF2^6$, $\onevec=e_1+\cdots+e_6$ is the all-one vector, $\wt$ denotes Hamming weight, and we identify a nonzero vector $v$ with the point $[v]\in\Pi$ and a $3$-dimensional subspace $W\le\GF2^6$ with the plane $\PGv{W}\subseteq\Pi$.


We begin with two tactical lemmas.

\begin{lem}\label{lem:block34}
Every $3$-dimensional subspace of $\GF2^6$ contains a nonzero vector of weight $3$ or $4$.
\end{lem}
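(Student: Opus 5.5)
The plan is to argue by contradiction. Suppose $W\le\GF2^6$ is $3$-dimensional and every nonzero $w\in W$ has $\wt(w)\in\{1,2,5,6\}$. Split $W\setminus\{0\}$ into $A=\{w:\wt(w)\le2\}$ and $B=\{w:\wt(w)\ge5\}$. The first step is to show that $A\cup\{0\}$ is a subspace of $W$ of index at most $2$, with $B$ its nontrivial coset when $B\neq\emptyset$. This follows from three weight estimates.
\begin{itemize}
\item[(1)] If $a,a'\in A$ are distinct, then $\wt(a+a')\le4$ and $a+a'\neq0$, so $a+a'\in A$.
\item[(2)] If $b,b'\in B$ are distinct, their supports overlap in at least $4$ coordinates. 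Hence $\wt(b+b')\le 12-2\cdot4=4$, so $b+b'\in A$.
\item[(3)] If $a\in A$ and $b\in B$, then $\wt(a+b)\ge5-2=3$, so $a+b\in B$.
\end{itemize}

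\emph{Case $B=\emptyset$.} Then all seven nonzero vectors of $W$ have weight at most $2$. Use the standard count: each coordinate in the support of $W$ equals $1$ in exactly $4$ of the $8$ vectors of $W$. Hence $\sum_{w\in W}\wt(w)=4s$, where $s$ is the size of the support of $W$. This gives $4s\le 14$, so $s\le3$. But a $3$-dimensional space needs support of size at least $3$. So $s=3$ and $W$ is the full coordinate space on three coordinates. It then contains a vector of weight $3$, a contradiction.

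\emph{Case $B\neq\emptyset$.} Now $A\cup\{0\}$ is $2$-dimensional with all nonzero weights at most $2$. The same count gives $2s_A\le 6$, so its support has size $s_A\le3$. There are therefore only two shapes, up to coordinate permutation: $A=\{e_1+e_2,\,e_1+e_3,\,e_2+e_3\}$ or $A=\{e_1,\,e_2,\,e_1+e_2\}$. Fix $b\in B$. Since $\wt(b)\ge5$, $b$ has at most one zero coordinate.
\begin{itemize}
\item In the first shape, $b$ has ones in at least two of the coordinates $1,2,3$. So some $a\in A$ has support inside the support of $b$. Then $\wt(a+b)=\wt(b)-2\in\{3,4\}$, contradicting (3).
\item In the second shape, if $b$ has ones in coordinates $1$ and $2$, then $\wt(b+e_1+e_2)=\wt(b)-2\le4$. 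Otherwise $b$ has its unique zero in coordinate $1$ or $2$, so $\wt(b)=5$. Adding the unit vector at the other of those two coordinates gives weight $4$. Either way (3) is contradicted.
\end{itemize}

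The argument involves no real obstacle. The only step needing care is the closure bookkeeping in the first step, which is what reduces the problem to the two small configurations for $A$. After that, everything is a direct weight computation.
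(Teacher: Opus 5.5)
Your proof is correct, and it decomposes $W$ differently from the paper.

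\textbf{The paper's route.} The paper splits $W$ by parity. The even-weight vectors automatically form a subspace $W_e$ of index at most $2$, without using the hypothesis.
\begin{itemize}
\item[] In the case $W=W_e$, it treats the (at least six) weight-$2$ vectors as edges of a graph on the six positions. These edges must be pairwise intersecting, hence form a star, and a weight-$4$ sum then appears.
\item[] In the index-$2$ case, it analyses the four odd vectors. Each has the form $e_a$ or $\onevec+e_a$, and such vectors cannot sum to $0$.
\end{itemize}

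\textbf{Your route.} You split by magnitude, $\{1,2\}$ versus $\{5,6\}$. This costs you the three estimates (1)--(3), which use the hypothesis, to show that the low-weight vectors form a subspace of index at most $2$. The payoff is that everything afterwards reduces to one averaging identity: an $r$-dimensional subspace with support of size $s$ has total weight $2^{r-1}s$. That identity confines the low-weight subspace to at most three coordinates, and both cases then close in a few lines with no graph-theoretic case analysis. It is the same counting device the paper uses in the proof of Lemma~\ref{lem:K4}.

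\textbf{Comparison.} The paper gets its subspace structure for free from parity. Its even/odd split also mirrors the structure exploited again in Lemma~\ref{lem:K4}, where the four odd vectors form a coset of $W_e$ summing to $0$. Your version is shorter and more uniform, and it makes visible why the obstruction arises: weights at most $2$ cannot fill a subspace of dimension at least $2$ spread over more than three coordinates.
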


\begin{proof}
Suppose $W$ is a $3$-dimensional subspace all of whose nonzero vectors have weight in $\{1,2,5,6\}$.  The even-weight vectors of $W$ form a subspace $W_e$ of index at most $2$.

\smallskip
\noindent\textit{Case $W=W_e$.}  All seven nonzero vectors have weight $2$ or $6$.  At most one vector has weight $6$ (namely $\onevec$), so at least six vectors have weight $2$. Regard these as edges of a graph on the vertex set $\{1,\ldots,6\}$ of coordinate positions.  The sum of two distinct weight-$2$ vectors has weight $4$ if the edges are disjoint and weight $2$ if they share a vertex. As $W$ contains no weight-$4$ vector,
the (at least six) edges are pairwise intersecting.  A family of more than three pairwise intersecting edges is a star, so these edges pass through a common vertex $x$. But then for distinct edges $\{x,a\},\{x,b\}$ in $W$ the sum is the edge $\{a,b\}\in W$, which is disjoint from a third star edge $\{x,c\}$---so their sum has weight $4$, a contradiction.

\smallskip
\noindent\textit{Case $[W:W_e]=2$.}  Here, $W$ has four vectors of odd weight, each of weight $1$ or $5$, i.e.\ of the form $e_a$ or $\bar e_a:=\onevec+e_a$.  For $a\ne b$ the sum $e_a+\bar e_b= \onevec+e_a+e_b$ has weight $4$, therefore if vectors of both types occur in $W$, then they occur only as a complementary pair $e_a,\bar e_a$, and a third odd vector of either type is impossible.  All four odd vectors are therefore of the same type.  Being a coset of $W_e$, the four odd vectors sum to $0$. However, for distinct $a,b,c,d$, $e_a+e_b+e_c+e_d$ has weight $4$, and $\bar e_a+\bar e_b+\bar e_c+\bar e_d=e_a+e_b+e_c+e_d$, neither is $0$.  This contradiction completes the proof.
\end{proof}

Let
\[
  \W \;=\; \bigl\{\,W\le\GF2^6 : \dim W=3,\ \wt(v)\in\{3,4\}\ \text{for all } v\in W\setminus\{0\}\,\bigr\}.
\]

\begin{lem}\label{lem:K4}
Identify the six coordinate positions with the edges of the complete graph $K_4$ on vertices $\{1,2,3,4\}$, and for a vertex $u$ let $s_u\in\GF2^6$ be the characteristic vector of the star of $u$ (the three edges at $u$).  Then
\[
  W \;=\; \{0\}\cup\{s_u : u\}\cup\{s_u+s_v : u\neq v\}
\]
is a member of $\W$, every member of $\W$ arises from exactly one identification up to automorphisms of $K_4$, and
\[
  |\W| \;=\; \frac{6!}{|\mathrm{Aut}(K_4)|}\;=\;\frac{720}{24}\;=\;30 .
\]
Moreover each member of $\W$ contains exactly four weight-$3$ and three weight-$4$ vectors, and each weight-$3$ (respectively weight-$4$) vector of $\GF2^6$ lies in exactly $6$ members of $\W$.
\end{lem}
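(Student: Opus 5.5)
First I check that the given $W$ lies in $\W$. Each $s_u$ has weight $3$. For $u\neq v$ the stars of $u$ and $v$ share exactly the edge $uv$, so $s_u+s_v$ has weight $3+3-2=4$. Every edge lies in exactly two stars, so $s_1+s_2+s_3+s_4=0$. Hence $W=\langle s_1,s_2,s_3\rangle$ has dimension $3$. Its nonzero vectors are the four stars (weight $3$) and the three sums $s_u+s_v$; note $s_1+s_2=s_3+s_4$, so these give only three distinct vectors, each of weight $4$. This is also the claimed weight distribution $4+3$. Since every member of $\W$ will turn out to be of this form, the distribution holds for all of $\W$.

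Next comes the classification, which I expect to be the main obstacle. Let $W\in\W$. The even vectors form $W_e$ of index at most $2$. If $W=W_e$, all seven nonzero vectors have weight $4$. Their complements are then seven weight-$2$ vectors, and any two of these must meet in exactly one position, because $v+w$ has weight $4$ exactly when the complements share one position. As in Lemma~\ref{lem:block34}, seven pairwise intersecting $2$-sets form a star or a triangle, and neither has seven members, so this case is impossible. Therefore $W_e$ has dimension $2$, so $W$ contains three weight-$4$ vectors and four weight-$3$ vectors $a,b,c,d$. These satisfy $a+b+c+d=0$ (they are a coset of $W_e$), and $a+b$ has weight $4$, so $|a\cap b|=1$. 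Thus $a,b,c,d$ are four $3$-subsets of a $6$-set, pairwise meeting in exactly one point. Since they sum to zero, every position lies in an even number of them, and counting gives $12$ incidences over $6$ positions. A position in $4$ of the sets would force pairwise intersections $\ge1$ at that position and, by the size count, a contradiction. So each position lies in exactly two of the sets, and each pair of sets meets in exactly one position. Labelling the sets by $1,2,3,4$ and each position by the pair of sets containing it therefore gives a bijection between positions and edges of $K_4$, under which the sets become the stars $s_u$. This identification is determined by $W$ up to relabelling the four vertices, i.e.\ up to $\mathrm{Aut}(K_4)=S_4$. Conversely, bijections of positions with edges act regularly modulo $S_4$: two identifications give the same $W$ exactly when they differ by a vertex permutation. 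Hence $|\W|=720/24=30$.

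Finally, the incidence counts follow by double counting using transitivity. $S_6$ acts transitively on $\W$ and on the $20$ weight-$3$ and $15$ weight-$4$ vectors. Counting pairs (weight-$3$ vector, member containing it) gives $30\cdot4/20=6$. Counting pairs for weight-$4$ vectors gives $30\cdot3/15=6$.
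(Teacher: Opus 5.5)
Your overall route is the paper's: check the star construction, rule out $W=W_e$, read off the $K_4$ structure from the supports of the four weight-$3$ vectors, count $6!/24$ identifications, and double count using $S_6$. However, your step eliminating $W=W_e$ is justified incorrectly. For distinct weight-$4$ vectors $v,w$ with complements $\bar v=\onevec+v$ and $\bar w=\onevec+w$, you have $v+w=\bar v+\bar w$. A sum of two distinct weight-$2$ vectors has weight $2$ if their supports share a position, and weight $4$ if they are disjoint. So the requirement $\wt(v+w)=4$ forces the seven complements to be pairwise \emph{disjoint}, not pairwise meeting in one position. Your stated equivalence is backwards. The star/triangle argument you import from Lemma~\ref{lem:block34} therefore does not apply here; there it concerned weight-$2$ vectors lying in $W$ itself, whose sums had to avoid weight $4$.

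The conclusion still holds, and the correct argument is even shorter. Four pairwise disjoint $2$-subsets of a $6$-set already need $8>6$ positions, let alone seven, so $W\neq W_e$. The paper argues differently: each nonvanishing coordinate functional equals $1$ on exactly $4$ of the $8$ vectors of $W$, so $7\cdot 4=4r$ forces $r=7>6$ nonzero positions. With this repair, the rest of your proof is correct and matches the paper. That includes the parity and incidence count showing each position lies in exactly two of the supports, the resulting bijection of positions with $E(K_4)$, and the double counts $30\cdot 4/20=30\cdot 3/15=6$. You could also spell out the ``no position in all four sets'' step as the paper does: the sets $A_i\setminus\{t\}$ would be four pairwise disjoint pairs, needing $1+8>6$ positions.
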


\begin{proof}
Each star $s_u$ has weight $3$.  For $u\neq v$, the vector $s_u+s_v$ is the characteristic vector of the symmetric difference of two stars, namely the four edges meeting $\{u,v\}$ in exactly one vertex, and it has weight $4$.  For distinct $u,v,w$ with fourth vertex $x$, each edge within $\{u,v,w\}$ is counted twice in $s_u+s_v+s_w$ and each edge at $x$ once, so $s_u+s_v+s_w=s_x$.  Hence $\mathrm{span}(s_u,s_v,s_w)$ has the listed seven nonzero vectors, is $3$-dimensional, and lies in $\W$.

Conversely, let $W\in\W$.  We claim $W$ contains exactly four vectors of weight $3$.  If all seven nonzero vectors had weight $4$, then each of the six coordinate functionals, being linear on $W$, would be either zero or equal to $1$ on exactly $4$ of the $8$ vectors of $W$. Summing weights gives $7\cdot4=28=4r$ with $r$ the number of nonvanishing coordinate functionals, giving $r=7>6$.  Hence the even-weight subspace $W_e$
has index $2$, $W$ has four (odd) vectors of weight $3$ and three nonzero (even) vectors of weight $4$.

Let $A_1,A_2,A_3,A_4\subseteq\{1,\ldots,6\}$ be the supports of the four weight-$3$ vectors.  For $i\neq j$ the sum of the corresponding vectors lies in $W$ and is even, of weight $6-2|A_i\cap A_j|=4$, giving $|A_i\cap A_j|=1$.  The four odd vectors form a coset of $W_e$, so they sum to $0$, so every position lies in an even number of the $A_i$.  No position $t$ lies in all four (otherwise $A_i\cap A_j=\{t\}$ for all $i\ne j$ and the sets $A_i\setminus\{t\}$ are pairwise disjoint of size $2$, requiring $1+8>6$ positions).  Counting incidences, $\sum_i|A_i|=12$, so each of the six positions lies in exactly two of the $A_i$.  Form the graph with vertices $A_1,\ldots,A_4$ and one edge per position, joining the two sets containing it. There are $4$ vertices, $6$ edges, and any two vertices are joined by exactly $|A_i\cap A_j|=1$ edge. Thus it is $K_4$, the positions are its edges, and $A_i$ is the star of the vertex $i$.  As such, $W$ arises from an identification of the positions with $E(K_4)$, and the identification is unique up to $\mathrm{Aut}(K_4)$ as $W$ determines its weight-$3$ vectors, i.e.\ the star structure. Since distinct star-quadruples give distinct subspaces, $|\W|=6!/|\mathrm{Aut}(K_4)|=30$.

Finally, the symmetric group $S_6$ permutes the coordinate positions, preserves $\W$, and acts transitively on the weight-$3$ vectors of $\GF2^6$ as well as on the weight-$4$ vectors.  Hence the number of members of $\W$ through a fixed weight-$3$ vector is a constant $c_3$, and double counting gives $30\cdot4=20\,c_3$, so $c_3=6$. Similarly $30\cdot3=15\,c_4$ gives $c_4=6$.
\end{proof}

\subsection*{The code}

\begin{defn}\label{defn:codeC}
Write $\W=\{W_1,\ldots,W_{30}\}$.  For each $j$ choose a surjective $\GF2$-linear map $\rho_j:\GF2^6\to\GF2^3\cong\GF8$ with $\ker\rho_j=W_j$, and let
\[
  C \;=\;\bigl\{\,\bigl(\rho_1(\la),\ldots,\rho_{30}(\la)\bigr) :
  \la\in\GF2^6\,\bigr\}\;\subseteq\;\GF8^{30}.
\]
\end{defn}

\begin{prop}\label{prop:codeC}
$C$ is a nondegenerate additive $(30,2,24)_{8/2}$-code whose dual system consists of the $30$ planes $\PGv{W_j}$, and
\[
  \Afold \;=\; \bigl\{[v] : \wt(v)\in\{3,4\}\bigr\},
\]
a set of $35$ points of $\Pi=\PG52$.  Any two distinct codewords of $C$ are at distance $24$ or $30$.
\end{prop}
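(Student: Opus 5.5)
The plan mirrors the proof of Theorem~\ref{thm:scattercode}: we compute the fold number of every point of $\Pi$ by Lemma~\ref{lem:agree}, and read off the code parameters from these numbers. Nondegeneracy and faithfulness are immediate, since each $\rho_j$ is surjective with kernel $W_j$, so each null flat $\Lambda_j=\PGv{W_j}$ is a plane of $\Pi$, i.e.\ a $(km-m-1)$-flat with $km-m-1=2$. The dual system is therefore exactly $\{\PGv{W_1},\ldots,\PGv{W_{30}}\}$, each member taken once.

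Next we count folds. Two codewords $\varepsilon(\la)\ne\varepsilon(\mu)$ agree in coordinate $j$ exactly when $[\la-\mu]\in\PGv{W_j}$, so the fold number of a point $[v]$ is the number of members of $\W$ containing $v$. By the definition of $\W$, if $\wt(v)\in\{1,2,5,6\}$ then $v$ lies in no member, so $[v]$ is $0$-fold. If $\wt(v)\in\{3,4\}$, the last assertion of Lemma~\ref{lem:K4} says $v$ lies in exactly $6$ members, so $[v]$ is $6$-fold. Thus every point of $\Pi$ is $0$-fold or $6$-fold, and the $6$-fold points exist.

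Now we deduce the parameters. Since $6<30$, distinct messages never agree in all coordinates, so $\varepsilon$ is injective and $|C|=2^6=8^2$, giving $k=2$. Distances between distinct codewords are $30-0=30$ or $30-6=24$, and both values occur, so $d=24$ and $n-d=6$. Hence $\Afold$ is exactly the set of $6$-fold points, namely $\{[v]:\wt(v)\in\{3,4\}\}$, of size $\binom63+\binom64=20+15=35$.

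There is no real obstacle here. All the combinatorial work, namely that each weight-$3$ or weight-$4$ vector lies in exactly six members and that other vectors lie in none, was done in Lemma~\ref{lem:K4} and in the definition of $\W$. The only care needed is to note that the $30$ planes are distinct, which holds because $\W$ is a set, so that multiplicities in the fold count are exactly the $c_3=c_4=6$ computed there.
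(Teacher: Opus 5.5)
Your proposal is correct and follows essentially the same route as the paper's proof. Both use Lemma~\ref{lem:agree} together with the definition of $\W$ and the counts $c_3=c_4=6$ from Lemma~\ref{lem:K4} to show every point of $\Pi$ is $0$-fold or $6$-fold. From this they read off injectivity, the two distances $24$ and $30$, $d=24$, and $\Afold=\{[v]:\wt(v)\in\{3,4\}\}$. Your explicit count $20+15=35$ and your remark that the $30$ planes are distinct are harmless additions.
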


\begin{proof}
Two codewords indexed by $\la\neq\mu$ agree in exactly $\#\{j : \la-\mu\in W_j\}$ coordinates by Lemma~\ref{lem:agree}, and
\[
  \#\{j : \la-\mu\in W_j\} \;=\;
  \begin{cases}
    6, & \wt(\la-\mu)\in\{3,4\},\\
    0, & \text{otherwise},
  \end{cases}
\]
 by Lemma~\ref{lem:K4} (and by the definition of $\W$ no vector of weight outside $\{3,4\}$ lies in any $W_j$).  In particular distinct message vectors give distinct codewords, $|C|=2^6$, and the distances between distinct codewords are $30-6=24$ and $30-0=30$, both attained.  Hence $d=24$, $n-d=6$, and the $(n-d)$-fold points are exactly the points $[v]$ with $\wt(v)\in\{3,4\}$.
\end{proof}

\begin{prop}\label{prop:noadd}
$C$ admits no additive extension.
\end{prop}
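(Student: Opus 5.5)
By Proposition~\ref{prop:addext}, with $km-m-1=6-3-1=2$, the code $C$ admits an additive extension if and only if some plane of $\Pi=\PG52$ is disjoint from $\Afold$. So the plan is to show that every plane of $\Pi$ meets $\Afold$.

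First, translate planes of $\Pi$ into subspaces. A plane of $\Pi$ is exactly $\PGv{W}$ for some $3$-dimensional subspace $W\le\GF2^6$. Over $\GF2$ the points of $\PGv{W}$ correspond bijectively to the nonzero vectors of $W$.

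Second, use the description of the fat points. By Proposition~\ref{prop:codeC}, $\Afold$ consists of the points $[v]$ with $\wt(v)\in\{3,4\}$. Hence $\PGv{W}\cap\Afold=\emptyset$ if and only if every nonzero vector of $W$ has weight in $\{1,2,5,6\}$.

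Third, invoke Lemma~\ref{lem:block34}. It states that every $3$-dimensional subspace of $\GF2^6$ contains a nonzero vector of weight $3$ or $4$. So every plane of $\Pi$ meets $\Afold$. By Corollary~\ref{cor:addmax}(ii)$\Rightarrow$(i), equivalently by Proposition~\ref{prop:addext}, $C$ is additively maximal.

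None of these steps is a real obstacle: all the combinatorial work is already in Lemma~\ref{lem:block34}. The only point to check is that the relevant flats really are planes, i.e.\ $(km-m-1)$-flats with $km-m-1=2$. Remark~\ref{rem:faithfulext} means lower-rank extending coordinates need not be treated separately, since their null flats contain such planes.
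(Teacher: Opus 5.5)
Your proposal is correct and follows the paper's own proof: Proposition~\ref{prop:addext} reduces the question to finding a plane of $\Pi$ disjoint from $\Afold$, which by Proposition~\ref{prop:codeC} means a $3$-dimensional subspace of $\GF2^6$ with no vector of weight $3$ or $4$. Lemma~\ref{lem:block34} shows that no such subspace exists.
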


\begin{proof}
By Proposition~\ref{prop:addext}, an additive extension requires a plane of $\Pi$ disjoint from $\Afold$, i.e.\ a $3$-dimensional subspace of $\GF2^6$ containing no vector of weight $3$ or $4$.  No such subspace exists by Lemma~\ref{lem:block34}.
\end{proof}

\begin{prop}\label{prop:ext}
$C$ is extendable.  Explicitly, let
\[
  T \;=\; \{0,\ e_1,\ e_2,\ \ldots,\ e_6,\ \onevec\}
  \qquad\text{and}\qquad
  V \;=\;\langle 110100,\ 011010,\ 001101\rangle .
\]
Then $V\in\W$, the translates $\{T+v : v\in V\}$ partition $\AG62$ into eight transversals of $\Afold$, and the corresponding extension of $C$ is a $(31,2,25)_8$-code.
\end{prop}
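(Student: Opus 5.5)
The plan is to verify the three assertions in turn, and then apply Lemma~\ref{lem:partition} (equivalently Lemma~\ref{lem:transpartition}).

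\emph{Step 1: $V\in\W$.} The generators $110100$, $011010$, $001101$ have weight $3$. Their pairwise sums are $101110$, $010111$, $111001$, each of weight $4$. The sum of all three is $100011$, of weight $3$. So all seven nonzero vectors have weight in $\{3,4\}$. Linear independence is immediate from the staircase pattern of leading ones. Hence $V$ is $3$-dimensional and lies in $\W$.

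\emph{Step 2: $T$ is a transversal of $\Afold$.} Over $\GF2$ the directions of $T$ are the sums of two distinct elements of $T$. These sums are of the following kinds:
\begin{itemize}
\item[] $e_i$ and $\onevec$, from $0+e_i$ and $0+\onevec$, of weights $1$ and $6$;
\item[] $e_i+e_j$, of weight $2$;
\item[] $e_i+\onevec$, of weight $5$.
\end{itemize}
None has weight $3$ or $4$, so $D(T)\cap\Afold=\emptyset$ by Proposition~\ref{prop:codeC}. Each translate $T+v$ determines the same directions, since $D$ is translation invariant over $\GF2$. Hence every translate is also a transversal.

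\emph{Step 3: the translates partition $\AG62$.} We have $|T|\cdot|V|=64$. Therefore it suffices to show the translates are pairwise disjoint, which is equivalent to $(T+T)\cap V=\{0\}$. Every nonzero element of $T+T$ has weight in $\{1,2,5,6\}$ (Step 2), while every nonzero element of $V$ has weight in $\{3,4\}$. So the intersection is trivial, and $T\oplus V=\GF2^6$ as a set-theoretic direct sum.

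\emph{Step 4: conclusion.} By Lemma~\ref{lem:transpartition} (via Lemma~\ref{lem:partition}), labelling the eight classes $T+v$ by the elements of $\GF8$ produces an extension of $C$. Since $C$ is a $(30,2,24)_8$-code, this extension is a $(31,2,25)_8$-code. The minimum distance is exactly $25$ because a pair of codewords at distance $24$ exists and lies in distinct classes.

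None of these steps is a real obstacle. The only point needing care is Step 3, and there the weight dichotomy between $T+T$ and $V$ resolves it immediately.
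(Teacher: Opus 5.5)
Your proposal is correct and follows the paper's proof essentially step for step. It uses the same weight check to show $V\in\W$, the same enumeration of $D(T)$ into weights $1,2,5,6$, the same disjointness argument $(T-T)\cap V=\{0\}$ combined with the count $|T|\cdot|V|=64$, and the same appeal to Lemma~\ref{lem:transpartition} (via Lemma~\ref{lem:partition}) to obtain the $(31,2,25)_8$ extension with distance $25$ attained.
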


\begin{proof}
The differences of distinct elements of $T$ are the vectors $e_a$ (weight~$1$), $e_a+e_b$ (weight~$2$), $\onevec+e_a$ (weight~$5$) and $\onevec$ (weight~$6$). Hence $D(T)\cap\Afold=\emptyset$, and the same holds for every translate $T+v$, since $D(T+v)=D(T)$.

The listed generators of $V$ have weight $3$, their pairwise sums $101110$, $111001$, $010111$ have weight $4$, and their total sum $100011$ has weight $3$, so $V\in\W$.  (In the notation of Lemma~\ref{lem:K4}, $V$ is the member of $\W$ associated with a suitable labelling of $E(K_4)$, any member of $\W$ would serve.)  Since every nonzero vector of $V$ has weight $3$ or $4$ and every difference of points of $T$ has weight in $\{0,1,2,5,6\}$, we get $(T-T)\cap V=\{0\}$. Therefore, the map $T\times V\to\GF2^6$, $(t,v)\mapsto t+v$, is injective, and by cardinality the translates $T+v$, $v\in V$, partition $\AG62$.

Thus $\AG62$ is partitioned into $q^m=8$ transversals of $\Afold$, and $C$ is extendable by Lemma~\ref{lem:transpartition}.  Concretely, the extension appends to the codeword of $\la$ the unique $v\in V$ with $\la\in T+v$ (an alphabet of size $8$). Two words in a common fibre have difference in $(T-T)\setminus\{0\}$, hence fold number $0$ and distance $30\geq25$. Two words in distinct fibres are at distance at least $24+1=25$, and a pair at distance $24$ in $C$ (e.g.\ $\la-\mu$ of weight~$3$) lies in distinct fibres, so the distance $25$ is attained.
\end{proof}

Combining Propositions~\ref{prop:codeC}--\ref{prop:ext}:

\begin{thm}\label{thm:counter}
The code $C$ of Definition~\ref{defn:codeC} is a nondegenerate additive $(30,2,24)_{8/2}$-code that is extendable but admits no additive extension.  In particular:
\begin{itemize}
\item[(i)] the Alderson--G\'acs theorem (``extendable $\To$ linearly extendable'') does not extend to additive codes in general; \item[(ii)] an additively maximal additive code need not be maximal, and a complete projective system of $(m-1)$-flats need not correspond to a maximal code.
\end{itemize}
\end{thm}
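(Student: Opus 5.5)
The theorem is an assembly of three results already proved in this section, so the plan is to invoke them in order and then read off the two consequences. First, Proposition~\ref{prop:codeC} shows that $C$ is a nondegenerate additive $(30,2,24)_{8/2}$-code. Nondegeneracy is immediate, since each $\rho_j$ is surjective and hence of rank $3$; the code is therefore even faithful. That proposition also identifies the fat points: $\Afold=\{[v]:\wt(v)\in\{3,4\}\}$.

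Second, Proposition~\ref{prop:ext} gives extendability. The translates $T+v$, $v\in V$, partition $\AG62$ into $8=q^m$ transversals of $\Afold$. Lemma~\ref{lem:transpartition} then turns this partition into an extension, which is the explicit $(31,2,25)_8$-code.

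Third, Proposition~\ref{prop:noadd} shows that no additive extension exists. By Proposition~\ref{prop:addext}, such an extension would need a plane of $\Pi=\PG52$ (here $km-m-1=2$) disjoint from $\Afold$. Lemma~\ref{lem:block34} rules this out.

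For (i), observe that $C$ is a code that admits an extension but no additive extension. This is exactly the failure of the additive analogue of the Alderson--G\'acs statement. For (ii), Corollary~\ref{cor:addmax} says that additive maximality of $C$ is equivalent to completeness of its projective system $\G$ of $30$ lines (the $(m-1)$-flats $\PGv{W_j^\perp}$). Since $C$ is extendable, $C$ is not maximal. So $\G$ is a complete system of $(m-1)$-flats whose code is not maximal.

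I expect no real obstacle, because the substantive work sits in Lemmas~\ref{lem:block34} and~\ref{lem:K4}. The one point to check with care is the partition claim in Proposition~\ref{prop:ext}. It needs $(T-T)\cap V=\{0\}$, which follows by comparing weights, and $|T|\,|V|=64$, so that the translates cover all of $\AG62$.
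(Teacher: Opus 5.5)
Your proposal is correct and takes the same route as the paper: the theorem is obtained there simply by combining Propositions~\ref{prop:codeC}, \ref{prop:noadd} and~\ref{prop:ext}, and (i) and (ii) are then read off, via Corollary~\ref{cor:addmax} for (ii). One small slip is that, since $m=3$, the members $\PGv{W_j^{\perp}}$ of $\G$ are planes of $\PG52$ rather than lines; this does not affect the argument.
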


\begin{rem}
The code $C$ is a two-distance code with distances $24$ and $30$, and the fibres of its extension are the translates of $T=B_1(0)\cup\{\onevec\}$, the Hamming ball of radius one together with its antipode.  It would be interesting to know whether $30$ is the smallest length of an extendable, additively maximal additive code, and more generally for which parameters $(q,m,k)$ such codes exist.
\end{rem}

\section{The Prime Case}
\label{sec:prime}

For $m=2$ the construction of Section~\ref{sec:scatter} rests on the existence of a suitable scattered linear set, which exists precisely because $\GF{q}$ there has a proper subfield, while the example of Section~\ref{sec:counter} has $m=3$.  For $m=2$ over a prime field neither method is available, and we conjecture:

\begin{conj}\label{conj:m2}
Let $p$ be a prime.
\begin{itemize}
\item[(i)] Every set of $p^{2}$ points of $\AG{4}{p}$ determines all points of some line of $\PG{3}{p}$.
\item[(ii)] Consequently, every extendable additive $(n,2,d)_{p^2/p}$-code admits an additive extension.
\end{itemize}
\end{conj}

Part (i) holds for $p=2,3$ by Lemma~\ref{lem:dir9} and Proposition \ref{prop:42max} (Section~\ref{sec:93}).  A naive nonexhaustive computer search found no counterexample for $p=5$.  A natural approach to  Conjecture~\ref{conj:m2} is through the theory of directions and R\'edei type blocking sets.  For a set $T$ of $p^2$ points of $\AG4p$, $D(T)$ contains no line of $\PG{3}{p}$ if and only if the set $B=\Pi\setminus D(T)$ of undetermined directions meets every line of $\PG{3}{p}$.  Projecting $T$ from an undetermined direction $[u]\in B$ yields a set of $p^{2}$ points of $\AG{3}{p}$, a set of R\'edei size, where the structure theorem of Storme and Sziklai~\cite{MR1869411} (the directions determined by a set of $q^{2}$ points of $\AG{3}{q}$ form a union of full lines) and the prime-field direction theorems~\cite{MR1682973,MR2019280} apply; see~\cite{SzT12} for a related higher-dimensional direction problem.

\begin{rem}\label{rem:graphcase}
Conjecture~\ref{conj:m2}(i) splits into two cases that perhaps provide insight into why primality may be key.  Let $T$ be a putative counterexample and let $B=\Pi\setminus D(T)$ be its set of undetermined directions, so that $B$ meets every line of $\Pi$.

Suppose first that $B$ contains a line $\ell$.  Then no difference of points of $T$ lies in $\ell$, so the projection of $\AG{4}{p}$ along $\ell$ is injective on $T$, hence bijective onto $\AG{2}{p}$. In suitable coordinates $\GF{p}^{4}=\GF{p}^{2}\times\GF{p}^{2}$ we have $T=\{(x,f(x)):x\in\GF{p}^{2}\}$ for an arbitrary function $f:\GF{p}^{2}\to\GF{p}^{2}$, with $\ell=\PGv{0\times\GF{p}^{2}}$.  A line of $\Pi$ contained in $D(T)$ must avoid $B\supseteq\ell$, and the lines of $\Pi$ disjoint from $\ell$ are precisely the sets $\PGv{\mathrm{graph}(A)}$, where $A:\GF{p}^{2}\to\GF{p}^{2}$ is linear and $\mathrm{graph}(A)=\{(v,Av):v\in\GF{p}^{2}\}\le\GF{p}^{4}$ (a $2$-dimensional subspace meets $0\times\GF{p}^{2}$ trivially exactly when it is such a graph). Moreover $[(v,Av)]\in D(T)$ if and only if $f-A$ identifies two points of some affine line of direction $v$. This case of Conjecture~\ref{conj:m2}(i) is therefore the statement: \emph{for every $f:\GF{p}^{2}\to\GF{p}^{2}$ there is a linear map $A$ such that $f-A$ is non-injective on some line of every parallel class of $\AG{2}{p}$.}  This is a direction problem for vector-valued functions, and the prime-field slope theorems bear on it directly. For every affine line $\ell'$ of the domain and every linear functional $\la$ on the codomain, the graph of $\la\circ(f-A)\vert_{\ell'}$ is a set of $p$ points of $\AG{2}{p}$, which by R\'edei--Megyesi~\cite{MR0352060} and its refinements~\cite{MR1682973,MR2019280} is affine or determines at least $(p+3)/2$ slopes; see~\cite{MR2375464} for functions of several variables.  The subfield-linear maps occupying the exceptional middle range of the slope theorem of~\cite{MR2019280} are precisely the source of the counterexamples of Section~\ref{sec:scatter}, where $T=U$ is the graph of the $\GF{q_0}$-linear map $(x,y)\mapsto(x^{q_0},y^{q_0})$.  Over prime fields that range is empty.

If instead $B$ contains no line, then $B$ is a blocking set of $\PG{3}{p}$ with respect to lines containing no full line, so the equality case of the Bose--Burton theorem (a plane) is excluded and $|B|\geq p^{2}+p+2$. Moreover, projecting $T$ from any $[u]\in B$ yields $p^{2}$ points of $\AG{3}{p}$ determining \emph{all} of their directions, since an undetermined direction of the projected set corresponds to a line of $\Pi$ through $[u]$ contained in $B$.
\end{rem}

Primality is not the only hypothesis that removes the scattered obstruction of Section~\ref{sec:scatter}. Indeed,  raising the dimension $k$ removes it as well, for \emph{every} $q$.  We therefore close this section with the direction problem for $m=2$ and $k\geq3$, concerning sets of $q^{2k-2}$ points of $\AG{2k}{q}$ and $(2k-3)$-flats of $\PG{2k-1}{q}$. The first instance, $(q,k)=(2,3)$, is settled affirmatively by Theorem~\ref{thm:dir16}.  Indeed, for non-prime $q=q_0^{\,e}$ a transversal has $q^{2k-2}=q_0^{(2k-2)e}$ points, and $(2k-2)e$ exceeds the maximum rank $ke$ of a scattered $\GF{q_0}$-linear set in $\GF{q_0^e}^{2k}$ when $k\geq3$ (see~\cite{MR1772204}).  Thus no analogue of the counterexamples of Section~\ref{sec:scatter} can arise from linear sets once $k\geq3$, and the case $k\geq3$ rests on the same footing as the prime case, where the known obstructions are absent.

\section{Concluding Remarks and Open Problems}
\label{sec:remarks}

For $m=1$  (linear codes) extendability always implies additive (linear) extendability~\cite{MR2529622}.  For $m=2$ the same holds for $q\in\{2,3\}$ when $k=2$ (Section~\ref{sec:93}) and for $(q,k)=(2,3)$ (Section~\ref{sec:43}), while for every square $q$ it does not (Section~\ref{sec:scatter}); we conjecture that it holds for all primes (Conjecture~\ref{conj:m2}).  For $m\geq3$ it does not hold even over the prime field: $(q,m,k)=(2,3,2)$ (Section~\ref{sec:counter}).  We collect the main open questions.

\begin{prob}
Prove or disprove Conjecture~\ref{conj:m2}.  More generally, settle the case $m=2$, $k\geq3$: the first instance $(q,k)=(2,3)$ is Theorem~\ref{thm:43}, and the scattered obstruction is unavailable for $k\geq3$ (see the closing remark of Section~\ref{sec:prime}); the cases $q>2$ and $k\geq4$ remain open.
\end{prob}

\begin{prob}\label{prob:nonsquare}
Extend Theorem~\ref{thm:scattercode} to non-square, non-prime $q$.  
Further, determine the minimum length of an extendable, additively maximal additive $(n,2,d)_{q^2/q}$-code for square $q$: is $n=112$ optimal for $q=4$? (Any sub-multiset of the external lines whose set of maximal-fold points still meets every line of $\Pi$ yields a shorter example.)
\end{prob}

\begin{prob}
For which triples $(q,m,k)$ with $m\geq3$ do extendable, additively maximal additive $(n,k,d)_{q^m/q}$-codes exist?  Does the construction of Section~\ref{sec:counter} generalize to all $q$ (with $m=3$, $k=2$), or to $m\geq4$?
\end{prob}

\begin{prob}
Extendability of a code is invariant under equivalence (Remark~\ref{rem:equiv}). Whether \emph{additive} extendability is likewise invariant leads to a rigidity question.  The dual system of a nondegenerate additive code is well defined up to a collineation of $\Pi$ (Remark~\ref{rem:welldef}).  Is it moreover an invariant of the equivalence class? If two nondegenerate additive $(n,k,d)_{q^m/q}$-codes are equivalent in the broad, isometric sense of Remark~\ref{rem:equiv}, then must some collineation of $\Pi$ carry the dual system of one onto that of the other?  Since additive extendability depends only on the dual system (Proposition~\ref{prop:addext}), a positive answer would show that additive extendability, like extendability, is invariant under general code equivalence.
\end{prob}

\begin{prob}\label{prob:rigidity}
For linear codes, more is true than the Alderson--G\'acs theorem: for fixed $k$ and $n-d$, linear codes of sufficient length admit \emph{only} linear extensions (see~\cite{AB1} for the MDS case, \cite{MR2398834,MR2389969} for the AMDS case, and~\cite{AB3} in general).  Is there an additive analogue?
With $q$, $m$, $k$ and $n-d$ all fixed, must every extension of a sufficiently long extendable additive $(n,k,d)_{q^m/q}$-code be additive?  The codes of Sections~\ref{sec:scatter} and~\ref{sec:counter} bound any such length threshold from below.
\end{prob}

Finally, we situate the counterexamples within the geometry of Section~\ref{sec:ext}.  By Corollary~\ref{cor:addmax} a code is additively maximal precisely when $\Afold$ meets every $(km-m-1)$-flat of $\Pi$, while by Proposition~\ref{prop:mflat}, if $\Afold$ contained an $m$-flat, then the code would admit no extension whatsoever.  The codes of Theorems~\ref{thm:scattercode} and~\ref{thm:counter} lie strictly between the two, in that their sets $\Afold$ block every $(km-m-1)$-flat and contain no $m$-flat (as their extendability requires).  
The additive maximality of these codes thus stems from a highly selective blocking property that blocks only the additive extensions.

\bigskip
\noindent\textbf{Acknowledgements.}
 The author acknowledges the support of the Natural Sciences and Engineering Research Council of Canada (NSERC), [funding reference number 2019-04103]\\
Cette recherche a \'{e}t\'{e} financ\'{e}e par le Conseil de recherches en sciences naturelles et en g\'{e}nie du Canada (CRSNG), [num\'{e}ro de r\'{e}f\'{e}rence 2019-04103]

\bibliography{references}

@phdthesis{ALD,
  author        = {Alderson, T. L.},
  title         = {On {MDS} codes and {B}ruen-{S}ilverman codes},
  school        = {University of Western Ontario},
  year          = {2002}
}

@article{AB1,
  author        = {Alderson, T.L. and Bruen, A. A. and Silverman, R.},
  title         = {Maximum distance separable codes and arcs in projective spaces},
  journal       = {J. Combin. Theory Ser. A},
  fjournal      = {Journal of Combinatorial Theory. Series A},
  volume        = {114},
  year          = {2007},
  number        = {6},
  pages         = {1101--1117},
  issn          = {0097-3165},
  mrclass       = {94B27 (51E21)},
  mrnumber      = {MR2337238}
}

@article{AB3,
  author        = {Alderson, T. L. and Bruen, A. A.},
  title         = {Coprimitive sets and inextendable codes},
  journal       = {Des. Codes Cryptogr.},
  fjournal      = {Designs, Codes and Cryptography},
  volume        = {47},
  year          = {2008},
  number        = {1-3},
  pages         = {113--124}
}

@article {MR2529622,
    AUTHOR = {Alderson, T. L. and G{\'a}cs, Andr{\'a}s},
     TITLE = {On the maximality of linear codes},
   JOURNAL = {Des. Codes Cryptogr.},
  FJOURNAL = {Designs, Codes and Cryptography. An International Journal},
    VOLUME = {53},
      YEAR = {2009},
    NUMBER = {1},
     PAGES = {59--68},
      ISSN = {0925-1022},
   MRCLASS = {94B05},
  MRNUMBER = {MR2529622},
       DOI = {10.1007/s10623-009-9293-z},
}

@article {MR2389969,
    AUTHOR = {Alderson, T. L. and Bruen, A. A.},
     TITLE = {Maximal {AMDS} codes},
   JOURNAL = {Appl. Algebra Engrg. Comm. Comput.},
  FJOURNAL = {Applicable Algebra in Engineering, Communication and Computing},
    VOLUME = {19},
      YEAR = {2008},
    NUMBER = {2},
     PAGES = {87--98},
      ISSN = {0938-1279},
   MRCLASS = {94B27},
  MRNUMBER = {MR2389969},
}

@article {MR2398834,
    AUTHOR = {Alderson, T. L. and Bruen, A. A.},
     TITLE = {Codes from cubic curves and their extensions},
   JOURNAL = {Electron. J. Combin.},
  FJOURNAL = {Electronic Journal of Combinatorics},
    VOLUME = {15},
      YEAR = {2008},
    NUMBER = {1},
     PAGES = {Research paper 42, 9},
      ISSN = {1077-8926},
   MRCLASS = {94B27},
  MRNUMBER = {MR2398834},
}

@book{MR0465510,
  author        = {MacWilliams, F. J. and Sloane, N. J. A.},
  title         = {The theory of error-correcting codes},
  publisher     = {North-Holland Publishing Co.},
  address       = {Amsterdam},
  year          = {1977},
  mrclass       = {94A10},
  mrnumber      = {MR0465510}
}

@book{MR2079734,
  author        = {Bierbrauer, Juergen},
  title         = {Introduction to coding theory},
  series        = {Discrete Mathematics and its Applications (Boca Raton)},
  publisher     = {Chapman \& Hall/CRC, Boca Raton, FL},
  year          = {2005},
  pages         = {xxiv+390},
  isbn          = {1-58488-421-5},
  mrclass       = {94-01},
  mrnumber      = {MR2079734}
}

@book{MR2131191,
  author        = {Bruen, A. A. and Forcinito, Mario A.},
  title         = {Cryptography, information theory, and error-correction},
  series        = {Wiley-Interscience Series in Discrete Mathematics and
                  Optimization},
  publisher     = {Wiley-Interscience},
  address       = {Hoboken, NJ},
  year          = {2005},
  pages         = {xxiv+468},
  isbn          = {0-471-65317-9},
  mrclass       = {94-01},
  mrnumber      = {MR2131191}
}

@book{MR1664228,
  author        = {van Lint, J. H.},
  title         = {Introduction to coding theory},
  series        = {Graduate Texts in Mathematics},
  volume        = {86},
  edition       = {Third},
  publisher     = {Springer-Verlag},
  address       = {Berlin},
  year          = {1999},
  pages         = {xiv+227},
  isbn          = {3-540-64133-5},
  mrclass       = {94-01},
  mrnumber      = {MR1664228}
}

@book{1137784,
  author        = {Roth, Ron},
  title         = {Introduction to Coding Theory},
  year          = {2006},
  isbn          = {0521845041},
  publisher     = {Cambridge University Press},
  address       = {Cambridge}
}

@article{MR1682973,
  author        = {Blokhuis, A. and Ball, S. and Brouwer, A. E. and Storme, L.
                  and Sz{\H{o}}nyi, T.},
  title         = {On the number of slopes of the graph of a function defined on
                  a finite field},
  journal       = {J. Combin. Theory Ser. A},
  volume        = {86},
  year          = {1999},
  number        = {1},
  pages         = {187--196},
  mrclass       = {05B25 (51E21)},
  mrnumber      = {MR1682973}
}

@article{MR2019280,
  author        = {Ball, S.},
  title         = {The number of directions determined by a function over a
                  finite field},
  journal       = {J. Combin. Theory Ser. A},
  volume        = {104},
  year          = {2003},
  number        = {2},
  pages         = {341--350},
  mrclass       = {05B25 (12E20)},
  mrnumber      = {MR2019280}
}

@article {MR2375464,
    AUTHOR = {Ball, Simeon},
     TITLE = {On the graph of a function in many variables over a finite field},
   JOURNAL = {Des. Codes Cryptogr.},
  FJOURNAL = {Designs, Codes and Cryptography. An International Journal},
    VOLUME = {47},
      YEAR = {2008},
    NUMBER = {1-3},
     PAGES = {159--164},
      ISSN = {0925-1022},
   MRCLASS = {51E20 (11T06)},
  MRNUMBER = {MR2375464},
}

@article {MR1869411,
    AUTHOR = {Storme, L. and Sziklai, P.},
     TITLE = {Linear point sets and {R}\'edei type $k$-blocking sets in  {${\rm PG}(n,q)$}}}

@article{CRSS98,
  author        = {Calderbank, A. R. and Rains, E. M. and Shor, P. W. and Sloane, N. J. A.},
  title         = {Quantum error correction via codes over {GF}(4)},
  journal       = {IEEE Trans. Inform. Theory},
  fjournal      = {IEEE Transactions on Information Theory},
  volume        = {44},
  year          = {1998},
  number        = {4},
  pages         = {1369--1387},
  issn          = {0018-9448},
  mrclass       = {94B60 (81P68)},
  mrnumber      = {MR1665816},
  doi           = {10.1109/18.681315}
}

@article{AK01,
  author        = {Ashikhmin, A. and Knill, E.},
  title         = {Nonbinary quantum stabilizer codes},
  journal       = {IEEE Trans. Inform. Theory},
  fjournal      = {IEEE Transactions on Information Theory},
  volume        = {47},
  year          = {2001},
  number        = {7},
  pages         = {3065--3072},
  issn          = {0018-9448},
  mrclass       = {94B60 (81P68)},
  mrnumber      = {MR1872852},
  doi           = {10.1109/18.959288}
}

@book{MR0352060,
  author        = {R{\'e}dei, L.},
  title         = {Lacunary polynomials over finite fields},
  note          = {Translated from the German by I. F\"oldes},
  publisher     = {North-Holland Publishing Co.},
  address       = {Amsterdam},
  year          = {1973},
  pages         = {x+257},
  mrclass       = {12C05},
  mrnumber      = {MR0352060}
}

@article{BoseBurton,
  author        = {Bose, R. C. and Burton, R. C.},
  title         = {A characterization of flat spaces in a finite geometry and
                  the uniqueness of the {H}amming and the {M}ac{D}onald codes},
  journal       = {J. Combinatorial Theory},
  volume        = {1},
  year          = {1966},
  pages         = {96--104},
  mrclass       = {50.70 (94.10)},
  mrnumber      = {MR0184873}
}

@article{BGL22,
  author        = {Ball, Simeon and Gamboa, Guillermo and Lavrauw, Michel},
  title         = {On additive {MDS} codes over small fields},
  journal       = {Adv. Math. Commun.},
  fjournal      = {Advances in Mathematics of Communications},
  volume        = {17},
  year          = {2023},
  number        = {4},
  pages         = {828--844},
  doi           = {10.3934/amc.2021024}
}

@article{AdBall,
  author        = {Adriaensen, Sam and Ball, Simeon},
  title         = {On additive {MDS} codes with linear projections},
  journal       = {Finite Fields Appl.},
  fjournal      = {Finite Fields and their Applications},
  volume        = {91},
  year          = {2023},
  pages         = {102255},
  doi           = {10.1016/j.ffa.2023.102255}
}

@article{BBM97,
    AUTHOR = {Ball, S. and Blokhuis, A. and Mazzocca, F.},
     TITLE = {Maximal arcs in {D}esarguesian planes of odd order do not
              exist},
   JOURNAL = {Combinatorica},
  FJOURNAL = {Combinatorica. An International Journal on Combinatorics and
              the Theory of Computing},
    VOLUME = {17},
      YEAR = {1997},
    NUMBER = {1},
     PAGES = {31--41},
      ISSN = {0209-9683},
   MRCLASS = {51E21},
  MRNUMBER = {MR1466573},
}

@article{DDHM02,
    AUTHOR = {De Clerck, F. and Delanote, M. and Hamilton, N. and Mathon, R.},
     TITLE = {Perp-systems and partial geometries},
   JOURNAL = {Adv. Geom.},
  FJOURNAL = {Advances in Geometry},
    VOLUME = {2},
      YEAR = {2002},
    NUMBER = {1},
     PAGES = {1--12},
      ISSN = {1615-715X},
   MRCLASS = {51E14 (05B25)},
}

@misc{AldersonBall26,
    AUTHOR = {Alderson, T. L. and Ball, S.},
     TITLE = {Sets of subspaces with restricted hyperplane intersection
              numbers},
      YEAR = {2026},
      NOTE = {arXiv:2603.27689},
}

@article{MR1772204,
    AUTHOR = {Blokhuis, Aart and Lavrauw, Michel},
     TITLE = {Scattered spaces with respect to a spread in {${\rm PG}(n,q)$}},
   JOURNAL = {Geom. Dedicata},
  FJOURNAL = {Geometriae Dedicata},
    VOLUME = {81},
      YEAR = {2000},
    NUMBER = {1-3},
     PAGES = {231--243},
      ISSN = {0046-5755},
   MRCLASS = {51E20},
  MRNUMBER = {MR1772204},
}

@article{MR2684078,
    AUTHOR = {Polverino, Olga},
     TITLE = {Linear sets in finite projective spaces},
   JOURNAL = {Discrete Math.},
  FJOURNAL = {Discrete Mathematics},
    VOLUME = {310},
      YEAR = {2010},
    NUMBER = {22},
     PAGES = {3096--3107},
      ISSN = {0012-365X},
   MRCLASS = {51E20},
  MRNUMBER = {MR2684078},
}

@article{LMPT14,
    AUTHOR = {Lunardon, Guglielmo and Marino, Giuseppe and Polverino, Olga
              and Trombetti, Rocco},
     TITLE = {Maximum scattered linear sets of pseudoregulus type and the
              {S}egre variety {$\mathcal{S}_{n,n}$}},
   JOURNAL = {J. Algebraic Combin.},
  FJOURNAL = {Journal of Algebraic Combinatorics},
    VOLUME = {39},
      YEAR = {2014},
    NUMBER = {4},
     PAGES = {807--831},
}

@article{SzT12,
    AUTHOR = {Sziklai, P{\'e}ter and Tak{\'a}ts, Marcella},
     TITLE = {An extension of the direction problem},
   JOURNAL = {Discrete Math.},
  FJOURNAL = {Discrete Mathematics},
    VOLUME = {312},
      YEAR = {2012},
    NUMBER = {12-13},
     PAGES = {2083--2087},
       DOI = {10.1016/j.disc.2012.02.021},
}

@article{KKKS06,
  author        = {Ketkar, A. and Klappenecker, A. and Kumar, S. and Sarvepalli, P. K.},
  title         = {Nonbinary stabilizer codes over finite fields},
  journal       = {IEEE Trans. Inform. Theory},
  fjournal      = {IEEE Transactions on Information Theory},
  volume        = {52},
  year          = {2006},
  number        = {11},
  pages         = {4892--4914},
  issn          = {0018-9448},
  mrclass       = {94B60},
  mrnumber      = {MR2300823},
  doi           = {10.1109/TIT.2006.883612}
}

@article{BLP24,
  author        = {Ball, Simeon and Lavrauw, Michel and Popatia, Tabriz},
  title         = {Griesmer type bounds for additive codes over finite fields,
                  integral and fractional {MDS} codes},
  journal       = {Des. Codes Cryptogr.},
  fjournal      = {Designs, Codes and Cryptography},
  volume        = {93},
  year          = {2025},
  pages         = {175--196},
}

\end{document}